\newtheorem{theorem}{Theorem}[section]
\newtheorem{proposition}{Proposition}[section]
\newtheorem{lemma}{Lemma}[section]
\newtheorem{assumption}{Assumption}[section]
\newcommand{\R}{\ensuremath{\mathbf{R}}}
\newcommand{\Z}{\ensuremath{\mathbf{Z}}}
\newcommand{\Nn}{\ensuremath{\mathbf{N}}}
\newcommand{\E}{\ensuremath{\mathbb{E}}}
\newcommand{\dx}{\ensuremath{\mathrm{d}}}
\newcommand{\Var}{\ensuremath{\mathrm{Var}}}
\newcommand{\Cov}{\ensuremath{\mathrm{Cov}}}
\newcommand{\si}{\perp \! \! \! \perp}
\begin{document}

\def\spacingset#1{\renewcommand{\baselinestretch}%
	{#1}\small\normalsize} \spacingset{1}

%%%%%%%%%%%%%%%%%%%%%%%%%%%%%%%%%%%%%%%%%%%%%%%%%%%%%%%%%%%%%%%%%%%%%%%%%%%%%%

	\title{\bf High-Dimensional Granger Causality Tests with an Application to VIX and News\footnote{We thank  participants at the Financial Econometrics Conference at the TSE Toulouse, the JRC Big Data and Forecasting Conference, the Big Data and Machine Learning in Econometrics, Finance, and Statistics Conference at the University of Chicago, the Nontraditional Data, Machine Learning, and Natural Language Processing in Macroeconomics Conference at the Board of Governors, the AI Innovations Forum organized by SAS and the Kenan Institute, the 12th World Congress of the Econometric Society, the 2021 Econometric Society Winter Meetings and seminar participants at Vanderbilt University for helpful comments.}}
	\author{Andrii Babii\thanks{Department of Economics, University of North Carolina--Chapel Hill - Gardner Hall, CB 3305
			Chapel Hill, NC 27599-3305. Email: babii.andrii@gmail.com} \and Eric Ghysels\thanks{Department of Economics and Kenan-Flagler Business School, University of North Carolina--Chapel Hill. Email: eghysels@unc.edu.} \and
		Jonas Striaukas\thanks{LIDAM UC Louvain and FRS--FNRS Research Fellow. Email: jonas.striaukas@gmail.com.}}
	\maketitle

\bigskip
\begin{abstract}
	We study Granger causality testing for high-dimensional time series using regularized regressions. To perform proper inference, we rely on heteroskedasticity and autocorrelation consistent (HAC) estimation of the asymptotic variance and develop the inferential theory in the high-dimensional setting. To recognize the time series data structures we focus on the sparse-group LASSO estimator, which includes the LASSO and the group LASSO as special cases. We establish the debiased central limit theorem for low dimensional groups of regression coefficients and study the HAC estimator of the long-run variance based on the sparse-group LASSO residuals. This leads to valid time series inference for individual regression coefficients as well as groups, including Granger causality tests. The treatment relies on a new Fuk-Nagaev inequality for a class of $\tau$-mixing processes with heavier than Gaussian tails, which is of independent interest. In an empirical application, we study the Granger causal relationship between the VIX and financial news.
\end{abstract}

\noindent%
{\it Keywords:} Granger causality, high-dimensional time series, heavy tails, inference, tau-mixing, HAC estimator, sparse-group LASSO, Fuk-Nagaev inequality.
\vfill
\thispagestyle{empty}
\setcounter{page}{0}
\newpage
\spacingset{1.5} % DON'T change the spacing!

\section{Introduction} 
\label{sec:intro}

Modern time series analysis is increasingly using  high-dimensional datasets, typically available at different frequencies. Conventional time series are often supplemented with non-traditional data, such as the high-dimensional data coming from the natural language processing. For instance, \cite{bybee2019structure} extract 180 topic attention series from the over 800,000 daily \textit{Wall Street Journal} news articles during 1984-2017 that have shown by  \cite{babii2020midasml} to be a useful supplement to more traditional macroeconomic and financial datasets for nowcasting US GDP growth. 

\smallskip

In his seminal paper, Clive Granger defined causality in terms of high-dimensional time series data. His formal definition, see \cite[Definition 1]{granger1969investigating}, considered all the information accumulated in the universe up to time $t - 1$ (a process he called $U_t$) and examined predictability using $U_t$ with and without a specific series of interest $Y_t.$ It is still an open question how to implement Granger's test in a high-dimensional time series setting. It is the purpose of this paper to do this via regularized regressions using HAC-based inference. In a sense, we are trying to implement Granger's original idea of causality.\footnote{There exists an extensive literature on causal inference with machine learning methods within the \textit{static} Neyman-Rubin's potential outcomes framework; see \cite{athey2019machine} for the excellent review and further references.} 

\smallskip

Following \cite{babii2020midasml}, we focus on the structured sparsity approach based on the sparse-group LASSO (sg-LASSO) regularization for the high-dimensional time series analysis. The sg-LASSO allows capturing the group structures present in high-dimensional time series regressions where a single covariate with its lags constitutes a group. Alternatively, we can also combine covariates of similar nature in groups. An attractive feature of this estimator is that it encompasses the LASSO and the group LASSO as special cases, hence, it allows improving upon the unstructured LASSO in the high-dimensional time-series setting. At the same time, the sg-LASSO can learn the distribution of time series lags in a data-driven way solving elegantly the model selection problem that dates back to \cite{fisher1937note}.\footnote{The distributed lag literature can be traced back to \cite{fisher1925our}; see also \cite{almon1965distributed}, \cite{sims1971discrete}, and \cite{shiller1973distributed}, as well as more recent mixed frequency data sampling (MIDAS) approach in \cite{ghysels2006predicting}, \cite{ghysels2007midas}, and \cite{andreou2013should}.}
In particular, the group structure can also accommodate data sampled at different frequencies as discussed in detail by \cite{babii2020midasml}.
\smallskip

The proper inference for time-series data relies on the heteroskedasticity and autocorrelation consistent (HAC) estimation of the long-run variance; see \cite{eicker1963asymptotic}, \cite{huber1967behavior}, \cite{white1980heteroskedasticity}, \cite{gallant1987nonlinear}, \cite{newey1987simple}, \cite{andrews1991heteroskedasticity}, among others.\footnote{For stationary time series, the HAC estimation of the long-run variance is the same problem as the estimation of the value of the spectral density at zero which itself has even longer history dating back to the smoothed periodogram estimators; see \cite{daniell1946discussion}, \cite{bartlett1948smoothing}, and \cite{parzen1957consistent}.} Despite the increasing popularity of the LASSO in finance and more generally in time series empirical research, to the best of our knowledge, the validity of HAC-based inference for LASSO has not been established in the relevant literature.\footnote{See \cite{chernozhukov2019lasso} for LASSO inference with dependent data and \cite{feng2019taming} for an asset pricing application; see also \cite{belloni2014inference} and \cite{van2014asymptotically} for i.i.d.\ data; and \cite{chiang2019lasso} for exchangeable arrays.} The HAC-based inference is robust to the model misspecification and leads to the valid Granger causality tests even when the fitted regression function has only projection interpretation which is the case for the projection-based definition of the Granger causality. Developing the asymptotic theory for the linear projection model with autoregressive lags and covariates, however, is challenging because the underlying processes are typically \textit{not} $\beta$-mixing.\footnote{More generally, it is known that the linear transformations based on infinitely many lags do not preserve the $\alpha$- or $\beta$-mixing property.}

\smallskip

In this paper, we obtain the debiased central limit theorem with explicit bias correction for the sg-LASSO estimator and time series data, which to the best of our knowledge is new. Next, we establish the formal statistical properties of the HAC estimator based on the sg-LASSO residuals in the high-dimensional environment when the number of covariates can increase faster than the sample size. We show that the convergence rate of the HAC estimator can be affected by the tails and the persistence of the data, which is a new phenomenon compared to low-dimensional regressions. For the practical implementation, this implies that the optimal choice of the bandwidth parameter for the HAC estimator should scale appropriately with the number of covariates, the tails, and the persistence of the data. These results allow us to perform inference for groups of coefficients, including the (mixed-frequency) Granger causality tests.

\smallskip

Our asymptotic theory applies to the heavy-tailed time series data, which is often observed in financial and economic applications. To that end, we establish a new Fuk-Nagaev inequality, see \cite{fuk1971probability}, for \textit{$\tau$-mixing} processes with \textit{polynomial tails}. The class of $\tau$-mixing processes is flexible enough for developing the asymptotic theory for the linear projection model and, at the same time, it contains the class of $\alpha$-mixing processes as a special case.

\smallskip

The paper is organized as follows. We start with the large sample approximation to the distribution of the sg-LASSO estimator (and as a consequence of the LASSO and the group LASSO) with $\tau$-mixing data in section \ref{sec:inference}. Next, we consider the HAC estimator of the asymptotic long-run variance based on the sg-LASSO residuals and study the inference for groups of regression coefficients. In section~\ref{sec:fn_inequality}, we establish a suitable version of the Fuk-Nagaev inequality for $\tau$-mixing processes. We report on a Monte Carlo study in section \ref{sec:mc} which provides further insights about the validity of our theoretical analysis in finite sample settings typically encountered in empirical applications. Section \ref{sec:application} covers an empirical application examining the Granger causal relations between the VIX and financial news. Conclusions appear in section \ref{sec:conclusion}. Proofs and supplementary results appear in the appendix and the supplementary material.

\paragraph{Notation:} For a random variable $X\in\R$ and $q\geq 1$, let $\|X\|_q=(\E|X|^q)^{1/q}$ be its $L_q$ norm. For $p\in\Nn$, put $[p] = \{1,2,\dots,p\}$. For a vector $\Delta\in\R^p$ and a subset $J\subset [p]$, let $\Delta_J$ be a vector in $\R^p$ with the same coordinates as $\Delta$ on $J$ and zero coordinates on $J^c$. Let $\mathcal{G}=\{G_g:\; g\geq 1 \}$ be a partition of $[p]$ defining groups. For a vector of regression coefficients $\beta\in\R^p$, the sparse-group structure is described by a pair $(S_0,\mathcal{G}_0)$, where $S_0=\{j\in[p]:\;\beta_j\ne 0 \}$ is the support of $\beta$ and  $\mathcal{G}_0 = \left\{G\in\mathcal{G}:\; \beta_{G} \ne 0\right\}$ is its group support. For $b\in\R^p$ and $q\geq 1$, its $\ell_q$ norm is denoted $|b|_q = \left(\sum_{j\in [p]}|b_j|^q\right)^{1/q}$ if $q<\infty$ and $|b|_\infty = \max_{j\in[p]}|b_j|$ if $q=\infty$. For $\mathbf{u},\mathbf{v}\in\R^T$, the empirical inner product is defined as $\langle \mathbf{u},\mathbf{v}\rangle_T = \frac{1}{T}\sum_{t=1}^T u_tv_t$ with the induced empirical norm $\|.\|_T^2=\langle.,.\rangle_T=|.|_2^2/T$. For a symmetric $p\times p$ matrix $A$, let $\mathrm{vech}(A)\in\R^{p(p+1)/2}$ be its vectorization consisting of the lower triangular and the diagonal part. Let $A_G$ be a sub-matrix consisting of rows of $A$ corresponding to indices in $G\subset[p]$. If $G=\{j\}$ for some $j\in[p]$, then we simply put $A_G=A_j$. For a $p\times p$ matrix $A$, let $\|A\|_\infty = \max_{j\in[p]}|A_j|_1$ be its matrix norm. For $a,b\in\R$, we put $a\vee b = \max\{a,b\}$ and $a\wedge b = \min\{a,b\}$. Lastly, we write $a_n\lesssim b_n$ if there exists a (sufficiently large) absolute constant $C$ such that $a_n\leq C b_n$ for all $n\geq 1$ and $a_n\sim b_n$ if $a_n\lesssim b_n$ and $b_n\lesssim a_n$.

\section{HAC-based inference for sg-LASSO}\label{sec:inference}
In this section, we cover the large sample approximation to the distribution of the sg-LASSO (LASSO and group LASSO) estimator for $\tau$-mixing processes. Next, we consider the HAC estimator of the asymptotic long-run variance based on the sg-LASSO residuals and consider the Granger causality tests. In the first subsection, we cover the debiased central limit theorem. The next subsection covers the HAC estimator and the final subsection pertains to the Granger causality tests.

\subsection{Debiased central limit theorem}
Consider a generic linear projection model
\begin{equation*}
	y_{t} = \sum_{j=1}^\infty\beta_j x_{t,j} + u_t,\qquad \E[u_tx_{t,j}]=0,\quad\forall j\geq 1,\qquad t\in\Z,
\end{equation*}
where $(y_t)_{t\in\Z}$ is a real-valued stochastic process and predictors may include the intercept, some covariates, (mixed-frequency) lags of covariates up to a certain order, as well as lags of the dependent variable. For a sample of size $T$, in the vector notation, we write
\begin{equation*}
	\mathbf{y} = \mathbf{m} + \mathbf{u},
\end{equation*}
where $\mathbf{y}=(y_1,\dots,y_T)^\top$, $\mathbf{m} = (m_1,\dots,m_T)^\top$ with $m_t = \sum_{j=1}^\infty\beta_j x_{t,j}$, and $\mathbf{u}=(u_1,\dots,u_T)^\top$. We approximate $m_t$ with $x_t^\top\beta = \sum_{j=1}^p\beta_jx_{t,j}$ and put $\mathbf{X}\beta$, where $\mathbf{X}$ is $T\times p$ design matrix and $\beta\in\R^p$ is the unknown projection parameter. This approximation can be constructed from lagged values of $y_t$, some covariates, as well as lagged values of covariates measured at a higher frequency, in which case, we obtain the autoregressive distributed lag mixed frequency data sampling model (ARDL-MIDAS) described as
\begin{equation*}
	\phi(L)y_t = \sum_{k=1}^K\psi(L^{1/m};\beta_k)x_{t,k} + u_t,
\end{equation*}
where $\phi(L) = I - \rho_1L - \rho_2L^2 - \dots - \rho_JL^J$ is a low frequency lag polynomial and the MIDAS part $\psi(L^{1/m};\beta_k)x_{t,k} = \frac{1}{m}\sum_{j=1}^m\beta_{k,j}x_{t-(j-1)/m,k}$ is a high-frequency lag polynomial; see \cite{andreou2013should} and \cite{babii2020midasml}. Note that when $m$ = 1 we have all data sampled at the same frequency and recover the standard autoregressive distributed lag (ARDL) model. The ARDL-MIDAS regression has a group structure where a single group is defined as all lags of $x_{t,k}$ or all lags of $y_t$ and following \cite{babii2020midasml}, we focus on the sparse-group LASSO (sg-LASSO) regularized estimator.\footnote{The sg-LASSO estimator allows selecting groups and important group members at the same time.} The leading example here is the MIDAS regression involving the projection of future low frequency series onto its own lags and lags of high frequency data aggregated via some dictionary, e.g., the set of Legendre polynomials. The setup also covers what is sometimes called the reverse MIDAS, see \cite{foroni2018using} and mixed frequency VAR, see \cite{ghysels2016macroeconomics}, involving the projection of high frequency data onto its own (high frequency) lags and low frequency data. Such regressions, which appear in the empirical application of the paper, simply amount to a different group structure.

\smallskip

The sg-LASSO, denoted $\hat\beta$, solves the regularized least-squares problem
\begin{equation}\label{eq:sgl}
	\min_{b\in\R^p}\|\mathbf{y} - \mathbf{X}b\|_T^2 + 2\lambda\Omega(b)
\end{equation}
with the regularization functional
\begin{equation*}
	\Omega(b) = \alpha|b|_1 + (1-\alpha)\|b\|_{2,1},
\end{equation*}
where $|b|_1 = \sum_{j=1}^p|b_j|$ is the $\ell_1$ norm corresponding to the LASSO penalty, $\|b\|_{2,1}=\sum_{G\in\mathcal{G}}|b_{G}|_2$ is the group LASSO penalty, and the group structure $\mathcal{G}$ is a partition of $[p]=\{1,2,\dots,p\}$ specified by the econometrician.

We measure the persistence of the series with $\tau$-mixing coefficients. For a $\sigma$-algebra $\mathcal{M}$ and a random vector $\xi\in\R^l$, put
\begin{equation*}
	\tau(\mathcal{M},\xi) = \left\|\sup_{f\in\mathrm{Lip}_1}|\E(f(\xi)|\mathcal{M}) - \E(f(\xi))|\right\|_1,
\end{equation*}
where $\mathrm{Lip}_1=\{f:\R^l\to\R:\; |f(x)-f(y)|\leq |x-y|_1 \}$ is a set of $1$-Lipschitz functions. Let $(\xi_t)_{t\in\Z}$ be a stochastic process and let $\mathcal{M}_t=\sigma(\xi_t,\xi_{t-1},\dots)$ be its natural filtration. The $\tau$-mixing coefficient is defined as
\begin{equation*}
	\tau_k = \sup_{j\geq 1}\frac{1}{j}\sup_{t+k\leq t_1<\dots<t_j}\tau(\mathcal{M}_t,(\xi_{t_1},\dots,\xi_{t_j})),\qquad k\geq 0,
\end{equation*}
where the supremum is taken over $t$ and $(t_1,\dots,t_j)$. The process is called $\tau$-mixing if $\tau_k\downarrow0$ as $k\uparrow\infty$; see Lemma~\ref{lemma:covariances} for the comparison of this coefficient to the mixingale and the $\alpha$-mixing coefficients. The following assumptions impose tail and moment conditions on the series of interest.
\begin{assumption}[Data]\label{as:data}
	The processes $(u_tx_{t})_{t\in\Z}$ and $(x_{t}x_{t}^\top)_{t\in\Z}$ are stationary for every $p\geq 1$ and such that (i) $\|u_t\|_q<\infty$ and $\max_{j\in[p]}\|x_{t,j}\|_r = O(1)$ for some $q>2r/(r-2)$ and $r>4$; (ii) for every $j,l\in[p]$, the $\tau$-mixing coefficients of $(u_tx_{t,j})_{t\in\Z}$ and $(x_{t,j}x_{t,l})$ are $\tau_k \leq ck^{-a}$ and $\tilde\tau_k\leq ck^{-b}$ for all $k\geq 0$ and some universal constants $c>0$, $a>(\varsigma-1)/(\varsigma-2)$, $b>(r-2)/(r-4)$, and $\varsigma=qr/(q+r)$.
\end{assumption}

\noindent Assumption~\ref{as:data} can be relaxed to non-stationary data with stable variances of partial sums at the cost of heavier notation. It allows for heavy-tailed and persistent data. For instance, it requires that either both covariates and the error process have at least $4+\epsilon$ finite moments, or that the error process has at least $2+\epsilon$ finite moments, whenever covariates are sufficiently integrable. It is also known that the $\tau$-mixing coefficients decline exponentially fast for geometrically ergodic Markov chains, including the stationary AR(1) process, so condition (ii) allows for relatively persistent data. Next, we require that the covariance matrix of covariates is invertible.
\begin{assumption}[Covariance]\label{as:covariance}
	There exists a universal constant $\gamma>0$ such that the smallest eigenvalue of $\Sigma=\E[x_tx_t^\top]$ is bounded away from zero by $\gamma$.
\end{assumption}
\noindent Assumption~\ref{as:covariance} ensures that the precision matrix $\Theta = \Sigma^{-1}$ exists and rules out perfect multicollinearity. It also requires that the smallest eigenvalue of $\Sigma$ is bounded away from zero by $\gamma$ independently of the dimension $p$ which is the case, e.g., for the spiked identity and the Toeplitz covariance structures. Strictly, speaking this condition can be relaxed to $\gamma\downarrow 0$ as $p\uparrow\infty$ at the cost of slower convergence rates and more involved conditions on rates, in which case $\gamma$ can be interpreted as a measure of ill-posedness; see \cite{carrasco2007linear}. The next assumption describes the rate of the regularization parameter, which is governed by the Fuk-Nagaev inequality; see Theorem~\ref{thm:tails_polynomial} and Eq.~\ref{eq:concentration}.

\begin{assumption}[Regularization]\label{as:tuning}
	For some $\delta\in(0,1)$
	\begin{equation*}
	\lambda \sim\left(\frac{p}{\delta T^{\kappa-1}}\right)^{1/\kappa}\vee\sqrt{\frac{\log(8p/\delta)}{T}},
	\end{equation*}
	where $\kappa = ((a+1)\varsigma-1)/(a+\varsigma-1)$, where $a,\varsigma$ are as in Assumption~\ref{as:data}.
\end{assumption}

\noindent Lastly, we impose the following condition on the misspecification error, the number of covariates $p$, the sparsity constant $s_\alpha$, and the sample size $T$. 
\begin{assumption}\label{as:rates}
	(i) $\|\mathbf{m} - \mathbf{X}\beta\|_T^2 = O_P\left(s_\alpha\lambda^2\right)$; (ii) $s_\alpha^\mu p^2T^{1-\mu}\to0$ and $p^2\exp(-cT/s_\alpha^2) \to 0$ as $T\to\infty$, where $s_\alpha$ is the effective sparsity of $\beta$ (defined below) and $\mu = ((b+1)r-2)/(r+2(b-1))$.
\end{assumption}
\noindent The effective sparsity constant $\sqrt{s_\alpha} = \alpha\sqrt{|S_0|} + (1-\alpha)\sqrt{|\mathcal{G}_0|}$ is a linear combination of the sparsity $|S_0|$ (number of non-zero coefficients) and the group sparsity $|\mathcal{G}_0|$ (number of active groups). It reflects the finite sample advantages of imposing the sparse-group structure as $|\mathcal{G}_0|$ can be significantly smaller than $|S_0|$ that appears in the theory of the standard LASSO estimator. Throughout the paper we assume that the groups have fixed size, which is well-justified in time-series applications of interest.

The four assumptions listed above are needed for the prediction and estimation consistency of the sg-LASSO estimator; see Theorem~\ref{thm:rates} in the supplementary material. Next, let $v_{t,j}$ be the regression error in $j^{th}$ nodewise LASSO regression; see the following subsection for more details. Put also $s=s_\alpha\vee S$, $S=\max_{j\in G}S_j$, where $S_j$ is the number of non-zero coefficients in the $j^{th}$ row of $\Theta$. The following assumption describes an additional set of sufficient conditions for the debiased central limit theorem.

\begin{assumption}\label{as:clt}
	(i) $\sup_x\E[u_t^2|x_t=x]=O(1)$; (ii) $\|\Theta_G\|_\infty = O(1)$ for some $G\subset[p]$ of fixed size; (iii) the long run variance of $(u_t^2)_{t\in\Z}$ and $(v_{t,j}^2)_{t\in\Z}$ exists for every $j\in G$; (iv) $s^2\log^2 p/T\to 0$ and $p/\sqrt{T^{\kappa-2}\log^{\kappa} p}\to 0$; (v) $\|\mathbf{m}-\mathbf{X}\beta\|_T = o_P(T^{-1/2})$; (vi) for every $j,l\in[p]$ and $k\geq 0$, the $\tau$-mixing coefficients of $(u_tu_{t+k}x_{t,j}x_{t+k,l})_{t\in\Z}$ are $\check\tau_t\leq ct^{-d}$ for some universal constants $c>0$ and $d>1$.
\end{assumption}
Assumption (i) requires that the conditional variance of the regression error is bounded. Condition (ii) requires that the rows of the precision matrix have bounded $\ell_1$ norm and is a plausible assumption in the high-dimensional setting, where the inverse covariance matrix is often sparse, e.g., in the Gaussian graphical model. Condition (iii) is a mild restriction needed for the consistency of the sample variance of regression errors. The sparsity condition $s^2\log^2p/T\to 0$ is also used in \cite{van2014asymptotically} who assume that the data are sub-Gaussian, see their Corollary 2.1. On the other hand, the rate condition on the dimension $p/\sqrt{T^{\kappa - 2}\log^\kappa p}\to 0$,  is additional condition needed in our setting when regression errors are not Gaussian and may only have a certain number of finite moments. Lastly, condition (v) is trivially satisfied when the projection coefficients are sparse and, more generally, it requires that the misspecification error vanishes asymptotically sufficiently fast. Conditions of this type are standard in nonparametric literature.

Let $B = \hat\Theta\mathbf{X}^\top(\mathbf{y} - \mathbf{X}\hat\beta) / T$ denote the bias-correction for the sg-LASSO estimator, where $\hat\Theta$ is the nodewise LASSO estimator of the precision matrix $\Theta$; see the following subsection for more details. The following result describes a large-sample approximation to the distribution of the debiased sg-LASSO estimator with serially correlated non-Gaussian regression errors.
\begin{theorem}\label{thm:clt}
	Suppose that Assumptions~\ref{as:data}, \ref{as:covariance}, \ref{as:tuning}, \ref{as:rates}, and \ref{as:clt} are satisfied for the sg-LASSO regression and for each nodewise LASSO regression $j\in G$. Then 
	\begin{equation*}
		\sqrt{T}(\hat\beta_G + B_G - \beta_G) \xrightarrow{d} N(0,\Xi_G)
	\end{equation*}
	with the long-run variance\footnote{With slight abuse of notation we use $\beta_G\in\R^{|G|}$ to denote the subvector of elements of $\beta\in\R^p$ indexed by $G$.} $\Xi_G = \lim_{T\to\infty}\Var\left(\frac{1}{\sqrt{T}}\sum_{t=1}^T u_t\Theta_GX_t\right)$.
\end{theorem}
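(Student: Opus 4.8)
The plan is to follow the desparsification route and reduce the statement to a central limit theorem for a single stationary $\tau$-mixing average. Writing $\mathbf{y}=\mathbf{X}\beta+\mathbf{r}+\mathbf{u}$ with $\mathbf{r}=\mathbf{m}-\mathbf{X}\beta$ the approximation error, and $\hat\Sigma=\mathbf{X}^\top\mathbf{X}/T$, I would substitute into $\hat\beta_G+B_G-\beta_G$ and rearrange, using $\hat\Theta\mathbf{X}^\top\mathbf{u}/T=\Theta\mathbf{X}^\top\mathbf{u}/T+(\hat\Theta-\Theta)\mathbf{X}^\top\mathbf{u}/T$, to obtain the master decomposition
\begin{equation*}
\sqrt{T}(\hat\beta_G+B_G-\beta_G)=\underbrace{\tfrac{1}{\sqrt{T}}\Theta_G\mathbf{X}^\top\mathbf{u}}_{\text{(I)}}+\underbrace{\tfrac{1}{\sqrt{T}}(\hat\Theta-\Theta)_G\mathbf{X}^\top\mathbf{u}}_{\text{(II)}}+\underbrace{\sqrt{T}(I-\hat\Theta\hat\Sigma)_G(\hat\beta-\beta)}_{\text{(III)}}+\underbrace{\tfrac{1}{\sqrt{T}}\hat\Theta_G\mathbf{X}^\top\mathbf{r}}_{\text{(IV)}}.
\end{equation*}
Term (I) is the leading Gaussian term; the task is then to show that (II)--(IV) are $o_P(1)$ and that (I) satisfies the CLT.

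For the remainder terms I would combine three inputs, each available for both the sg-LASSO and the nodewise regressions: the estimation rates of Theorem~\ref{thm:rates}, which give $|\hat\beta-\beta|_1\lesssim s_\alpha\lambda$ and, row by row, $|\hat\Theta_j-\Theta_j|_1\lesssim S\lambda$; the score concentration $|\mathbf{X}^\top\mathbf{u}/T|_\infty\lesssim\lambda$ coming from the Fuk--Nagaev bound of Theorem~\ref{thm:tails_polynomial} and Eq.~\ref{eq:concentration}; and the nodewise KKT inequality $|\hat\Sigma\hat\Theta_j^\top-e_j|_\infty\lesssim\lambda$. Working one coordinate $j\in G$ at a time, H\"older's inequality bounds (II) by $\sqrt{T}\,|\hat\Theta_j-\Theta_j|_1\,|\mathbf{X}^\top\mathbf{u}/T|_\infty\lesssim\sqrt{T}\,S\lambda^2$ and (III) by $\sqrt{T}\,|\hat\Sigma\hat\Theta_j^\top-e_j|_\infty\,|\hat\beta-\beta|_1\lesssim\sqrt{T}\,s_\alpha\lambda^2$, both of which vanish under Assumption~\ref{as:clt}(iv)---the sparsity part $s^2\log^2p/T\to0$ controlling the $\sqrt{\log p/T}$ component of $\lambda$ and the part $p/\sqrt{T^{\kappa-2}\log^\kappa p}\to0$ controlling the heavy-tailed component $(p/\delta T^{\kappa-1})^{1/\kappa}$. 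For (IV) I would use Cauchy--Schwarz in the empirical norm, $|\hat\Theta_j\mathbf{X}^\top\mathbf{r}/T|\le\|\mathbf{X}\hat\Theta_j^\top\|_T\,\|\mathbf{r}\|_T$, note that $\|\mathbf{X}\hat\Theta_j^\top\|_T^2=\hat\Theta_j\hat\Sigma\hat\Theta_j^\top=O_P(1)$ by nodewise consistency and Assumption~\ref{as:covariance}, and invoke $\|\mathbf{r}\|_T=o_P(T^{-1/2})$ from Assumption~\ref{as:clt}(v), so that $\sqrt{T}\times$(IV)$=o_P(1)$.

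It then remains to prove that term (I), $\tfrac{1}{\sqrt{T}}\sum_{t=1}^T\Theta_Gx_tu_t$, converges to $N(0,\Xi_G)$. Since $G$ is of fixed size I would apply the Cram\'er--Wold device: for an arbitrary $\xi\in\R^{|G|}$ set $w_t=\xi^\top\Theta_Gx_tu_t$, a scalar, mean-zero, stationary process (mean zero because $\E[x_tu_t]=0$). The coefficient vector $\Theta_G^\top\xi$ has bounded $\ell_1$ norm by Assumption~\ref{as:clt}(ii), so $w_t$ is a Lipschitz-type functional of the $\tau$-mixing vector $(u_tx_t)$ and therefore inherits $\tau$-mixing coefficients $\lesssim\tau_k$, which are summable under Assumption~\ref{as:data}. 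The long-run variance of $w_t$ equals $\xi^\top\Xi_G\xi$ and exists by Assumption~\ref{as:clt}(iii). I would establish the scalar CLT through a big-block--small-block (Bernstein) decomposition: the small blocks are asymptotically negligible in variance, the big blocks are rendered asymptotically independent by the mixing decay, and a Lindeberg argument finishes the job.

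The main obstacle is this last step, because $w_t$ is in truth a triangular array---the weights $\Theta_G^\top\xi$ and the number of active coordinates both change with $p=p_T$---so a textbook stationary-sequence CLT does not directly apply. Two things must be controlled uniformly along the array: convergence of the truncated block variance to $\xi^\top\Xi_G\xi$, and the Lindeberg/negligibility condition despite only polynomially many finite moments and serial dependence. This is precisely where the new Fuk--Nagaev inequality of Theorem~\ref{thm:tails_polynomial} does the heavy lifting, furnishing the tail bounds on block sums needed to make the large-deviation contributions vanish and thereby verify Lindeberg's condition for the heavy-tailed, $\tau$-mixing array.
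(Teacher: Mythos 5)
Your decomposition and your treatment of the remainder terms are essentially the paper's own: the paper also starts from the sg-LASSO KKT conditions, isolates the same four terms, bounds (II) by $\|\hat\Theta_G-\Theta_G\|_\infty\,\bigl|\mathbf{X}^\top\mathbf{u}/T\bigr|_\infty$ via Proposition~\ref{prop:error_bounds} and Eq.~\ref{eq:concentration}, bounds (III) through the nodewise KKT inequality together with Theorem~\ref{thm:rates}, and kills (IV) by Cauchy--Schwarz and Assumption~\ref{as:clt}(v). The genuine gap is in your CLT for term (I).

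Your claim that $w_t=\xi^\top\Theta_Gx_tu_t$ ``inherits $\tau$-mixing coefficients $\lesssim\tau_k$'' does not follow from Assumption~\ref{as:data}, which bounds the $\tau$-mixing coefficients of each scalar coordinate process $(u_tx_{t,j})_{t\in\Z}$ separately, not of the $p$-dimensional vector process $(u_tx_t)_{t\in\Z}$. A linear combination with $\ell_1$-bounded weights is Lipschitz in the vector argument, so transferring mixing this way requires the vector process's $\tau$-coefficient, which is not assumed and can grow with $p$ (and $p\to\infty$ here). What does survive aggregation is the $L_1$-mixingale coefficient, because conditional expectation is linear: $\|\E(w_k|\mathcal{M}_0)-\E(w_k)\|_1\le|\Theta_G^\top\xi|_1\max_{j}\|\E(u_kx_{k,j}|\mathcal{M}_0)-\E(u_kx_{k,j})\|_1\lesssim\tau_k$. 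This distinction is fatal for the Bernstein-blocking plan: rendering the big blocks asymptotically independent requires the Dedecker--Prieur coupling $\|V_t-V_t^*\|_1=\tau(\mathcal{M}_t,V_t)$ (exactly as in the paper's proof of Theorem~\ref{thm:tails_polynomial}), and that coupling is available only under genuine $\tau$-mixing of the blocks of $w_t$, which you do not have; a mixingale property alone does not let you replace blocks by independent copies.

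This is precisely why the paper does not use blocking for the CLT but instead applies Neumann's triangular-array CLT (\cite{neumann2013central}, Theorem 2.1), whose hypotheses are \emph{covariance} conditions rather than coupling conditions: one must show that $\left|\Cov\left(g(\cdot)z^\top\xi_{t_h},z^\top\xi_{t_h+r}\right)\right|$ and, crucially, $\left|\Cov\left(g(\cdot),z^\top\xi_{t_h+r}\,z^\top\xi_{t_h+s}\right)\right|$ are bounded by summable sequences. The first is handled by the Dedecker--Doukhan covariance inequality through the mixingale route above; the second involves the products $u_ru_sx_{r,j}x_{s,l}$ and is exactly why Assumption~\ref{as:clt}(vi) is imposed --- an assumption your sketch never invokes, although any correct argument must control this higher-order dependence somewhere. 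A final point of emphasis: in the paper, the Fuk--Nagaev inequality enters only through the remainder terms (score concentration); the Lindeberg-type condition in the CLT is verified by the plain $\varsigma>2$ moment bound under stationarity. So the heavy lifting you assign to Theorem~\ref{thm:tails_polynomial} in the CLT step is both unavailable (its proof presumes $\tau$-mixing of the summands, the same issue as above) and unnecessary on the paper's route.
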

It is worth mentioning that since the group $G$ has fixed size and the rows of $\Theta$ have finite $\ell_1$ norm, the long-run variance $\Xi_G$ exists under the maintained assumptions; see Proposition~\ref{prop:long_run} in the Appendix for a precise statement of this result.

Theorem~\ref{thm:clt} extends \cite{van2014asymptotically} to non-Gaussian, heavy-tailed and persistent time series data and describes the long run asymptotic variance for the low-dimensional group of regression coefficients estimated with the sg-LASSO. One could also consider Gaussian approximations for groups of increasing size, which requires an appropriate high-dimensional Gaussian approximation result for $\tau$-mixing processes and is left for future research; see \cite{chernozhukov2013gaussian} for a comprehensive review of related coupling results in the i.i.d.\ case.

\subsection{Nodewise LASSO}
The bias-correction term $B$ and the expression of the long-run variance in Theorem~\ref{thm:clt} depend on the appropriate estimator of the precision matrix $\Theta = \Sigma^{-1}$. We focus on the nodewise LASSO estimator of $\Theta$, following the work of \cite{meinshausen2006high} and \cite{van2014asymptotically}. The estimator is based on the  observation that the covariance matrix of the partitioned vector $X = (X_j,X_{-j}^\top)^\top\in\R\times\R^{p-1}$  can be written as
\begin{equation*}
	\Sigma = \E[XX^\top] = \begin{pmatrix}
		\Sigma_{j,j} & \Sigma_{j,-j} \\
		\Sigma_{-j,j} & \Sigma_{-j,-j}
	\end{pmatrix},
\end{equation*}
where $\Sigma_{j,j}=\E[X_j^2]$ and all other elements similarly defined. By the partitioned inverse formula, the $1^{st}$ row of the precision matrix $\Theta=\Sigma^{-1}$ is
\begin{equation*}
	\Theta_j = \sigma_j^{-2}\begin{pmatrix}
		1 & -\gamma_j^\top
	\end{pmatrix},
\end{equation*}
where $\gamma_j = \Sigma_{-j,-j}^{-1}\Sigma_{-j,j}$ is the projection coefficient in the regression of $X_j$ on $X_{-j}$
\begin{equation}\label{eq:nodewise_regressions}
	X_j = X_{-j}^\top\gamma_j + v_j,\qquad\E[X_{-j}v_j] = 0,
\end{equation}
and $\sigma_j^2 = \Sigma_{j,j} - \Sigma_{j,-j}\gamma_j = \E[v_j^2]$ is the variance of the projection error.\footnote{To ensure that the projection coefficient is well defined and does not change with the dimension of the model $p$, we can consider the limiting linear projection model and take into account the approximation error.} This suggests estimating the $1^{st}$ row of the precision matrix as $\hat\Theta_j = \hat\sigma_j^{-2}\begin{pmatrix}
1 & -\hat\gamma_j^\top
\end{pmatrix}$ with $\hat\gamma_j$ solving
\begin{equation*}
	\min_{\gamma\in\R^{p-1}}\|\mathbf{X}_j - \mathbf{X}_{-j}\gamma\|^2_T + 2\lambda_j|\gamma|_1
\end{equation*}
and
\begin{equation*}
	\hat\sigma_j^2 = \|\mathbf{X}_j - \mathbf{X}_{-j}\hat \gamma_j\|^2_T + \lambda_j|\hat\gamma_j|,
\end{equation*}
where $\mathbf{X}_j\in\R^T$ is the column vector of observations of $x_j\in\R$ and $\mathbf{X}_{-j}$ is the $T\times (p-1)$ matrix of observations of $x_{-j}\in\R^{p-1}$. In the matrix notation, the nodewise LASSO estimator of $\Theta$ can be written then as $\hat\Theta = \hat B^{-1}\hat C$ with
\begin{equation*}
	\hat C = \begin{pmatrix}
		1 & -\hat\gamma_{1,1} & \dots & -\hat\gamma_{1,p-1} \\
		-\hat\gamma_{2,1} & 1 & \dots & -\hat\gamma_{2,p-1} \\
		\vdots & \vdots & \ddots & \vdots \\
		-\hat\gamma_{p-1,1} & \dots & -\hat\gamma_{p-1,p-1} & 1
	\end{pmatrix}\qquad \text{and}\qquad \hat B = \mathrm{diag}(\hat\sigma_1^2,\dots,\hat\sigma_p^2).
\end{equation*}

\subsection{HAC estimator}
Next, we focus on the HAC estimator based on sg-LASSO residuals, covering the LASSO and the group LASSO as special cases. For a group $G\subset [p]$ of a fixed size, the HAC estimator of the long-run variance is
\begin{equation}
	\label{eq:hacformula}
	\hat\Xi_G = \sum_{|k|<T}K\left(\frac{k}{M_T}\right)\hat{\Gamma}_k,
\end{equation}
where $\hat{\Gamma}_k = \hat\Theta_G\left(\frac{1}{T}\sum_{t=1}^{T-k}\hat u_t\hat u_{t+k} x_tx_{t+k}^\top\right)\hat\Theta_G^\top$, $\hat u_t$ is the sg-LASSO residual, and $\hat\Gamma_{-k}=\hat{\Gamma}_k^\top$. The kernel function $K:\R\to[-1,1]$ with $K(0)=1$ is puts less weight on more distant noisy covariances, while $M_T\uparrow\infty$ is a bandwidth (or lag truncation) parameter, see \cite{parzen1957consistent} and \cite{andrews1991heteroskedasticity}. Several choices of the kernel function are possible, for example, the Parzen kernel is
\begin{align*}
	&  K_{PR}(x) = \begin{cases}		1-6x^2+6|x|^3 &\text{ for } 0 \leq |x| \leq 1/2, \\ 		2(1-|x|)^3 &\text{ for } 1/2 \leq |x| \leq 1, \\ 
		0&\text{ otherwise}. 
	\end{cases}
\end{align*}
It is worth recalling that the Parzen and the Quadratic spectral kernels are high-order kernels that superior to the Bartlett kernel used in \cite{newey1987simple}; see appendix for more details on the choice of the kernel. 

Note that under stationarity, the long-run variance in Theorem~\ref{thm:clt} simplifies to
\begin{equation*}
	\Xi_G = \sum_{k\in\Z}\Gamma_k,
\end{equation*}
where $\Gamma_k = \Theta_G\E[u_tx_tu_{t+k}x_{t+k}^\top]\Theta_G^\top$ and $\Gamma_{-k}=\Gamma^\top_k$. The following result characterizes the convergence rate of the HAC estimator pertaining to a group of regression coefficients $G\subset[p]$ based on the sg-LASSO residuals.
\begin{theorem}\label{prop:hac_residuals}
	Suppose that Assumptions~\ref{as:data}, \ref{as:covariance}, \ref{as:tuning}, \ref{as:rates}, \ref{as:clt} are satisfied for the sg-LASSO regression and for each nodewise LASSO regression $j\in G$. Suppose also that Assumptions~\ref{as:hac_moments}, and \ref{as:HAC} in the Appendix are satisfied for $V_t = (u_tv_{t,j}/\sigma^2_j)_{j\in G}$, $\kappa\geq \tilde q$ and that $s^\kappa pT^{1-4\kappa/5}\to 0$ as $M_T\to\infty$ and $T\to\infty$. Then
	\begin{equation*}
		\|\hat\Xi_G - \Xi_G\| = O_P\left(M_T\left(\frac{s p^{1/\kappa}}{T^{1-1/\kappa}} \vee s\sqrt{\frac{\log p}{T}} + \frac{s^2p^{2/\kappa}}{T^{2-3/\kappa}} + \frac{s^3p^{5/\kappa}}{T^{4-5/\kappa}}\right) + M_T^{-\varsigma} + T^{-(\varsigma\wedge 1)}\right).
	\end{equation*}
\end{theorem}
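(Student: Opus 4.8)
The plan is the classical bias--variance--plug-in split adapted to estimated residuals and an estimated precision matrix. I would introduce the infeasible HAC estimator
\begin{equation*}
\bar\Xi_G = \sum_{|k|<T}K\!\left(\frac{k}{M_T}\right)\bar\Gamma_k,\qquad \bar\Gamma_k=\Theta_G\Big(\tfrac1T\textstyle\sum_{t=1}^{T-k}u_tu_{t+k}x_tx_{t+k}^\top\Big)\Theta_G^\top,
\end{equation*}
built from the true projection errors $u_t$ and the true precision rows $\Theta_G$, and decompose
\begin{equation*}
\hat\Xi_G-\Xi_G=\underbrace{(\hat\Xi_G-\bar\Xi_G)}_{\text{plug-in}}+\underbrace{(\bar\Xi_G-\E\bar\Xi_G)}_{\text{sampling}}+\underbrace{(\E\bar\Xi_G-\Xi_G)}_{\text{bias}}.
\end{equation*}
I would bound the three pieces separately and show that the plug-in piece carries the leading rate.

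For the bias, split $\E\bar\Xi_G-\Xi_G=\sum_{|k|<T}(K(k/M_T)-1)\Gamma_k-\sum_{|k|\geq T}\Gamma_k$ plus the $O(1/T)$ finite-sample centering from using $\frac1T\sum_{t=1}^{T-k}$ in place of the stationary mean. The autocovariances $\Gamma_k$ inherit polynomial decay from the $\tau$-mixing coefficients of $(u_tu_{t+k}x_{t,j}x_{t+k,l})$ in Assumption~\ref{as:clt}(vi), via the covariance inequality underlying Lemma~\ref{lemma:covariances}; combined with the high-order (Parzen/QS) kernel this bounds the weighting term by $M_T^{-\varsigma}$ and bounds the genuine truncation tail $\sum_{|k|\geq T}|\Gamma_k|$ together with the centering correction by $T^{-(\varsigma\wedge1)}$, producing the two stand-alone terms in the stated rate.

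For the sampling term, for each fixed lag $k$ the recentred average $\frac1T\sum_t(u_tu_{t+k}x_tx_{t+k}^\top-\E[\cdot])$, sandwiched by the fixed-size, $\ell_1$-bounded rows $\Theta_G$ (Assumption~\ref{as:clt}(ii)), is an average of a $\tau$-mixing array with polynomial tails; applying the Fuk--Nagaev inequality (Theorem~\ref{thm:tails_polynomial}) entrywise and taking a union bound over the finitely many entries of $\Xi_G$ and the $O(M_T)$ active lags yields a contribution of order $M_T(p^{1/\kappa}T^{-(1-1/\kappa)}\vee\sqrt{\log p/T})$ without any sparsity factor, which is dominated by the first bracketed term (the conditions on $V_t$ in Assumptions~\ref{as:hac_moments} and~\ref{as:HAC} are what make this lag-union affordable). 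The plug-in term is the crux: writing $\hat u_t=u_t+x_t^\top(\beta-\hat\beta)+(m_t-x_t^\top\beta)$ and $\hat\Theta_G=\Theta_G+(\hat\Theta_G-\Theta_G)$ and expanding $\hat\Gamma_k$ multilinearly, the zeroth-order term is $\bar\Gamma_k$; first-order terms carry a single factor $|\hat\beta-\beta|_1=O_P(s\lambda)$ or $\|\hat\Theta_G-\Theta_G\|_\infty=O_P(s\lambda)$ (from Theorem~\ref{thm:rates} and the analogous nodewise-LASSO rate) against a Fuk--Nagaev-controlled empirical average, and summing over lags gives the linear contribution $M_T s\lambda=M_T(sp^{1/\kappa}/T^{1-1/\kappa}\vee s\sqrt{\log p/T})$. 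Second- and third-order terms are products of two or three estimation errors multiplied by empirical averages of products of covariates; because a product of heavy-tailed factors has a degraded tail index, its Fuk--Nagaev rate is slower, and this is what inflates the dimension- and sample-size exponents to give exactly $M_T s^2p^{2/\kappa}/T^{2-3/\kappa}$ and $M_Ts^3p^{5/\kappa}/T^{4-5/\kappa}$.

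The main obstacle is sharp simultaneous control, across all coordinate pairs $(j,l)$ (or quadruples) and all $O(M_T)$ lags at once, of the high-dimensional empirical processes $\frac1T\sum_t u_t x_{t,j}x_{t+k,l}$ and $\frac1T\sum_t x_{t,j}x_{t,j'}x_{t+k,l}x_{t+k,l'}$ for data that are only $\tau$-mixing and heavy-tailed. In this regime the usual sub-Gaussian maximal inequalities are unavailable, so the dimension must enter polynomially as $p^{1/\kappa}$ rather than logarithmically; delivering this rests entirely on Theorem~\ref{thm:tails_polynomial}, and the delicate part is the bookkeeping that tracks how the sparsity power, the $p$-exponent, and the $T$-exponent each grow as the multilinear order increases while keeping every higher-order remainder strictly below the leading term---precisely the role of the side condition $s^\kappa pT^{1-4\kappa/5}\to0$.
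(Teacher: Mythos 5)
Your overall architecture agrees with the paper's on the two parts that matter most. The paper also introduces the infeasible estimator built from $u_t$, $v_{t,j}$, $\sigma_j^2$ (identical to your $\bar\Xi_G$, since $\Theta_j x_t = v_{t,j}/\sigma_j^2$), and the crux of its proof is exactly your plug-in expansion: write $\hat u_t\hat u_{t+k}\hat v_{t,j}\hat v_{t+k,h} - u_tu_{t+k}v_{t,j}v_{t+k,h}$ multilinearly, bound first-order terms by Cauchy--Schwarz against the sg-LASSO/nodewise prediction rates (Theorem~\ref{thm:rates}), and let the higher-order terms pick up the extra powers of $s$, $p^{1/\kappa}$ through bounds of the form $\E\max_{t\in[T]}|\xi_t|\leq T^{1/q}\|\xi_t\|_q$ on heavy-tailed products --- this is precisely where the $s^2p^{2/\kappa}/T^{2-3/\kappa}$ and $s^3p^{5/\kappa}/T^{4-5/\kappa}$ terms come from, as you anticipated (plus a separate term $S_T^b$ from replacing $\hat\sigma_j^{-2}\hat\sigma_h^{-2}$ by $\sigma_j^{-2}\sigma_h^{-2}$, which you fold into the same order). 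Your bias treatment also matches the paper's, although note that the paper does not derive the decay of $\Gamma_k$ from Assumption~\ref{as:clt}(vi); it simply assumes $\sum_k|k|^\varsigma\|\Gamma_k\|\leq D_1$ in Assumption~\ref{as:HAC}(iv).

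The genuine divergence --- and the one step of yours that would fail as written --- is the sampling term. You propose to control $\bar\Xi_G-\E\bar\Xi_G$ by applying Theorem~\ref{thm:tails_polynomial} entrywise to $\frac1T\sum_t(V_{t,j}V_{t+k,h}-\E[V_{t,j}V_{t+k,h}])$ and taking a union bound over the $O(M_T)$ lags. But the Fuk--Nagaev inequality requires the summands to have $q'>2$ moments and polynomially decaying $\tau$-mixing coefficients with exponent large enough that $B_T^2=O(T)$. The summands here are four-fold products $u_tu_{t+k}v_{t,j}v_{t+k,h}$: under Assumption~\ref{as:hac_moments}(iii) ($\E|u_0|^{2q},\E|v_{0,j}|^{2q}<\infty$ with only $q\geq 1$) these products are guaranteed at most $q/2$ moments, which can be below $2$, and Assumption~\ref{as:clt}(vi) gives the product process only a mixing exponent $d>1$, which need not exceed the threshold $(q'-1)/(q'-2)$ needed for the variance of partial sums to be $O(T)$. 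So the hypotheses of the theorem do not license your Fuk--Nagaev step. The paper avoids this entirely: Proposition~\ref{prop:HAC_true} handles bias and sampling error of the infeasible estimator together by a plain second-moment (Chebyshev-type) computation whose only input is the fourth-order covariance summability constant $D_2$ in Assumption~\ref{as:HAC}(iv) --- exactly what is assumed for $V_t$ --- and it yields the better bound $O_P(\sqrt{M_T/T}+M_T^{-\varsigma}+T^{-(\varsigma\wedge1)})$, with $\sqrt{M_T/T}$ absorbed by the leading plug-in term. The fix to your argument is therefore simple: replace the Fuk--Nagaev/union-bound step by this Chebyshev argument; heavy-tail machinery is needed only in the plug-in part, not for the infeasible HAC estimator itself.
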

\noindent The first term in the inner parentheses is of the same order as the estimation error of the maximum between the estimation errors of the sg-LASSO and the nodewise LASSO. Theorem~\ref{prop:hac_residuals} suggests that the optimal choice of the bandwidth parameter should scale appropriately with the number of covariates $p$, the sparsity constant $s$, and the dependence-tails exponent $\kappa$.\footnote{A comprehensive study of the optimal bandwidth choice based on higher-order asymptotic expansions is beyond the scope of this paper and is left for future research, see, e.g., \cite{lazarus2018har} for the recent literature review and practical recommendations in the low-dimensional case.} This contrasts sharply with the HAC theory for regressions without regularization developed in \cite{andrews1991heteroskedasticity}, see also \cite{li2018uniform}, and allows for faster convergence rates of the HAC estimator.

\subsection{High-dimensional Granger causality tests}
Consider a linear projection model
\begin{equation*}
	y_{t+h} = \sum_{j\in G}\beta_jx_{t,j} + \sum_{j\in G^c} \beta_jx_{t,j} + u_t,\qquad \E[u_tx_{t,j}] = 0,\qquad \forall j\geq 1,
\end{equation*}
where $h\geq 0$ is the horizon, $G\subset[p]$ is a group of regression coefficients of interest, $x_t=\{x_{t,j}:\; j\in G \}$ represents the series for which we wish to test the Granger causality, and $\{x_{t,j}:j\in G^c \}$ represents all the remaining information available at time $t$. For instance, $x_t$ may contain $L$ low-frequency lags of some series $(z_t)_{t\in\Z}$, in which case $x_t = (z_t,z_{t-1},z_{t-2},\dots, z_{t-L})^\top$. Alternatively, it may contain low and/or high-frequency lags of $(z_t)_{t\in\Z}$ aggregated with dictionaries, e.g., Legendre polynomials as in \cite{babii2020midasml}. In both cases the dimensionality of $x_t$ small. On the other hand, the set of controls representing all the information available at time $t$ is high-dimensional. The Granger causality test corresponds to the following hypotheses
\begin{equation*}
	H_0:\; R\beta_{G} = 0\qquad \text{against} \qquad H_1:\; R\beta_{G}\ne 0,
\end{equation*}
where $\beta_G = \{\beta_j:\; j\in G \}$.

It is worth mentioning our framework is based on the weakest notion of the Granger causality corresponding to the marginal improvement in time series projections due to the information contained in $x_t$. A stronger notion of Granger non-causality corresponds to the independence on the conditional mean of $y_t$ of $x_t$. Yet, even stronger version of Granger non-causality requires that $y_t|x_t,\sigma(x_{t,j}:j\in G^c) =_d y_t|\sigma(x_{t,j}:j\in G^c)$, so that the entire conditional distribution of $y_t$ does not depend on $x_t$; see also \cite{florens1982note}.

Let $R$ be $r\times |G|$ matrix of linear restrictions imposed on $\beta_G$. For the Granger causality test, we set $R=I_{|G|}$, but more generally, we might be interested in testing other linear restrictions implied by the economic theory. Assuming that $R$ is a full row rank matrix, consider the debiased Wald statistics
\begin{equation*}
	W_T = T\left[R(\hat\beta_{G} + B_{G} - \beta_{G})\right]^\top \left(R\hat\Xi_GR^\top\right)^+\left[R(\hat\beta_{G} + B_{G} - \beta_{G})\right],
\end{equation*}
where $A^+$ is the generalized inverse of $A$. It follows from Theorems~\ref{thm:clt} and \ref{prop:hac_residuals} that under $H_0$, $W_T\xrightarrow{d}\chi^2_r$. The Wald test rejects when $W_T>q_{1-\alpha}$, where $q_{1-\alpha}$ is the quantile of order $1-\alpha$ of $\chi^2_{r}$. More generally, the linear restrictions can be extended to the nonlinear restrictions by the usual Delta method argument.

\smallskip

For testing hypotheses on the increasing set of regression coefficients, it might be preferable to use the non-pivotal sup-norm based statistics, see \cite{ghysels2018testing}, due to the remarkable deoendence on the dimension in the high-dimensional setting; see \cite{chernozhukov2013gaussian} for high-dimensional Gaussian approximations with i.i.d.\ data.

\section{Fuk-Nagaev inequality}\label{sec:fn_inequality}
In this section, we describe a suitable for us version of the Fuk-Nagaev concentration inequality for the maximum of high-dimensional sums. The inequality allows for the data with polynomial tails and $\tau$-mixing coefficients decreasing at a polynomial rate. The following result does not require that the series is stationary.

\begin{theorem}\label{thm:tails_polynomial}
	Let $(\xi_{t})_{t\in\Z}$ be a centered stochastic process in $\R^p$ such that (i) for some $q>2$, $\max_{j\in[p],t\in[T]}\|\xi_{t,j}\|_q=O(1)$; (ii) for every $j\in[p]$, $\tau$-mixing coefficients of $\xi_{t,j}$ satisfy $\tau_k^{(j)} \leq ck^{-a}$ for some universal constants $a,c>0$. Then there exist $c_1,c_2>0$ such that for every $u>0$
	\begin{equation*}
	\Pr\left(\left|\sum_{t=1}^T\xi_t\right|_\infty > u\right) \leq c_1pTu^{-\kappa} + 4p\exp\left(-\frac{c_2u^2}{B_T^2}\right),
	\end{equation*}
	where\footnote{It is worth mentioning that the notation in this section is specific to generic stochastic processes and is independent from the rest of the paper. Thus $B_T$ here denotes the variance of partial sums and not the bias correction term of the LASSO estimator.} $\kappa = ((a+1)q-1)/(a+q-1)$, $B^2_T = \max_{j\in[p]}\sum_{t=1}^T\sum_{k=1}^T|\Cov(\xi_{t,j},\xi_{k,j})|$.
\end{theorem}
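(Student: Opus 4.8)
The plan is to reduce to a scalar concentration bound and then combine a truncation argument with an exponential inequality for bounded, weakly dependent variables. Since $\left|\sum_t \xi_t\right|_\infty = \max_{j\in[p]}\left|\sum_t \xi_{t,j}\right|$, a union bound gives $\Pr(\left|\sum_t\xi_t\right|_\infty > u) \leq \sum_{j=1}^p \Pr(\left|\sum_t \xi_{t,j}\right| > u)$, so it suffices to prove, for a fixed coordinate (whose index I suppress), that $\Pr(\left|\sum_t \xi_t\right| > u) \lesssim Tu^{-\kappa} + \exp(-c_2 u^2/B_T^2)$ and then multiply by $p$. The common factor $p$ in front of both terms of the statement appears exactly here.

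First I would truncate at a level $M$ to be chosen. Writing $\phi_M$ for the $1$-Lipschitz clipping at height $M$, set $\xi_t' = \phi_M(\xi_t) - \E\phi_M(\xi_t)$ and $\xi_t'' = \xi_t - \phi_M(\xi_t) - \E[\xi_t - \phi_M(\xi_t)]$, so that $\xi_t = \xi_t' + \xi_t''$ with both parts centered. The heavy part is controlled by a first-moment (Markov) bound: $\Pr(\left|\sum_t \xi_t''\right| > u/2) \leq (2/u)\sum_t\E|\xi_t''| \lesssim u^{-1}T M^{1-q}$, using $\E[|\xi_t|\mathbf{1}\{|\xi_t|>M\}]\leq M^{1-q}\E|\xi_t|^q = O(M^{1-q})$ from assumption (i). Because $\phi_M$ is $1$-Lipschitz, the clipped process $(\xi_t')_t$ remains $\tau$-mixing with coefficients no larger than $\tau_k\leq ck^{-a}$, and, again by the Lipschitz definition of $\tau$ (see Lemma~\ref{lemma:covariances}, which converts $\tau$-coefficients into covariance bounds), the variance of its partial sums stays of order $B_T^2$. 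This Lipschitz stability of $\tau$-mixing under clipping and under summation is precisely what makes $\tau$-mixing the convenient notion here.

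The core is an exponential inequality for the bounded, dependent part $\sum_t \xi_t'$ (each summand bounded by $2M$). I would obtain it by a Bernstein-blocking scheme: partition $[T]$ into alternating big blocks of length $\ell$ and gaps of length $k$, form the big-block sums $Y_i$ (each a $1$-Lipschitz function of its block, hence with $\tau$-distance to the past $\lesssim \ell\tau_k \lesssim \ell k^{-a}$), and use the coupling characterization of $\tau$-mixing to replace $(Y_i)$ by an independent sequence $(Y_i^*)$ with the same marginals at total cost $\sum_i\E|Y_i-Y_i^*|\lesssim Tk^{-a}$. Applying the classical Fuk--Nagaev/Bernstein inequality to the independent, bounded $(Y_i^*)$ yields $\exp(-cu^2/(B_T^2 + M\ell u))$; the gap sums are handled symmetrically, and the decoupling error enters through Markov as $\lesssim Tk^{-a}/u$. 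Collecting terms, $\Pr(\left|\sum_t\xi_t'\right|>u/2)\lesssim \exp(-cu^2/(B_T^2+M\ell u)) + Tk^{-a}/u$.

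It remains to optimize. In the moderate-deviation regime $M\ell u \lesssim B_T^2$ the Bernstein term collapses to $\exp(-c_2u^2/B_T^2)$, the target exponential. The powers $M\sim u^{a/(a+q-1)}$ and $k\sim u^{(q-1)/(a+q-1)}$ render the truncation term $TM^{1-q}/u$ and the decoupling term $Tk^{-a}/u$ both of order $Tu^{-\kappa}$ with $\kappa=((a+1)q-1)/(a+q-1)$, so that this balancing of the tail exponent $q$ against the mixing exponent $a$ is what produces the particular value of $\kappa$. I expect the main obstacle to be exactly this simultaneous calibration: a single-scale blocking creates a tension between taking the gap $k$ large enough to make the decoupling cost negligible and taking the block $\ell$ small enough to keep the sub-exponential (Poisson) part of the Bernstein bound, $\exp(-cu/(M\ell))$, dominated by the polynomial term, so the block geometry must be chosen carefully (in general across several scales) to route all non-Gaussian contributions into the single polynomial term $c_1pTu^{-\kappa}$. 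Verifying that the clipped covariances are genuinely controlled by $B_T^2$ and that the gap contribution is of lower order are the remaining routine, but necessary, checks.
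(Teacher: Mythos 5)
Your reduction to a single coordinate, the coupling of blocks to an independent sequence with cost controlled by the $\tau$-coefficients, and the Markov bound on the coupling error are all sound, and your calibration $M\sim u^{a/(a+q-1)}$, $k\sim u^{(q-1)/(a+q-1)}$ does reproduce the exponent $\kappa$ for the truncation and decoupling terms. The genuine gap is the step where you pass from the bounded, coupled blocks to the exponential term. Bernstein's inequality for summands bounded by $M\ell$ gives $\exp\left(-cu^2/(B_T^2+M\ell u)\right)$, and the theorem needs this to collapse to $\exp(-c_2u^2/B_T^2)$ or be absorbed into $c_1Tu^{-\kappa}$ for \emph{every} $u>0$. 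Your own constraints make that impossible at a single scale: decoupling forces the separation — and hence, since the gap sums must be "handled symmetrically", the length of at least one family of blocks — to satisfy $\ell\gtrsim u^{(q-1)/(a+q-1)}$, while the truncation error forces $M\gtrsim u^{a/(a+q-1)}$; together $M\ell\gtrsim u$, so the sub-exponential factor $\exp(-cu/(M\ell))$ is bounded below by a constant independent of $u$ and $T$. A constant bound is useless precisely in the regime $u\gg T^{1/\kappa}$ where $Tu^{-\kappa}$ is small, which is exactly the regime the inequality is about. You flag this tension yourself and defer it to a multi-scale blocking, but that deferral \emph{is} the difficulty: closing it requires either an adaptive, quantile-based truncation in the style of Rio and Merlev\`ede--Peligrad, or abandoning truncation altogether.

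The paper takes the second route, and this is what your sketch is missing. It forms consecutive blocks $V_t$ of length $J$ (no gaps at all), applies the Dedecker--Prieur coupling so that the even-indexed and odd-indexed coupled blocks form two families of independent random variables, and then invokes the Fuk--Nagaev inequality for \emph{unbounded} independent summands (Fuk and Nagaev, 1971, Corollary 4) on each family. That inequality already delivers the mixed bound $\frac{c_q^{(1)}}{x^q}\sum_t\E|V_t|^q+2\exp\left(-c_q^{(2)}x^2/B_T^2\right)$ with no Poisson-type term to tame; the polynomial part then comes from the block-moment growth $\sum_t\E|V_t|^q\lesssim TJ^{q-1}$, balanced against the coupling error $T\tau_{J+1}/x\lesssim TJ^{-a}/x$, and the choice $J\sim x^{(q-1)/(q+a-1)}$ makes both of order $Tx^{-\kappa}$. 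In other words, the truncation problem you run into is hidden inside the independent-case Fuk--Nagaev inequality, which the paper uses as a black box applied to the coupled blocks; trying to re-derive it from Bernstein with a polynomial-in-$u$ truncation level, as you propose, cannot work.
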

The inequality describes the mixture of the polynomial and Gaussian tails for the maximum of high-dimensional sums. In the limiting case of the i.i.d.\ data, as $a\to \infty$, the dependence-tails exponent $\kappa$ approaches $q$ and we recover the inequality for the independent data stated in \cite{fuk1971probability}, Corollary 4 for $p=1$. In this sense, the inequality in Theorem~\ref{thm:tails_polynomial} is sharp. It is well-known that the Fuk-Nagaev inequality delivers sharper estimates of tail probabilities in contrast to Markov's bound in conjunction with Rosenthal's moment inequality, cf. \cite{nagaev1998some}. The proof relies on the blocking technique, see \cite{bosq1993bernstein}, and the coupling inequality for $\tau$-mixing sequences, see \cite{dedecker2004coupling}, Lemma 5. In contrast to previous results, e.g., \cite{dedecker2004coupling}, Theorem 2, the inequality reflects the mixture of the polynomial and the exponential tails.

For stationary processes, by Lemma~\ref{lemma:B_T_bound} in the appendix, $B_T^2 = O(T)$ as long as $a>(q-1)/(q-2)$, whence we obtain from Theorem~\ref{thm:tails_polynomial} that for every $\delta\in(0,1)$
\begin{equation}\label{eq:concentration}
	\Pr\left(\left|\frac{1}{T}\sum_{t=1}^T\xi_t\right|_\infty \leq C\left(\frac{p}{\delta T^{\kappa-1}}\right)^{1/\kappa}\vee\sqrt{\frac{\log(8p/\delta)}{T}} \right) \geq 1 - \delta,
\end{equation}
where $C>0$ is some finite universal constant.

\section{Monte Carlo experiments \label{sec:mc}}
In this section, we aim to assess the debiased HAC-based inferences for the low-dimensional parameter in a high dimensional data setting. To that end, we draw covariates $\{x_{t,j},j\in[p]\}$ independently from the AR(1) process
\begin{equation*}
	x_{t,j} =  \rho x_{t-1,j} + \epsilon_{t,j}.
\end{equation*}
The regression error follows the AR(1) process
\begin{equation*}
	u_t =  \rho u_{t-1} + \nu_t,
\end{equation*}
where errors are $\epsilon_{t,j},\nu_t  \sim_{i.i.d.} N(0,1)$. The vector of population regression coefficients $\beta$ has the first five non-zero entries which are drawn from Uniform$(0,4)$ and all remaining entries are zero. The sample size is $T\in\{100,1000\}$ and the number of covariates is $p\in\{10,200\}$. We set the persistence parameter $\rho = 0.6$ and focus on the LASSO estimator to estimate coefficients $\hat \beta.$ Throughout the experiment, we choose the LASSO tuning parameters using the 10-fold cross-validation, defining folds as adjacent over time blocks. 

\smallskip

We report the average coverage (av.\ cov) and the average length of confidence intervals for the nominal coverage of 0.95 and on a grid of values of the bandwidth parameter $M_T \in \{5, 10, \dots, 60\}$, using the Parzen kernel. We estimate the long run covariance matrix $\hat\Xi$ using the LASSO residuals, denoted $\hat u_t$. We also use the nodewise LASSO regressions to estimate the precision matrix $\Theta$. The first step is to compute scores $\hat V_t =  \hat u_t x_t$, where $\hat u_t = y_t - x_t^\top\hat \beta$, and $\hat\beta$ is the LASSO estimator. Then we compute the high-dimensional HAC estimator using the formuala in equation (\ref{eq:hacformula}). 
We compute the pivotal statistics for each MC experiment $i\in[N]$ and each coefficient $j\in[p]$ as
$\text{pivot}_{j}^{(i)}$ $\triangleq$	$(\hat \beta_{j}^{(i)} + B_{j}^{(i)} - \beta )/\sqrt{\hat\Xi_{j,j}^{(i)}/T}$,
where $B_{j}^{(i)} = \hat\Theta^{(i)}_j \mathbf{X}^{\top(i)} \hat{\mathbf{u}}^{(i)}/T$, and $\mathbf{\hat u}^{(i)}=\mathbf{y}^{(i)}-\mathbf{X}^{(i)}\hat\beta^{(i)}$. Then we compute the empirical coverage as
\begin{align*}
	\text{av.cov}_j = \frac{1}{N} \sum_{i=1}^{N}\mathbf{1}\{\text{pivot}_{j}^{i}  \in \left[-1.96, 1.96\right]\}
\end{align*}
and the average confidence interval length as
$\text{length}_{j}$ = $\frac{1}{N} \sum_{i=1}^{N} 2 \times 1.96 \times \sqrt{\hat\Xi_{j,j}^{(i)}/T}.$
The number of Monte Carlo experiments is set to $N$= $5000.$ 

\smallskip

We report average results over the active and inactive sets of the vector of coefficients. Table \ref{tab:infer_simul_gaussian} shows results for the small sample size ($T=100$) and the large sample size ($T=1000$). We find that the value of the bandwidth parameter $M_T$ should be smaller when the number of regressors $p$ is larger. For $T=100$ (small sample size), for the active set of coefficients, the best coverage is achieved when the bandwidth parameter is set at $10$ when $p=10$ and at $5$ when $p=200$. Results for the inactive set and $T=1000$ are similar. We also see that the increase in $p$ relative to $T$ leads to worse performance. Furthermore, the coverage improves when the bandwidth increases with the sample size. Lastly, as the sample size increases, the average coverage approaches the nominal level of 0.95 and the confidence interval shrinks in size. Overall, the simulation results confirm our theoretical findings.

\begin{table}[!htbp]
	\centering
	{\footnotesize
		\begin{tabular}{r cc cc c cc cc}
			&&\multicolumn{4}{c}{\underline{Average coverage (av. cov)}}
			&\multicolumn{4}{c}{\underline{Confidence interval length}}\\ 
			&\multicolumn{2}{c}{Active set of $\beta$}&\multicolumn{2}{c}{Inactive set of $\beta$}&&\multicolumn{2}{c}{Active set of $\beta$}&\multicolumn{2}{c}{Inactive set of $\beta$}\\\hline
			$M_T\backslash p$& 10 & 200 & 10 & 200 &  & 10 & 200 & 10 & 200 \\ 
			&\multicolumn{9}{c}{\underline{T=100}}\\
			5 & 0.830 & 0.755 & 0.827 & 0.747 &  & 0.286 & 0.382 & 0.277 & 0.368 \\ 
			10 & 0.834 & 0.750 & 0.835 & 0.746 &  & 0.305 & 0.401 & 0.291 & 0.376 \\ 
			15 & 0.824 & 0.753 & 0.832 & 0.745 &  & 0.314 & 0.411 & 0.294 & 0.376 \\ 
			20 & 0.821 & 0.735 & 0.826 & 0.743 &  & 0.320 & 0.420 & 0.296 & 0.374 \\ 
			25 & 0.816 & 0.739 & 0.820 & 0.741 &  & 0.325 & 0.427 & 0.297 & 0.372 \\ 
			30 & 0.807 & 0.738 & 0.814 & 0.740 &  & 0.329 & 0.434 & 0.297 & 0.370 \\ 
			35 & 0.806 & 0.733 & 0.807 & 0.738 &  & 0.333 & 0.441 & 0.297 & 0.368 \\ 
			40 & 0.801 & 0.737 & 0.802 & 0.735 &  & 0.337 & 0.447 & 0.297 & 0.366 \\ 
			&\multicolumn{9}{c}{\underline{T=1000}}\\
			5 & 0.913 & 0.848 & 0.913 & 0.879 &  & 0.081 & 0.067 & 0.081 & 0.067 \\ 
			10 & 0.932 & 0.865 & 0.932 & 0.894 &  & 0.087 & 0.070 & 0.087 & 0.070 \\ 
			15 & 0.935 & 0.866 & 0.936 & 0.894 &  & 0.088 & 0.070 & 0.088 & 0.070 \\ 
			20 & 0.936 & 0.868 & 0.936 & 0.897 &  & 0.089 & 0.071 & 0.088 & 0.071 \\ 
			25 & 0.936 & 0.867 & 0.936 & 0.895 &  & 0.089 & 0.071 & 0.089 & 0.071 \\ 
			30 & 0.937 & 0.866 & 0.937 & 0.895 &  & 0.089 & 0.071 & 0.089 & 0.070 \\ 
			35 & 0.936 & 0.866 & 0.936 & 0.895 &  & 0.089 & 0.071 & 0.089 & 0.070 \\ 
			40 & 0.934 & 0.865 & 0.934 & 0.894 &  & 0.089 & 0.071 & 0.088 & 0.070 \\  \hline
		\end{tabular}
		\caption{\small HAC-based inference simulation results -- The table reports average coverage (first four columns) and average length of confidence intervals (last four columns) for active and inactive sets of $\beta$ and for $T=100$ and $T=1000$. We report results for a set of bandwidth parameter $M_T$ values. 	\label{tab:infer_simul_gaussian}}
	} 
\end{table}

\clearpage 

\section{Testing Granger causality for VIX and financial news}\label{sec:application}

The CBOE Volatility Index, known as the VIX, is a popular measure of market-based expectation of future volatility and is often referred to as the ``fear index''.  The VIX index quotes the expected annualized change in the S\&P 500 index over the following 30 days, as computed from options-based theory and current options-market data. 

\smallskip

There is a large literature studying the theoretical and empirical properties of the VIX and it is impossible to cite only a few papers to do justice to all the outstanding research output on the topic. Focusing on Granger causal patterns, there are several studies pertaining to causality between the VIX and VIX futures. For example, \cite{bollen2017tail} suggest that the VIX futures lagged the VIX in the first few years after its introduction, and show an increasing dominance of VIX futures over time. Along similar lines, \cite{shu2012causality} study price‐discovery between VIX futures the spot VIX index and find evidence of a bi-directional causal pattern.

\smallskip

We study the causal relationship between financial news and the VIX. There is also a substantial literature on the impact of news releases on financial markets (see, e.g., \cite{andersen2003micro}). Traditionally, such analysis looks at news releases and studies the behavior of asset prices pre- and post-release. News is usually quantified numerically via the surprise component measured as the difference between an expectation prior to the release and the announcement. In the age of machine learning, the characterization of news has been expanded into the textual analysis of news coverage. To paraphrase the title of \cite{gentzkow2019text}, the text is treated as data. It is in this spirit that we conduct our high-dimensional Granger causality analysis between the VIX and news.

\smallskip

We use a data set from \cite{bybee2019structure} which contains 180 news attention monthly series, all of which potentially Granger cause future US equity market volatility.\footnote{We downloaded daily VIX data from St. Louis Fed FRED database and took the end-of-month values. The FRED mnemonic for the VIX is VIXCLS. Table with the full list of series appears in Appendix \ref{tab:selected}.} We estimate the following time series regression model

\begin{equation*}
\label{eq:ardlmidas}
y_{t+1} = \psi(L^{1/m}; \beta)y_t + \sum_{k=1}^K\rho_k x_{t,k} + u_t,\qquad t\in[T],
\end{equation*}
where $y_{t+1}$ is the value of the VIX at the end of month $t+1$, $\psi(L^{1/m}; \beta)y_{t}$ is a MIDAS polynomial of 22 daily VIX lags where the first lag is the last day of the month $t$, and $x_{t,k}$ is the $k$-th news attention series. Note that we only take one lag for the news attention series to simplify the model (and also an empirically justified simplification). The MIDAS polynomial of daily lags of the VIX involves Legendre polynomials of degree 3. Note that the specification is what is sometimes called a reverse MIDAS regression as mentioned earlier in the paper. Prior to estimating the regression model, we time demean the response and covariates such that the intercept is zero. We further standardize all covariates to have a unit standard deviation. The daily VIX lags are standardized before the aggregation. 

\smallskip

We apply the sg-LASSO estimator to estimate the slope coefficients and nodewise LASSO regressions to estimate the precision matrix. To fully exploit the group sparsity of sg-LASSO, we group all high-frequency lags of daily VIX, see \cite{babii2020midasml} for further details on such grouping. The news attention series are monthly and we are interested in whether the most recent news Granger causes the VIX, hence we don't apply the group structure along the time dimension. Instead, we group news attention series based on a broader theme that is available for each series, see \cite{bybee2019structure} for further details. Namely, the data set contains 24 broader topics which group each of the 180 news attention series.

\subsection{Main results}

We report the p-values for a range of $M_T$ values for series that appear to be significant at the 1\% or 5\% significance level for all $M_T\in\{20,40,60\}$ values and for two kernel functions, namely Parzen and Quadratic Spectral. The sample starts January 1990 January and ends June 2017, determined by the availability of the textual analysis data. Both sg-LASSO and LASSO tuning parameters are selected via 10-fold cross-validation, defining folds as adjacent blocks over the time dimension to take into account the
time series nature of the data. Similarly, we tune nodewise LASSO regressions for the precision matrix estimation.

\subsubsection{Granger causality of news topics}

The results appear in Table \ref{tab:vix_news_hac} which contains two main row blocks reporting results for the structured sg-LASSO and unstructured LASSO estimators, and two-column blocks, reporting results for two kernel functions. Irrespective of the initial estimator and kernel function, the lagged daily VIX and the Financial crisis news series are highly significant at 1\% significance level. 
Comparing results for the initial estimator, in the case of LASSO we see more significant predictors than for the sg-LASSO case, while the subset of significant covariates using sg-LASSO is a subset of the LASSO significant predictors. Many more series are selected by the initial LASSO estimator compared to the sg-LASSO, see Table \ref{tab:selected}. This suggests that relevant group structures are important, and may help in recovering salient relationships in the data. 

\begin{table}
	\centering
	\begin{tabular}{r  rrrrrrr}
		Variable $\backslash M_T$ & 20 & 40 & 60 &  & 20 & 40 & 60  \\ \hline
		&\multicolumn{7}{c}{\underline{sg-LASSO}} \\
		&\multicolumn{3}{c}{\underline{Parzen}}&&\multicolumn{3}{c}{\underline{Quadratic Spectral}} \\
		&\multicolumn{7}{c}{1\% significance} \\
		Daily VIX lags & 0.000 & 0.000 & 0.001 && 0.000 & 0.001 & 0.002 \\ 
		Financial crisis & 0.005 & 0.002 & 0.001 && 0.002 & 0.001 & 0.000 \\ 
		&\multicolumn{7}{c}{5\% significance} \\
		Aerospace/defense & 0.014 & 0.014 & 0.017 && 0.012 & 0.018 & 0.027 \\ 
		Recession & 0.011 & 0.008 & 0.009 && 0.008 & 0.008 & 0.013 \\
		&\multicolumn{7}{c}{\underline{LASSO}} \\
		&\multicolumn{3}{c}{\underline{Parzen}}&&\multicolumn{3}{c}{\underline{Quadratic Spectral}} \\
		&\multicolumn{7}{c}{1\% significance} \\
		Daily VIX lags & 0.000 & 0.000 & 0.000 && 0.000 & 0.000 & 0.000 \\ 
		Financial crisis & 0.001 & 0.000 & 0.000 && 0.000 & 0.000 & 0.000 \\ 
		Recession & 0.003 & 0.002 & 0.002 && 0.002 & 0.002 & 0.004 \\ 
		Marketing & 0.001 & 0.001 & 0.000 && 0.001 & 0.000 & 0.000 \\ 
		&\multicolumn{7}{c}{5\% significance} \\
		Aerospace/defense & 0.007 & 0.006 & 0.004 && 0.006 & 0.004 & 0.002 \\ 
		NY politics & 0.012 & 0.015 & 0.013 && 0.016 & 0.012 & 0.008\\ 
		Acquired investment banks & 0.018 & 0.006 & 0.002 && 0.007 & 0.001 & 0.000 \\ 
		\hline
	\end{tabular}
	\caption{VIX Granger causality results. We report p-values of series that are significant at 1\% and 5\% significance level for a range of $M_T$ values and both kernel functions. } 
	\label{tab:vix_news_hac}
\end{table}

\subsubsection{Bi-directional Granger causality}
We also test whether the daily VIX Granger causes Financial crisis news series. For this we run the following MIDAS regression model
\begin{equation*}
\label{eq:reverse_causality}
x_{t+1,j} = \psi(L^{1/m}; \beta)y_t + \sum_{k=1}^K\rho_k x_{t,k} + u_t,\qquad t\in[T],
\end{equation*}
where $x_{t+1,j}$ is the Financial crisis news series. We test whether daily VIX Granger causes future values of Financial crisis news series. Note that we only need to estimate the initial initial coefficient vector, since the precision matrix remains the same. The results appear in Table \ref{tab:reverse_causality}. They show a rather weak predictability of future news series by daily VIX suggesting a unidirectional Granger causality pattern. 
\begin{table}
	\centering
	\begin{tabular}{r  rrrrrrr}
		Variable $\backslash M_T$ & 20 & 40 & 60 &  & 20 & 40 & 60  \\ \hline
		&\multicolumn{7}{c}{\underline{sg-LASSO}} \\
		&\multicolumn{3}{c}{\underline{Parzen}}&&\multicolumn{3}{c}{\underline{Quadratic Spectral}} \\
		Daily VIX lags & 0.050 & 0.071 & 0.091 && 0.060 & 0.086 & 0.129 \\ 
		\hline
	\end{tabular}
	\caption{Bi-directional Granger causality results. We report p-values for a range of $M_T$ values and both kernel functions. } 
	\label{tab:reverse_causality}
\end{table}

\subsubsection{Granger causal clusters of news topics}

The news attention series are classified into 24 broader meta topics that group the individual news series according to a common theme. We test which group of individual news series Granger causes future VIX values. The results are reported in Table \ref{tab:group_causality}. They show that the group {\it Banks} is significant at 5\% significance level. This group consists of news series pertaining to news about Mortgages, Bank loans, Credit ratings, Nonperforming loans, Savings \& loans, and the Financial crisis.

\begin{table}
	\centering
	\begin{tabular}{r  rrrrrrr}
		Variable $\backslash M_T$ & 20 & 40 & 60 &  & 20 & 40 & 60  \\ \hline
		&\multicolumn{7}{c}{\underline{sg-LASSO}} \\
		&\multicolumn{3}{c}{\underline{Parzen}}&&\multicolumn{3}{c}{\underline{Quadratic Spectral}} \\
		Banks & 0.032 & 0.024 & 0.008 && 0.023 & 0.001 & 0.000 \\ 
		\hline
	\end{tabular}
	\caption{Group Granger causality results. We report p-values for a range of $M_T$ values and both kernel functions. } 
	\label{tab:group_causality}
\end{table}

\section{Conclusion \label{sec:conclusion}}
This paper develops valid inferential methods for high-dimensional time series regressions estimated with the sparse-group LASSO (sg-LASSO) estimator that encompasses the LASSO and the group LASSO as special cases. We derive the debiased central limit theorem with the explicit bias correction for the sg-LASSO with serially correlated regression errors. Furthermore, we also study HAC estimators of the long-run variance for low dimensional groups of regression coefficients and characterize how the optimal bandwidth parameter should scale with the sample size, the temporal dependence, as well as tails of the data. These results lead to the valid t- and Wald tests for the low-dimensional subset of parameters, such as Granger causality tests. Our treatment relies on a new suitable variation of the Fuk-Nagaev inequality for $\tau$-mixing processes which allows us to handle the time series data with polynomial tails. An interesting avenue for future research is to study more carefully the problem of the optimal data-driven bandwidth choice based on higher-order asymptotic expansions, see, e.g., \cite{sun2008optimal} for steps in this direction in low dimensional settings.

\smallskip 

In an empirical application we use a high-dimensional news attention series to study causal patterns between the VIX, sometimes called the fear index, and financial news. We find that almost exclusively the topic of financial crisis exhibits unidirectional Granger causality for the VIX.

\newpage

	\bibliographystyle{econometrica}
\bibliography{midas_ml}

@article{shiller1973distributed,
	title={A distributed lag estimator derived from smoothness priors},
	author={Shiller, Robert J},
	journal={Econometrica},
	pages={775--788},
	volume={41},
	year={1973},
	publisher={JSTOR}
}

@article{sims1971discrete,
	title={Discrete approximations to continuous time distributed lags in econometrics},
	author={Sims, Christopher A},
	journal={Econometrica},
	pages={545--563},
	volume={39},
	year={1971}
}

@article{bartlett1948smoothing,
	title={Smoothing periodograms from time-series with continuous spectra},
	author={Bartlett, M.S.},
	journal={Nature},
	volume={161},
	number={4096},
	pages={686--687},
	year={1948},
	publisher={Nature Publishing Group}
}

@article{eicker1963asymptotic,
	title={Asymptotic normality and consistency of the least squares estimators for families of linear regressions},
	author={Eicker, Friedhelm},
	journal={Annals of Mathematical Statistics},
	pages={447--456},
	year={1963},
	publisher={JSTOR}
}

@article{white1980heteroskedasticity,
	title={A heteroskedasticity-consistent covariance matrix estimator and a direct test for heteroskedasticity},
	author={White, Halbert},
	journal={Econometrica},
	pages={817--838},
	year={1980},
	publisher={JSTOR}
}

@inproceedings{huber1967behavior,
	title={The behavior of maximum likelihood estimates under nonstandard conditions},
	author={Huber, Peter J},
	booktitle={Proceedings of the fifth Berkeley symposium on mathematical statistics and probability},
	volume={1},
	number={1},
	pages={221--233},
	year={1967},
	organization={University of California Press}
}

@article{fisher1937note,
	title={Note on a short-cut method for calculating distributed lags},
	author={Fisher, Irving},
	year={1937},
	volume={29},
	number={3},
	journal={Bulletin de l'Institut international de Statistique},
	pages={323--328}
}

@article{fisher1925our,
	title={Our unstable dollar and the so-called business cycle},
	author={Fisher, Irving},
	journal={Journal of the American Statistical Association},
	volume={20},
	number={150},
	pages={179--202},
	year={1925},
	publisher={Taylor \& Francis Group}
}

@unpublished{babii2020midasml,
	title={Machine learning time series regressions with an application to nowcasting},
	author={Babii, Andrii and Ghysels, Eric and Striaukas, Jonas},
	year={2020},
	note={arXiv preprint arXiv:2005.14057v1},
}

@article{politis2011higher,
	title={Higher-order accurate, positive semidefinite estimation of large-sample covariance and spectral density matrices},
	author={Politis, Dimitris N},
	journal={Econometric Theory},
	volume={27},
	number={4},
	pages={703--744},
	year={2011},
	publisher={Cambridge University Press}
}

@article{carrasco2007linear,
  title={Linear inverse problems in structural econometrics estimation based on spectral decomposition and regularization},
  author={Carrasco, Marine and Florens, Jean-Pierre and Renault, Eric},
  journal={Handbook of Econometrics},
  volume={6},
  pages={5633--5751},
  year={2007},
  publisher={Elsevier}
}

@article{chiang2019lasso,
	title={Lasso under multi-way clustering: estimation and post-selection inference},
	author={Chiang, Harold D and Sasaki, Yuya},
	journal={arXiv preprint arXiv:1905.02107},
	year={2019}
}

@article{florens1982note,
	title={A note on noncausality},
	author={Florens, Jean-Pierre and Mouchart, Michel},
	journal={Econometrica},
	pages={583--591},
	year={1982},
	volume={50},
	number={3}
}

@article{almon1965distributed,
	title={The distributed lag between capital appropriations and expenditures},
	author={Almon, Shirley},
	journal={Econometrica},
	pages={178--196},
	year={1965},
	volume={33},
	number={1}
}

@article{andreou2013should,
	title={Should macroeconomic forecasters use daily financial data and how?},
	author={Andreou, Elena and Ghysels, Eric and Kourtellos, Andros},
	journal={Journal of Business and Economic Statistics},
	volume={31},
	pages={240--251},
	year={2013}
}

@article{andrews1991heteroskedasticity,
	title={Heteroskedasticity and autocorrelation consistent covariance matrix estimation},
	author={Andrews, Donald W.K.},
	journal={Econometrica},
	volume={59},
	number={3},
	pages={817--858},
	year={1991}
}

@article{athey2019machine,
	title={Machine learning methods that economists should know about},
	author={Athey, Susan and Imbens, Guido W},
	journal={Annual Review of Economics},
	volume={11},
	year={2019},
	pages={685-725}
}

@article{belloni2014inference,
	title={Inference on treatment effects after selection among high-dimensional controls},
	author={Belloni, Alexandre and Chernozhukov, Victor and Hansen, Christian},
	journal={Review of Economic Studies},
	volume={81},
	number={2},
	pages={608--650},
	year={2014},
	publisher={Oxford University Press}
}

@unpublished{bybee2019structure,
	title={The structure of economic news},
	author={Bybee, Leland and Kelly, Bryan T and Manela, Asaf and Xiu, Dacheng},
	year={2020},
	note={{\it National Bureau of Economic Research}, and \url{http://structureofnews.com}}
}

@unpublished{chernozhukov2019lasso,
	title={Lasso-driven inference in time and space},
	author={Chernozhukov, Victor and H{\"a}rdle, Wolfgang K and Huang, Chen and Wang, Weining},
	note={{\it Annals of Statistics} (forthcoming)},
	year={2019}
}

@article{dedecker2003new,
	title={A new covariance inequality and applications},
	author={Dedecker, J{\'e}r{\^o}me and Doukhan, Paul},
	journal={Stochastic Processes and their Applications},
	volume={106},
	number={1},
	pages={63--80},
	year={2003},
	publisher={Elsevier}
}

@article{dedecker2004coupling,
	title={Coupling for $\tau$-dependent sequences and applications},
	author={Dedecker, J{\'e}r{\^o}me and Prieur, Cl{\'e}mentine},
	journal={Journal of Theoretical Probability},
	volume={17},
	number={4},
	pages={861--885},
	year={2004},
	publisher={Springer}
}

@article{van2014asymptotically,
	title={On asymptotically optimal confidence regions and tests for high-dimensional models},
	author={van de Geer, Sara and B{\"u}hlmann, Peter and Ritov, Ya'acov and Dezeure, Ruben},
	journal={Annals of Statistics},
	volume={42},
	number={3},
	pages={1166--1202},
	year={2014},
	publisher={Institute of Mathematical Statistics}
}

@unpublished{feng2019taming,
	title={Taming the factor zoo: a test of new factors},
	author={Feng, Guanhao and Giglio, Stefano and Xiu, Dacheng},
	year={2019},
	note={{\it Journal of Finance} (forthcoming)}
}

@article{bosq1993bernstein,
	title={Bernstein-type large deviations inequalities for partial sums of strong mixing processes},
	author={Bosq, Denis},
	journal={Statistics},
	volume={24},
	number={1},
	pages={59--70},
	year={1993},
	publisher={Taylor \& Francis}
}

@article{fuk1971probability,
	title={Probability inequalities for sums of independent random variables},
	author={Fuk, D Kh and Nagaev, Sergey V},
	journal={Theory of Probability and Its Applications},
	volume={16},
	number={4},
	pages={643--660},
	year={1971},
	publisher={SIAM}
}

@article{chernozhukov2013gaussian,
	title={Gaussian approximations and multiplier bootstrap for maxima of sums of high-dimensional random vectors},
	author={Chernozhukov, Victor and Chetverikov, Denis and Kato, Kengo and others},
	journal={Annals of Statistics},
	volume={41},
	number={6},
	pages={2786--2819},
	year={2013},
	publisher={Institute of Mathematical Statistics}
}

@article{ghysels2018testing,
	title={Testing a large set of zero restrictions in regression models, with an application to mixed frequency Granger causality},
	author={Ghysels, Eric and Hill, Jonathan B and Motegi, Kaiji},
	note={{\it Journal of Econometrics} (forthcoming)},
	year={2020}
}

@article{ghysels2007midas,
	title={MIDAS regressions: Further results and new directions},
	author={Ghysels, Eric and Sinko, Arthur and Valkanov, Rossen},
	journal={Econometric Reviews},
	volume={26},
	number={1},
	pages={53--90},
	year={2007},
	publisher={Taylor \& Francis}
}

@article{ghysels2006predicting,
	title={Predicting volatility: getting the most out of return data sampled at different frequencies},
	author={Ghysels, Eric and Santa-Clara, Pedro and Valkanov, Rossen},
	journal={Journal of Econometrics},
	volume={131},
	pages={59--95},
	year={2006}
}

@article{lazarus2018har,
	title={HAR Inference: Recommendations for Practice},
	author={Lazarus, Eben and Lewis, Daniel J and Stock, James H and Watson, Mark W},
	journal={Journal of Business and Economic Statistics},
	volume={36},
	number={4},
	pages={541--559},
	year={2018},
	publisher={Taylor \& Francis}
}

@unpublished{li2018uniform,
	title={Uniform nonparametric inference for time series},
	author={Li, Jia and Liao, Zhipeng},
	year={2019},
	note={{\it Journal of Econometrics} (forthcoming)},
}

@article{meinshausen2006high,
	title={High-dimensional graphs and variable selection with the LASSO},
	author={Meinshausen, Nicolai and B{\"u}hlmann, Peter},
	journal={Annals of Statistics},
	volume={34},
	number={3},
	pages={1436--1462},
	year={2006},
	publisher={Institute of Mathematical Statistics}
}

@article{nagaev1998some,
	title={Some refinements of probabilistic and moment inequalities},
	author={Nagaev, Sergey Victorovich},
	journal={Theory of Probability and Its Applications},
	volume={42},
	number={4},
	pages={707--713},
	year={1998},
	publisher={SIAM}
}

@article{neumann2013central,
	title={A central limit theorem for triangular arrays of weakly dependent random variables, with applications in statistics},
	author={Neumann, Michael H},
	journal={ESAIM: Probability and Statistics},
	volume={17},
	pages={120--134},
	year={2013},
	publisher={EDP Sciences}
}

@book{gallant1987nonlinear,
	title={Nonlinear statistical models.},
	author={Gallant, A Ronald},
	year={1987},
	publisher={John Wiley, New York}
}

@article{newey1987simple,
	title={A simple, positive semi-definite, heteroskedasticity and autocorrelation consistent covariance matrix},
	author={Newey, Whitney K and West, Kenneth D},
	journal={Econometrica},
	volume={55},
	number={3},
	pages={703--708},
	year={1987},
	publisher={Wiley Online Library}
}

@article{daniell1946discussion,
	title={Discussion of paper by M.S. Bartlett},
	author={Daniell, PJ},
	journal={Journal of the Royal Statistical Society Supplements},
	volume={8},
	pages={88--90},
	year={1946}
}

@article{parzen1957consistent,
	title={On consistent estimates of the spectrum of a stationary time series},
	author={Parzen, Emanuel},
	journal={Annals of Mathematical Statistics},
	volume={28},
	number={2},
	pages={329--348},
	year={1957},
	publisher={The Institute of Mathematical Statistics}
}

@article{sun2008optimal,
	title={Optimal bandwidth selection in heteroskedasticity--autocorrelation robust testing},
	author={Sun, Yixiao and Phillips, Peter CB and Jin, Sainan},
	journal={Econometrica},
	volume={76},
	number={1},
	pages={175--194},
	year={2008},
	publisher={Wiley Online Library}
}
\newpage
\setcounter{page}{1}
\setcounter{section}{0}
\setcounter{equation}{0}
\setcounter{table}{0}
\setcounter{figure}{0}
\renewcommand{\theequation}{A.\arabic{equation}}
\renewcommand\thetable{A.\arabic{table}}
\renewcommand\thefigure{A.\arabic{figure}}
\renewcommand\thesection{A.\arabic{section}}
\renewcommand\thepage{Appendix - \arabic{page}}
\renewcommand\thetheorem{A.\arabic{theorem}}

\begin{center}
	{\LARGE\bf APPENDIX}
\end{center}

\section{Proofs}\label{appsec:proofs}
\begin{proof}[Proof of Theorem~\ref{thm:clt}]
	By Fermat's rule, the sg-LASSO satisfies
	\begin{equation*}
		\mathbf{X}^\top(\mathbf{X}\hat\beta - \mathbf{y})/T + \lambda z^* = 0
	\end{equation*}
	for some $z^*\in\partial\Omega(\hat\beta)$, where $\partial\Omega(\hat\beta)$ is the sub-differential of $b\mapsto\Omega(b)$ at $\hat\beta$. Rearranging this expression and multiplying by $\hat\Theta$
	\begin{equation*}
		\hat\beta - \beta + \hat\Theta\lambda z^* = \hat\Theta\mathbf{X}^\top\mathbf{u}/T + (I-\hat\Theta\hat\Sigma)(\hat\beta - \beta) + \hat\Theta\mathbf{X}^\top(\mathbf{m}-\mathbf{X}\beta)/T,
	\end{equation*}
	where we use $\mathbf{y} = \mathbf{m} + \mathbf{u}$. Plugging in $\lambda z^*$ and multiplying by $\sqrt{T}$
	\begin{equation*}
		\begin{aligned}
			\sqrt{T}(\hat\beta - \beta + B) & = \hat\Theta\mathbf{X}^\top\mathbf{u}/\sqrt{T} + \sqrt{T}(I-\hat\Theta\hat\Sigma)(\hat\beta - \beta) + \hat\Theta\mathbf{X}^\top(\mathbf{m}-\mathbf{X}\beta)/\sqrt{T} \\
			& = \frac{1}{\sqrt{T}}\sum_{t=1}^Tu_t\Theta x_t + \frac{1}{\sqrt{T}}\sum_{t=1}^Tu_t(\hat\Theta - \Theta)X_t + \sqrt{T}(I-\hat\Theta\hat\Sigma)(\hat\beta - \beta) \\
			& \qquad + \hat\Theta\mathbf{X}^\top(\mathbf{m}-\mathbf{X}\beta)/\sqrt{T}.
		\end{aligned}
	\end{equation*}
	Next, we look at coefficients corresponding to $G\subset [p]$
	\begin{equation*}
		\begin{aligned}
			\sqrt{T}(\hat\beta_G - \beta_G + B_G) & = \frac{1}{\sqrt{T}}\sum_{t=1}^Tu_t\Theta_G x_t + \frac{1}{\sqrt{T}}\sum_{t=1}^Tu_t(\hat\Theta_G - \Theta_G)x_t + \sqrt{T}(I-\hat\Theta\hat\Sigma)_G(\hat\beta - \beta) \\
			& \qquad + \hat\Theta_G\mathbf{X}^\top(\mathbf{m}-\mathbf{X}\beta)/\sqrt{T} \\
			& \triangleq I_T + II_T + III_T + IV_T.
		\end{aligned}
	\end{equation*}
	We will show that $I_T\xrightarrow{d}N(0,\Xi_G)$ by the triangular array CLT, see \cite{neumann2013central}, Theorem 2.1. To that end, by the Cr\'{a}mer-Wold theorem, it is sufficient to show that $z^\top I_T \xrightarrow{d} z^\top N(0,\Xi_G)$ for every $z\in\R^{|G|}$. Note that under Assumptions~\ref{as:data} and \ref{as:clt} (i)-(ii)
	\begin{equation*}
		\begin{aligned}
			\sum_{t=1}^T\E\left|\frac{z^\top\xi_t}{\sqrt{T}}\right|^2 & = \E|u_tz^\top\Theta_Gx_t|^2 \\
			& \leq Cz^\top\Theta_G\Sigma\Theta_G^\top z \\
			& = O(1).
		\end{aligned}
	\end{equation*}	
	Therefore, since $q>2r/(r-2)$, we have $\varsigma>2$, and for every $\epsilon>0$
	\begin{equation*}
		\begin{aligned}
			\sum_{t=1}^T\E\left[\left|\frac{z^\top\xi_t}{\sqrt{T}}\right|^2\mathbf{1}\left\{\left|z^\top\xi_t\right| > \epsilon\sqrt{T} \right\}\right] & \leq \frac{\E\left|z^\top\xi_t\right|^{\varsigma}}{(\epsilon\sqrt{T})^{\varsigma-2}} = o(1).
		\end{aligned}
	\end{equation*}
	Next, under Assumptions~\ref{as:data} and \ref{as:clt} (i)-(ii), the long run variance
	\begin{equation*}
		\lim_{T\to\infty}\Var\left(\frac{1}{\sqrt{T}}\sum_{t=1}^Tz^\top\xi_t\right) = z^\top\Xi_Gz
	\end{equation*}
	exists by Proposition~\ref{prop:long_run}.
	
	Next, put $\mathcal{M} = \sigma(\xi_0,\xi_{-1},\xi_{-2},\dots)$, $Y = g(z^\top\xi_{t_1-t_h}/\sqrt{T},\dots,z^\top \xi_0/\sqrt{T})z^\top\xi_0$, and $X = z^\top\xi_r$ for some $t_h\geq 0$. Note that $X$ and $|XY|$ are integrable and that $Y$ is $\mathcal{M}$-measurable. Therefore, for every measurable function $g:\R^h\to\R$ with $\sup_x|g(x)|\leq 1$, by \cite{dedecker2003new}, Proposition 1, for all $h\in\Nn$ and all indices $1\leq t_1<t_2<\dots<t_h<t_h+r\leq t_h+s\leq T$
	\begin{equation*}
		\begin{aligned}
		& \left|\Cov\left(g(z^\top\xi_{t_1}/\sqrt{T},\dots,z^\top\xi_{t_h}/\sqrt{T})z^\top\xi_{t_h}/\sqrt{T},z^\top\xi_{t_h+r}/\sqrt{T}\right)\right| \\
		& = \frac{1}{T}\left|\Cov\left(Y,X\right)\right| \\
		& \leq \frac{1}{T}\int_0^{\gamma(\mathcal{M},z^\top\xi_r)}Q_{Y}\circ G_{z^\top\xi_{r}}(u)\dx u \\
		& \leq \frac{1}{T}\int_0^{\gamma(\mathcal{M},z^\top\xi_r)}Q_{z^\top\xi_0}\circ G_{z^\top\xi_{r}}(u)\dx u \\
		& \leq \frac{1}{T}\|\E(z^\top\xi_r|\mathcal{M}) - \E(z^\top\xi_r)\|_1^{\frac{\varsigma-2}{\varsigma-1}}\|z^\top\xi_0\|_\varsigma^{\varsigma/(\varsigma-1)} \\
		& \leq \frac{1}{T}|\Theta_G^\top z|_1^\frac{\varsigma-2}{\varsigma-1}\tau_{r}^{\frac{\varsigma-2}{\varsigma-1}}\|z^\top\xi_0\|_\varsigma^{\varsigma/(\varsigma-1)} \lesssim r^{-a\frac{\varsigma-2}{\varsigma-1}}
		\end{aligned}
	\end{equation*}
	where the second line follows by stationarity and $\sup_x|g(x)|\leq 1$, the fourth by H\"{o}lder's inequality and the change of variables
	\begin{equation*}
		\begin{aligned}
			\int Q_{z^\top\xi_{0}}^{\varsigma-1}\circ G_{z^\top\xi_{r}}(u)\dx u = \int_0^1Q_{z^\top\xi_0}^{\varsigma}(u)\dx u = \|z^\top\xi_0\|^\varsigma_\varsigma,
		\end{aligned}
	\end{equation*}
	and the last by Lemma~\ref{lemma:covariances} and Assumptions~\ref{as:data} (ii) and \ref{as:clt} (ii). Similarly,
	\begin{equation*}
		\begin{aligned}
			& \left|\Cov\left(g(z^\top\xi_{t_1}/\sqrt{T},\dots,z^\top\xi_{t_h}/\sqrt{T}),z^\top\xi_{t_h+r}/\sqrt{T}z^\top\xi_{t_h+s}/\sqrt{T}\right)\right| \\
			& = \frac{1}{T}\left|\Cov\left(g(z^\top\xi_{t_1-t_h}/\sqrt{T},\dots,z^\top\xi_{0}/\sqrt{T}), z^\top\xi_{r}z^\top\xi_{s}\right)\right| \\
			& \leq \frac{1}{T}\int_0^{\gamma(\mathcal{M},z^\top\xi_{r}z^\top\xi_{s})}Q_{g}\circ G_{z^\top\xi_{r}z^\top\xi_{s}}(u)\dx u \\
			& \leq \frac{1}{T}\left\|\E(z^\top\xi_{r}z^\top\xi_{s}|\mathcal{M}) - \E(z^\top\xi_{r}z^\top\xi_{s})\right\|_1 \\
			& \leq \frac{1}{T}|\Theta_G^\top z|_1^2\check\tau_{r} \lesssim r^{-d}.
		\end{aligned}
	\end{equation*}
	Since the sequence $(r^{-a(\varsigma-2)/(\varsigma-1)\wedge d})_{r\in\Nn}$ is summable under Assumption~\ref{as:data} (ii), all conditions of \cite{neumann2013central}, Theorem 2.1, are verified, whence $z^\top I_T \xrightarrow{d} z^\top N(0,\Xi_G)$ for every $z\in\R^{|G|}$.
	
	Next,
	\begin{equation*}
		\begin{aligned}
			|II_T|_\infty & = \left|(\hat\Theta - \Theta)_G\left(\frac{1}{\sqrt{T}}\sum_{t=1}^Tu_tx_t\right)\right|_\infty \\
			& \leq \|\hat\Theta_G - \Theta_G\|_\infty\left|\frac{1}{\sqrt{T}}\sum_{t=1}^Tu_tx_t\right|_\infty \\
			& = O_P\left(\frac{S p^{1/\kappa}}{T^{1-1/\kappa}} \vee S\sqrt{\frac{\log p}{T}}\right)O_P\left(\frac{p^{1/\kappa}}{T^{1/2-1/\kappa}} + \sqrt{\log p}\right) \\
			& = o_P(1),
		\end{aligned}
	\end{equation*}
	where the second line follows by $|Ax|_\infty \leq \|A\|_\infty|x|_\infty$, the third line by Proposition~\ref{prop:error_bounds} and the inequality in Eq.~\ref{eq:concentration} under Assumption~\ref{as:data}, and the last under Assumption~\ref{as:clt} (iv). 	Likewise, using $|Ax|_\infty \leq \max_{j,k}|A_{j,k}|_\infty|x|_1$, by Proposition~\ref{prop:error_bounds} and Theorem~\ref{thm:rates}
	\begin{equation*}
		\begin{aligned}
			|III_T|_\infty & = \sqrt{T}|(I - \hat\Theta\hat\Sigma)_G(\hat\beta - \beta)|_\infty\\
			& \leq \sqrt{T}\max_{j\in G}|(I - \hat\Theta\hat\Sigma)_j|_\infty|\hat\beta - \beta|_1 \\
			& = O_P\left(\frac{p^{1/\kappa}}{T^{1/2-1/\kappa}} \vee \sqrt{\log p}\right) O_P\left(\frac{s_\alpha p^{1/\kappa}}{T^{1-1/\kappa}} \vee s_\alpha\sqrt{\frac{\log p}{T}}\right) \\
			& = o_P(1)
		\end{aligned}
	\end{equation*}
	under Assumption~\ref{as:clt} (iv). Lastly, by the Cauchy-Schwartz inequality, under Assumption~\ref{as:clt} (v)
	\begin{equation*}
		\begin{aligned}
			|IV_T|_\infty & \leq \max_{j\in G}|\mathbf{X}\hat\Theta_j^\top|_2\|\mathbf{m}-\mathbf{X}\beta\|_T \\
			& = \max_{j\in G}\sqrt{\hat\Theta_j\hat\Sigma\hat\Theta_j^\top}o_P(1) \\
			& = o_P(1),
		\end{aligned}
	\end{equation*}
	where the last line follows since $\hat\Theta_j$ are consistent for $\Theta_j$ in the $\ell_1$ norm while $\hat\Sigma$ is consistent for $\Sigma$ in the entrywise maximum norm under the maintained assumptions.
\end{proof}

Next, we focus on the HAC estimator based on LASSO residuals. Note that by construction of the precision matrix $\hat\Theta$, its $j^{th}$ row is $\hat\Theta_jx_t = \hat v_{t,j}/\hat\sigma^2_j$, where $\hat v_{t,j}$ is the regression residual from the $j^{th}$ nodewise LASSO regression and $\hat\sigma^2_j$ is the corresponding estimator of the variance of the regression error. Therefore, the HAC estimator based on the LASSO residuals in Eq.~\ref{eq:hacformula} can be written as
\begin{equation*}
\hat\Xi_G = \sum_{|k|<T}K\left(\frac{k}{M_T}\right)\hat \Gamma_{k},
\end{equation*}
where $\hat\Gamma_{k}$ has generic $(j,h)$-entry $\frac{1}{T}\sum_{t=1}^{T-k}\hat u_t\hat u_{t+k}\hat v_{t,j}\hat v_{t+k,h}\hat \sigma_j^{-2}\hat \sigma_h^{-2}.$ 

Similarly, we define
\begin{equation*}
\tilde \Xi_G = \sum_{|k|<T}K\left(\frac{k}{M_T}\right)\tilde\Gamma_{k},
\end{equation*}
where $\tilde\Gamma_{k}$ has generic $(j,h)$-entry $\frac{1}{T}\sum_{t=1}^{T-k}u_tu_{t+k}v_{t,j}v_{t+k,h}\sigma_j^{-2}\sigma_h^{-2}$ and note that the long-run variance $\Xi_G$ has generic $(j,h)$-entry $\E[u_tu_{t+k}v_{t,j}v_{t+k,h}]\sigma_j^{-2}\sigma_h^{-2}$.

\begin{assumption}\label{as:hac_moments}
	Suppose that uniformly over $k\in\Z$ and $j,h\in G$ (i) $\E|u_0u_{k}v_{0,j}v_{k,h}|<\infty$; (ii) $\E|v_{0,j}u_{k}v_{k,h}|^2<\infty$, $\E|u_0u_{k}v_{k,h}|^2<\infty$, $\E|u_0v_{0,j}u_{k}|^2<\infty$, and $\E|u_0v_{0,j}v_{k,h}|^2<\infty$; (iii) $\E|u_0|^{2q}<\infty$ and $\E|v_{0,j}|^{2q}<\infty$ for some $q\geq 1$.
\end{assumption}

\begin{proof}[Proof of Theorem~\ref{prop:hac_residuals}]
	By Proposition~\ref{prop:HAC_true} with $V_t = (u_tv_{t,j}/\sigma^2_j)_{j\in G}$
	\begin{equation}\label{eq:hac_residuals}
	\|\hat\Xi_G - \Xi_G\| \leq \|\hat\Xi_G - \tilde\Xi_G\| + O_P\left(\sqrt{\frac{M_T}{{T}}} + M_T^{-\varsigma} + T^{-(\varsigma\wedge 1)}\right).
	\end{equation}
	Next,
	{\footnotesize
		\begin{equation*}
		\begin{aligned}
		\|\hat\Xi_G - \tilde\Xi_G\| & \leq \sum_{|k|<T}\left|K\left(\frac{k}{M_T}\right)\right|\|\hat{\Gamma}_{k} - \tilde\Gamma_{k}\| \\
		& \leq |G|\sum_{|k|<T}\left|K\left(\frac{k}{M_T}\right)\right|\max_{j,h\in G}\left|\frac{1}{\hat\sigma_j^{2}\hat\sigma_h^{2}T}\sum_{t=1}^{T-k}\hat u_t\hat u_{t+k}\hat v_{t,j}\hat v_{t+k,h} - \frac{1}{\sigma_j^{2}\sigma_h^{2}T}\sum_{t=1}^{T-k}u_tu_{t+k}v_{t,j}v_{t+k,h}\right| \\ 
		& \leq |G|\sum_{|k|<T}\left|K\left(\frac{k}{M_T}\right)\right|\max_{j,h\in G}\frac{1}{\hat\sigma_j^2\hat\sigma_h^2}\left|\frac{1}{T}\sum_{t=1}^{T-k}\hat u_t\hat u_{t+k}\hat v_{t,j}\hat v_{t+k,h} - \frac{1}{T}\sum_{t=1}^{T-k}u_tu_{t+k}v_{t,j}v_{t+k,h}\right| \\
		& \qquad + |G|\max_{j,h\in G}\left|\frac{1}{\hat\sigma_j^2\hat\sigma_h^2} - \frac{1}{\sigma_j^2\sigma_h^2}\right|\sum_{|k|<T}\left|K\left(\frac{k}{M_T}\right)\right|\left|\frac{1}{T}\sum_{t=1}^{T-k}u_tu_{t+k}v_{t,j}v_{t+k,h}\right| \\
		& \triangleq S_T^a + S_T^b.
		\end{aligned}
		\end{equation*}}
	\noindent By Proposition~\ref{thm:error_variance}, since $s_\alpha^2\log p/T\to 0$ and $s_\alpha^\kappa p/T^{4\kappa/5-1}\to 0$, under stated assumptions, we obtain $\max_{j\in G}|\hat\sigma_j^2 - \sigma_j^2| = o_P(1)$, and whence $\max_{j\in G}\hat\sigma_j^{-2} = O_P(1)$.
	Using $\hat a\hat b - ab = (\hat a - a)b + a(\hat b - b) + (\hat a - a)(\hat b - b)$, by Proposition~\ref{thm:error_variance}
	\begin{equation*}
	\begin{aligned}
	S_T^b & = O_P\left(\frac{s_\alpha p^{1/\kappa}}{T^{1-1/\kappa}} \vee s_\alpha\sqrt{\frac{\log p}{T}}\right)\sum_{|k|<T}\left|K\left(\frac{k}{M_T}\right)\right|\max_{j,h\in G}\left|\frac{1}{T}\sum_{t=1}^{T-k}u_tu_{t+k}v_{t,j}v_{t+k,h}\right| \\
	\end{aligned}
	\end{equation*}
	Under Assumptions~\ref{as:hac_moments} and (i)~\ref{as:HAC} (i)
	{\footnotesize \begin{equation*}
		\begin{aligned}
		\E\left[\sum_{|k|<T}\left|K\left(\frac{k}{M_T}\right)\right|\max_{j,h\in G}\left|\frac{1}{T}\sum_{t=1}^{T-k}u_tu_{t+k}v_{t,j}v_{t+k,h}\right|\right] & \leq O(M_T)\sup_{k\in \Z}\sum_{j,h\in G}\E\left|\frac{1}{T}\sum_{t=1}^{T-k}u_tu_{t+k}v_{t,j}v_{t+k,h}\right| \\
		& \leq O(M_T)|G|^2\sup_{k\in \Z}\max_{j,h\in G}\E|u_tu_{t+k}v_{t,j}v_{t+k,h}| \\ &  = O(M_T),
		\end{aligned}
		\end{equation*}}
	and whence $S_T^b = O_P\left(M_T\left(\frac{s_\alpha p^{1/\kappa}}{T^{1-1/\kappa}} \vee s_\alpha\sqrt{\frac{\log p}{T}}\right)\right)$.
	
	Next, we evaluate uniformly over $|k|<T$
	\begin{equation*}
	\begin{aligned}
	& \left|\frac{1}{T}\sum_{t=1}^{T-k}\hat u_t\hat u_{t+k}\hat v_{t,j}\hat v_{t+k,h} - \frac{1}{T}\sum_{t=1}^{T-k}u_tu_{t+k}v_{t,j}v_{t+k,h}\right| \\
	& \leq \left|\frac{1}{T}\sum_{t=1}^{T-k}(\hat u_t\hat v_{t,j} - u_tv_{t,j})u_{t+k}v_{t+k,h}\right| + \left|\frac{1}{T}\sum_{t=1}^{T-k}u_tv_{t,j}(\hat u_{t+k}\hat v_{t+k,h} - u_{t+k}v_{t+k,h})\right| \\
	& \quad + \left|\frac{1}{T}\sum_{t=1}^{T-k}(\hat u_t\hat v_{t,j} - u_tv_{t,j})(\hat u_{t+k}\hat v_{t+k,h} - u_{t+k}v_{t+k,h})\right|  \triangleq I_T + II_T + III_T.
	\end{aligned}
	\end{equation*}
	We bound the first term as
	\begin{equation*}
	\begin{aligned}
	I_T & \leq \left|\frac{1}{T}\sum_{t=1}^{T-k}(\hat u_t - u_t)v_{t,j}u_{t+k}v_{t+k,h}\right| + \left|\frac{1}{T}\sum_{t=1}^{T-k}u_t(\hat v_{t,j} - v_{t,j})u_{t+k}v_{t+k,h}\right| \\
	& \quad + \left|\frac{1}{T}\sum_{t=1}^{T-k}(\hat u_t - u_t)(\hat v_{t,j} - v_{t,j})u_{t+k}v_{t+k,h}\right| \triangleq I_T^a + I_T^b + I_T^c.
	\end{aligned}
	\end{equation*}
	By the Cauchy-Schwartz inequality, under Assumptions of Theorem~\ref{thm:rates} for the sg-LASSO and Assumption~\ref{as:hac_moments} (ii)
	\begin{equation*}
	\begin{aligned}
	I_T^a & = \left|\frac{1}{T}\sum_{t=1}^{T-k}\left(x_t^\top(\beta - \hat\beta) + m_t - x_t^\top\beta\right)v_{t,j}u_{t+k}v_{t+k,h}\right| \\
	& \leq ( \|\mathbf{X}(\hat\beta - \beta)\|_T + \|\mathbf{m}-\mathbf{X}\beta\|_T)\sqrt{\frac{1}{T}\sum_{t=1}^{T-k}v_{t,j}^2u_{t+k}^2v_{t+k,h}^2} \\
	& = O_P\left(\frac{s_\alpha p^{1/\kappa}}{T^{1-1/\kappa}} \vee \sqrt{\frac{s_\alpha\log p}{T}}\right).
	\end{aligned}
	\end{equation*}
	Similarly, under Assumptions of Theorem~\ref{thm:rates} for the nodewise LASSO and Assumption~\ref{as:hac_moments} (ii)
	\begin{equation*}
	\begin{aligned}
	I_T^b & \leq \left(\|\mathbf{X}_{-j}(\hat\gamma_j - \gamma_j)\|_T + o_P(T^{-1/2})\right)\sqrt{\frac{1}{T}\sum_{t=1}^{T-k}u_{t}^2u_{t+k}^2v_{t+k,h}^2} = O_P\left(\frac{S_j p^{1/\kappa}}{T^{1-1/\kappa}} \vee \sqrt{\frac{S_j\log p}{T}}\right).
	\end{aligned}
	\end{equation*}
	Note that for arbitrary $(\xi_t)_{t\in\Z}$ and $q\geq 1$, by Jensen's inequality
	\begin{equation*}
	\begin{aligned}
	\E\left[\max_{t\in[T]}|\xi_t|\right]  \leq \left(\E\left[\max_{t\in[T]}|\xi_t|^q\right]\right)^{1/q} \leq \left(\E\left[\sum_{t=1}^T|\xi_t|^q\right]\right)^{1/q} = T^{1/q}\left(\E|\xi_t|^q\right)^{1/q}.
	\end{aligned}
	\end{equation*}
	Then by the Cauchy-Schwartz inequality under Assumption~\ref{as:hac_moments} (iii) and Theorem~\ref{thm:rates}
	\begin{equation*}
	\begin{aligned}
	I_T^c & \leq (\|\mathbf{X}(\hat\beta - \beta)\|_T + o_P(T^{-1/2}))(\|\mathbf{X}_{-j}(\hat\gamma_j - \gamma_j)\|_T + o_P(T^{-1/2})) \max_{t\in[T]}|u_tv_{t,h}| \\
	& = O_P\left(\frac{s^2p^{2/\kappa}}{T^{2-3/\kappa}} \vee \frac{s\log p}{T^{1-1/\kappa}}\right),
	\end{aligned}
	\end{equation*}
	where we use the fact that $\kappa\leq q$. Therefore, under maintained assumptions
	\begin{equation*}
	I_T = O_P\left(\frac{s p^{1/\kappa}}{T^{1-1/\kappa}} \vee \sqrt{\frac{s\log p}{T}} + \frac{s^2p^{2/\kappa}}{T^{2-3/\kappa}} \vee \frac{s\log p}{T^{1-1/\kappa}}\right)
	\end{equation*}
	and by symmetry
	\begin{equation*}
	II_T = O_P\left(\frac{s p^{1/\kappa}}{T^{1-1/\kappa}} \vee \sqrt{\frac{s\log p}{T}} + \frac{s^2p^{2/\kappa}}{T^{2-3/\kappa}} \vee \frac{s\log p}{T^{1-1/\kappa}}\right).
	\end{equation*}
	Lastly, by the Cauchy-Schwartz inequality
	\begin{equation*}
	\begin{aligned}
	III_T & \leq \sqrt{\frac{1}{T}\sum_{t=1}^{T-k}(\hat u_t\hat v_{t,j} - u_tv_{t,j})^2\frac{1}{T}\sum_{t=1}^{T-k}(\hat u_{t+k}\hat v_{t+k,h} - u_{t+k}v_{t+k,h})^2} \\
	& \leq \sqrt{\frac{1}{T}\sum_{t=1}^{T}(\hat u_t\hat v_{t,j} - u_tv_{t,j})^2\frac{1}{T}\sum_{t=1}^{T}(\hat u_{t}\hat v_{t,h} - u_{t}v_{t,h})^2}. \\
	\end{aligned}
	\end{equation*}
	For each $j\in G$
	\begin{equation*}
	\begin{aligned}	
	\frac{1}{T}\sum_{t=1}^T(\hat u_t\hat v_{t,j} - u_tv_{t,j})^2 & \leq \frac{3}{T}\sum_{t=1}^T|\hat u_t - u_t|^2v_{t,j}^2 + \frac{3}{T}\sum_{t=1}^T|\hat v_{t,j} - v_{t,j}|^2u_t^2 \\
	& \qquad + \frac{3}{T}\sum_{t=1}^T|\hat u_t - u_t|^2|\hat v_{t,j} - v_{t,j}|^2 \\
	& \triangleq III_T^a + III_T^b + III_T^c.
	\end{aligned}
	\end{equation*}
	Since under Assumption~\ref{as:hac_moments} (iii), $\E|v_{t,j}|^{2q}<\infty$ and $\E|u_t|^{2q}<\infty$,
	\begin{equation*}
	\begin{aligned}
	III_T^a & \leq 3\max_{t\in[T]}|v_{t,j}|^2(\|\mathbf{X}(\hat\beta - \beta)\|_T^2 + o_P(T^{-1/2}))= O_P\left(\frac{s_\alpha p^{2/\kappa}}{T^{2-3/\kappa}} \vee \frac{s_\alpha\log p}{T^{1-1/\kappa}}\right)
	\end{aligned}
	\end{equation*}
	and
	\begin{equation*}
	\begin{aligned}
	III_T^b & \leq 3\max_{t\in[T]}|u_t|^2(\|\mathbf{X}_{-j}(\hat\gamma_j - \gamma_j)\|^2_T + o_P(T^{-1/2})) = O_P\left(\frac{S_j p^{2/\kappa}}{T^{2-3/\kappa}} \vee \frac{S_j\log p}{T^{1-1/\kappa}}\right).
	\end{aligned}
	\end{equation*}
	For the last term, since under Assumption~\ref{as:data} (ii), $\sup_{k}\E|X_{t,k}|^{2\tilde q}<\infty$ and $\kappa\geq \tilde q$, by Theorem~\ref{thm:rates}
	\begin{equation*}
	\begin{aligned}
	III_T^c & \leq 3(\|\mathbf{X}(\hat\beta - \beta)\|_T^2 + o_P(T^{-1/2})) \max_{t\in[T]}|X_{t,-j}^\top(\hat\gamma_j - \gamma_j) + m_t - X_t^\top\beta|^2 \\
	& \leq O_P\left(\frac{s_\alpha p^{2/\kappa}}{T^{2-2/\kappa}}\vee \frac{s_\alpha\log p}{T}\right)\left(2\max_{t\in[T]}|X_{t}|_\infty^2|\hat\gamma_j - \gamma_j|_1^2 + 2T\|\mathbf{m} - \mathbf{X}^\top\beta\|_T^2\right) \\
	& = O_P\left(\left(\frac{s_\alpha p^{2/\kappa}}{T^{2-2/\kappa}}\vee \frac{s_\alpha\log p}{T}\right)\left(\frac{S^2p^{2/\kappa}}{T^{2-2/\kappa}} \vee S^2\frac{\log p}{T} \right)(pT)^{1/\kappa}\right) \\
	& = O_P\left(\frac{s^3p^{5/\kappa}}{T^{4-5/\kappa}} + \frac{s^3p^{3/\kappa}\log p}{T^{3-3/\kappa}} + \frac{s^3p^{1/\kappa}\log^2 p}{T^{2-1/\kappa}}\right) \\
	& = O_P\left(\frac{s^3p^{5/\kappa}}{T^{4-5/\kappa}} + \frac{s^3p^{3/\kappa}\log p}{T^{3-3/\kappa}}\right),
	\end{aligned}
	\end{equation*}
	where we use the fact that $\kappa>2$, $s = s_\alpha\vee S$, $s^\kappa p=o(T^{4\kappa/5-1})$, and $s^2\log p/T\to 0$ as $T\to\infty$.
	Then for every $j\in G$
	\begin{equation*}
	\frac{1}{T}\sum_{t=1}^T(\hat u_t\hat v_{t,j} - u_tv_{t,j})^2 = O_P\left(\frac{s p^{2/\kappa}}{T^{2-3/\kappa}} \vee \frac{s\log p}{T^{1-1/\kappa}} + \frac{s^3p^{5/\kappa}}{T^{4-5/\kappa}} + \frac{s^3p^{3/\kappa}\log p}{T^{3-3/\kappa}}\right),
	\end{equation*}
	and whence
	\begin{equation*}
	III_T = O_P\left(\frac{s p^{2/\kappa}}{T^{2-3/\kappa}} \vee \frac{s\log p}{T^{1-1/\kappa}} + \frac{s^3p^{5/\kappa}}{T^{4-5/\kappa}}+ \frac{s^3p^{3/\kappa}\log p}{T^{3-3/\kappa}}\right).
	\end{equation*}
	Therefore, since $\hat\sigma_j^2 \xrightarrow{P}\sigma_j^2$, we obtain
	\begin{equation*}
	\begin{aligned}
	S_T^a & = O_P\left(M_T\left(\frac{s p^{1/\kappa}}{T^{1-1/\kappa}} \vee \sqrt{\frac{s\log p}{T}} + \frac{s^2p^{2/\kappa}}{T^{2-3/\kappa}}\vee \frac{s\log p}{T^{1-1/\kappa}} + \frac{s^3p^{5/\kappa}}{T^{4-5/\kappa}} + \frac{s^3p^{3/\kappa}\log p}{T^{3-3/\kappa}}\right)\right) \\
	& = O_P\left(M_T\left(\frac{s p^{1/\kappa}}{T^{1-1/\kappa}} \vee \sqrt{\frac{s\log p}{T}} + \frac{s^2p^{2/\kappa}}{T^{2-3/\kappa}} + \frac{s^3p^{5/\kappa}}{T^{4-5/\kappa}}\right)\right), \\
	\end{aligned}
	\end{equation*}
	where the last line follows since $s^\kappa p/T^{4\kappa/5-1}=o(1)$. Combining this estimate with previously obtained estimate for $S_T^b$
	\begin{equation*}
	\|\hat\Xi_G - \tilde\Xi_G\| = O_P\left(M_T\left(\frac{s p^{1/\kappa}}{T^{1-1/\kappa}} \vee s\sqrt{\frac{\log p}{T}} + \frac{s^2p^{2/\kappa}}{T^{2-3/\kappa}} + \frac{s^3p^{5/\kappa}}{T^{4-5/\kappa}}\right)\right).
	\end{equation*}
	The result follows from combining this estimate with the estimate in equation (\ref{eq:hac_residuals}).
\end{proof}

\begin{proof}[Proof of Theorem~\ref{thm:tails_polynomial}]
	Suppose first that $p=1$. For $a\in\R$, with some abuse of notation, let $[a]$ denote its integer part. We split partial sums into blocks $V_k = \xi_{(k-1)J + 1} + \dots + \xi_{kJ},k=1,2,\dots,[T/J]$ and $V_{[T/J]+1} = \xi_{[T/J]J+1} + \dots + \xi_T$, where we set $V_{[T/J]+1}=0$ if $T/J$ is an integer. Let $\{U_t:\;t=1,2,\dots, [T/J]+1\}$ be i.i.d.\ random variables drawn from the uniform distribution on $(0,1)$ independently of $\{V_t:\;t=1,2,\dots, [T/J]+1\}$. Put $\mathcal{M}_t = \sigma(V_1,\dots,V_{t-2})$ for every $t=3,\dots,[T/J]+1$. Next, for $t=1,2$, set $V_t^* = V_t$, while for $t\geq 3$, by \cite{dedecker2004coupling}, Lemma 5, there exist random variables $V_t^*=_dV_t$ such that:
	\begin{enumerate}
		\item $V_t^*$ is $\sigma(V_1,\dots, V_{t-2})\vee\sigma(V_t)\vee\sigma(U_t)$-measurable;
		\item $V_t^*\si(V_1,\dots,V_{t-2})$;
		\item $\|V_t - V_t^*\|_1 = \tau(\mathcal{M}_t,V_t)$.
	\end{enumerate}
	It follows from properties 1. and 2. that $(V_{2t}^*)_{t\geq 1}$ and $(V_{2t-1}^*)_{t\geq 1}$ are sequences of independent random variables. Then
	\begin{equation*}
	\begin{aligned}
	\left|\sum_{t=1}^T\xi_t\right| & \leq \left|\sum_{t\geq 1}V_{2t}^*\right| + \left|\sum_{t\geq 1}V_{2t-1}^*\right| + \sum_{t=3}^{[T/J]+1}|V_t - V_t^*| \\
	& \triangleq I_T + II_T + III_T.
	\end{aligned}
	\end{equation*}
	
	By \cite{fuk1971probability}, Corollary 4, there exist constants $c_q^{(j)},j=1,2$ such that
	\begin{equation*}
	\begin{aligned}
	\Pr(I_T \geq x) & \leq \frac{c_q^{(1)}}{x^q}\sum_{t\geq 1}\E|V_{2t}^*|^q + 2\exp\left(-\frac{c_q^{(2)}x^2}{\sum_{t\geq 1}\Var(V_{2t}^*)}\right) \\
	& \leq \frac{c_q^{(1)}}{x^q}\sum_{t\geq 1}\E|V_{2t}|^q + 2\exp\left(-\frac{c_q^{(2)}x^2}{B^2_T}\right),
	\end{aligned}
	\end{equation*}
	where the second inequality follows since $\sum_{t\geq 1}\Var(V_{2t}^*) = \sum_{t\geq 1}\Var(V_{2t})\leq B_T^2$. Similarly
	\begin{equation*}
	\Pr(II_T \geq x) \leq \frac{c_q^{(1)}}{x^q}\sum_{t\geq 1}\E|V_{2t-1}|^q + 2\exp\left(-\frac{c_q^{(2)}x^2}{B^2_T}\right). \\
	\end{equation*}	
	Lastly, by Markov's inequality and property 3.
	\begin{equation*}
	\begin{aligned}
	\Pr\left(III_T \geq x\right) & \leq \frac{1}{x}\sum_{t=3}^{[T/J]+1} \tau(\mathcal{M}_t,V_t) \\
	& \leq \frac{1}{x}\sum_{t=3}^{[T/J]+1} \tau(\mathcal{M}_t,(\xi_{(t-1)J+1},\dots,\xi_{tJ})) \\
	& \leq \frac{1}{x}[T/J]\sup_{t+J+1\leq t_1<\dots<t_J}\tau(\mathcal{M}_t,(\xi_{t_1},\dots,\xi_{t_J})) \\
	& \leq \frac{T}{x}\tau_{J+1},
	\end{aligned}
	\end{equation*}
	where the second inequality follows since the sum is a $1$-Lipschitz function with respect to $|.|_1$-norm and the third since $\mathcal{M}_t$ and $(\xi_{(t-1)J+1},\dots,\xi_{tJ})$ are separated by $J+1$ lags of $(\xi_t)_{t\in\Z}$.
	
	Combining all the estimates together
	\begin{equation*}
	\begin{aligned}
	\Pr\left(\left|\sum_{t=1}^T\xi_t\right|\geq 3x\right) & \leq \Pr(I_T\geq x) + \Pr(II_T \geq x) + \Pr(III_T \geq x) \\
	& \leq \frac{c_q^{(1)}}{x^q}\sum_{t=1}^{[T/J]+1}\E|V_{t}|^q + 4\exp\left(-\frac{c_q^{(2)}x^2}{B^2_T}\right) + \frac{T}{x}\tau_{J+1} \\
	& \leq \frac{c_q^{(1)}}{x^q}J^{q-1}\sum_{t=1}^T\|\xi_t\|_q^q + \frac{T}{x}c(J+1)^{-a} + 4\exp\left(-\frac{c_q^{(2)}x^2}{B^2_T}\right).
	\end{aligned}
	\end{equation*}	
	To balance the first two terms, we shall set $J\sim x^\frac{q-1}{q+a-1}$, in which case we obtain the result under maintained assumptions. The result for $p>1$ follows by the union bound.
\end{proof}

For a stationary process $(\xi_t)_{t\in\Z}$, let
\begin{equation*}
	\gamma_k = \|\E(\xi_k|\mathcal{M}_0) - \E(\xi_k)\|_1
\end{equation*}
be its $L_1$ mixingale coefficient with respect to the canonical filtration $\mathcal{M}_0=\sigma(\xi_0,\xi_{-1},\xi_{-2},\dots)$. Let $\alpha_k$  be the $\alpha$-mixing coefficient and let $Q$ be the quantile function of $|\xi_0|$. The following covariance inequality allows us controlling the autocovariances in terms of the $\tau$-mixing coefficient as well as comparing the latter to the mixingale and the $\alpha$-mixing coefficients. 

\begin{lemma}\label{lemma:covariances}
	Let $(\xi_t)_{t\in\Z}$ be a centered stationary stochastic process with $\|\xi_0\|_q<\infty$ for some $q>2$. Then
	\begin{equation*}
	|\Cov(\xi_0,\xi_t)| \leq \gamma_t^{\frac{q-2}{q-1}}\|\xi_0\|_q^{q/(q-1)}
	\end{equation*}
	and 
	\begin{equation*}
	\gamma_t \leq \tau_t\leq 2\int_0^{2\alpha_k}Q(u)\dx u.
	\end{equation*}
\end{lemma}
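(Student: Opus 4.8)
The statement has two independent pieces: the autocovariance bound and the chain $\gamma_t\le\tau_t\le 2\int_0^{2\alpha_t}Q(u)\,\dx u$. The plan is to prove the covariance bound by conditioning, and the comparison by combining the Kantorovich--Rubinstein duality with a quantile-truncated covariance inequality of Rio type.

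For the covariance bound, since $\xi_0$ is $\mathcal{M}_0$-measurable and the process is centered, the tower property gives $\Cov(\xi_0,\xi_t)=\E[\xi_0\xi_t]=\E[\xi_0 Z_t]$ with $Z_t=\E(\xi_t\mid\mathcal{M}_0)-\E\xi_t$. By construction $\|Z_t\|_1=\gamma_t$, while Jensen's inequality and stationarity give $\|Z_t\|_q\le\|\xi_0\|_q$. I would apply Hölder's inequality with conjugate exponents $q$ and $p=q/(q-1)$ to get $|\Cov(\xi_0,\xi_t)|\le\|\xi_0\|_q\,\|Z_t\|_p$, and then interpolate $\|Z_t\|_p$ between $\|Z_t\|_1$ and $\|Z_t\|_q$ using the log-convexity of $L_r$-norms. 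Solving $1/p=(1-\theta)+\theta/q$ yields $\theta=1/(q-1)$, so $\|Z_t\|_p\le\gamma_t^{(q-2)/(q-1)}\|\xi_0\|_q^{1/(q-1)}$; collecting the powers of $\|\xi_0\|_q$ produces the exponent $q/(q-1)$ and hence the claimed bound. This piece is routine.

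For the comparison, the left inequality is immediate: the identity map lies in $\mathrm{Lip}_1$, so the supremum defining $\tau(\mathcal{M}_0,\xi_t)$ dominates $|\E(\xi_t\mid\mathcal{M}_0)-\E\xi_t|$ pointwise; taking $L_1$-norms and then $j=1$ in the definition of $\tau_t$ gives $\tau_t\ge\gamma_t$. For the right inequality I would first fix a single coordinate and use Kantorovich--Rubinstein duality to identify the inner supremum with the $1$-Wasserstein distance between the conditional law of $\xi_t$ given $\mathcal{M}_0$ and its marginal, which in one dimension equals $\int_\R|\Pr(\xi_t\le x\mid\mathcal{M}_0)-\Pr(\xi_t\le x)|\,\dx x$. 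For each fixed $x$ the expectation of the integrand is a covariance $\Cov(g_x,\mathbf{1}\{\xi_t\le x\})$, where $g_x$ is an $\mathcal{M}_0$-measurable sign with $|g_x|\le 1$. Rio's quantile covariance inequality then bounds this by $2\min(2\alpha_t,\Pr(\xi_t\le x))$, and symmetrically by $2\min(2\alpha_t,\Pr(\xi_t> x))$. Splitting the $x$-integral at the origin, bounding one tail by each estimate, and applying the layer-cake identity $\int_0^a Q(u)\,\dx u=\int_0^\infty\min(a,\Pr(|\xi_0|>s))\,\dx s$ converts the bound into a multiple of $\int_0^{2\alpha_t}Q(u)\,\dx u$.

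The main obstacle is passing from this single-coordinate estimate to the full coefficient $\tau_t$, which is a supremum over blocks $(\xi_{t_1},\dots,\xi_{t_j})$ normalized by $1/j$. Here I would invoke the coupling/telescoping construction of \cite{dedecker2004coupling} to decompose the $\ell_1$-Wasserstein distance of the $j$-dimensional conditional law into a sum of $j$ one-dimensional contributions, one per coordinate $t_i$; bounding each by the single-coordinate estimate and using the monotonicity $\alpha_{t_i}\le\alpha_t$ produces an overall factor $j$, which is exactly cancelled by the $1/j$ normalization. Taking the supremum over $j$ then yields $\tau_t\le 2\int_0^{2\alpha_t}Q(u)\,\dx u$. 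The delicate points are the coordinatewise splitting of the multivariate Wasserstein distance and tracking the constants through Rio's inequality and the layer-cake step so as to land on exactly the factors $2$ and $2\alpha_t$ claimed; for these I would lean on the sharp forms in \cite{dedecker2004coupling} and the Rio covariance inequality rather than the crude combination sketched above.
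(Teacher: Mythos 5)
Your proof of the autocovariance bound is correct but takes a genuinely different route from the paper. You write $\Cov(\xi_0,\xi_t)=\E[\xi_0 Z_t]$ with $Z_t=\E(\xi_t\mid\mathcal{M}_0)$, apply H\"older with exponents $(q,\,q/(q-1))$, and then Lyapunov interpolation $\|Z_t\|_{q/(q-1)}\le\|Z_t\|_1^{(q-2)/(q-1)}\|Z_t\|_q^{1/(q-1)}$, using $\|Z_t\|_1=\gamma_t$ and $\|Z_t\|_q\le\|\xi_0\|_q$ (conditional Jensen plus stationarity); the exponents collect to exactly $\gamma_t^{(q-2)/(q-1)}\|\xi_0\|_q^{q/(q-1)}$. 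The paper instead starts from the quantile covariance inequality of Dedecker and Doukhan (2003, Proposition 1), $|\Cov(\xi_0,\xi_t)|\le\int_0^{\gamma_t}(Q\circ G)(u)\,\dx u$, and then applies H\"older and a change of variables. Your argument is more elementary and fully self-contained, which is a real advantage; what the paper's route buys is the sharper intermediate quantile bound (it exploits the whole tail of $\xi_0$, not just its $L_q$ norm), and that same Dedecker--Doukhan inequality is the workhorse reused in the proof of Theorem~\ref{thm:clt}, so the paper's choice keeps a single tool in play. For the chain $\gamma_t\le\tau_t\le2\int_0^{2\alpha_t}Q(u)\,\dx u$ the paper offers no argument at all --- it cites Dedecker and Doukhan (2003, Lemma 1) and Dedecker and Prieur (2004, Lemma 6) --- whereas you prove the left inequality directly (identity map in $\mathrm{Lip}_1$, then $j=1$ in the definition of $\tau_t$), which is correct, and sketch the right one via Kantorovich--Rubinstein duality and Rio's covariance inequality.

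One caution on that sketch, both points of which you flag yourself. First, decomposing the $\ell_1$-Wasserstein distance of a $j$-dimensional conditional law ``into a sum of $j$ one-dimensional contributions'' is false if read as a splitting into marginal Wasserstein distances (two joint laws with identical marginals can be at positive $W_1$ distance); the Dedecker--Prieur argument conditions sequentially coordinate by coordinate and then uses the subadditivity of $a\mapsto\int_0^aQ_Z(u)\,\dx u=\sup\{\E[Z\mathbf{1}_A]:\Pr(A)\le a\}$ across $|x|_1\le\sum_i|x_i|$, after which the factor $j$ is absorbed by the $1/j$ normalization. Second, your tail-splitting with $2\min(2\alpha_t,\Pr(\xi_t\le x))$ and $2\min(2\alpha_t,\Pr(\xi_t>x))$ loses a factor of two, yielding $2\int_0^{4\alpha_t}Q(u)\,\dx u$ (equivalently $4\int_0^{2\alpha_t}Q(u)\,\dx u$) rather than the stated constant. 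Since you ultimately defer to Dedecker--Prieur for the sharp form --- exactly what the paper does --- this is not a gap in what you rely on, but the from-scratch portion as written would not deliver the constants claimed in the lemma.
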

\begin{proof}
	Let $G$ be the generalized inverse of $x\mapsto\int_0^xQ(u)\dx u$. By \cite{dedecker2003new}, Proposition 1
	\begin{equation*}
	\begin{aligned}
	|\Cov(\xi_0,\xi_t)| & \leq \int_0^{\gamma_t}(Q\circ G)(u)\dx u \\
	& \leq \gamma_t^{\frac{q-2}{q-1}}\left(\int_0^{\|\xi_0\|_1}(Q\circ G)^{q-1}(u)\dx u\right)^{1/(q-1)} \\
	& = \gamma_t^{\frac{q-2}{q-1}}\|\xi_0\|_q^{q/(q-1)},
	\end{aligned}
	\end{equation*}
	where the second line follows by H\"{o}lder's inequality and the last equality by the change of variables
	$\int_0^{\|\xi_0\|_1}(Q\circ G)^{q-1}(u)\dx u$ = $\int_0^1Q^q(u)\dx u$ = $\E|\xi_0|^q.$
	The second statement follows from \cite{dedecker2003new}, Lemma 1 and \cite{dedecker2004coupling}, Lemma 6.
\end{proof}

The following result shows that the variance of partial sums can be controlled provided that the $\tau$-mixing coefficients decline sufficiently fast.
\begin{lemma}\label{lemma:B_T_bound}
	Let $(\xi_t)_{t\in\Z}$ be a centered stationary stochastic process such that $\|\xi_t\|_q<\infty$ for some $q>2$ and $\tau_k=O(k^{-a})$ for some $a>\frac{q-1}{q-2}$. Then
	\begin{equation*}
	\sum_{t=1}^T\sum_{k=1}^T|\Cov(\xi_{t,j},\xi_{k,j})| = O(T).
	\end{equation*}
\end{lemma}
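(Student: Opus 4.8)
The plan is to use stationarity to collapse the double sum into a single sum over lags, and then invoke the covariance inequality of Lemma~\ref{lemma:covariances} to turn the $\tau$-mixing decay into an absolutely summable autocovariance sequence.

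First I would fix the coordinate $j$ and set $c_h := \Cov(\xi_{0,j},\xi_{h,j})$, which by stationarity depends only on the lag $h=|t-k|$. Counting the index pairs $(t,k)\in[T]^2$ with $|t-k|=h$ (there are $T$ of them when $h=0$ and $2(T-h)$ when $1\le h\le T-1$) gives
\begin{equation*}
\sum_{t=1}^T\sum_{k=1}^T|\Cov(\xi_{t,j},\xi_{k,j})| = T|c_0| + 2\sum_{h=1}^{T-1}(T-h)|c_h| \le T\left(|c_0| + 2\sum_{h=1}^\infty|c_h|\right),
\end{equation*}
so it suffices to show that $\sum_{h\ge1}|c_h|<\infty$ (the variance $c_0\le\|\xi_{0,j}\|_q^2$ is finite since $q>2$).

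Next I would apply Lemma~\ref{lemma:covariances}, which bounds $|c_h|\le\gamma_h^{(q-2)/(q-1)}\|\xi_{0,j}\|_q^{q/(q-1)}$ in terms of the $L_1$ mixingale coefficient $\gamma_h$, together with the comparison $\gamma_h\le\tau_h$ proved there. Substituting the assumed polynomial rate $\tau_h=O(h^{-a})$ yields $|c_h|=O\!\left(h^{-a(q-2)/(q-1)}\right)$, and the resulting series converges precisely when $a(q-2)/(q-1)>1$. Since this inequality is equivalent to $a>(q-1)/(q-2)$, it holds exactly under the hypothesis of the lemma, so $\sum_{h\ge1}|c_h|<\infty$ and the double sum is $O(T)$; the bound is uniform in $j$ whenever $\max_j\|\xi_{0,j}\|_q=O(1)$.

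There is no genuine obstacle in this argument, since the analytic content is entirely delegated to Lemma~\ref{lemma:covariances}. The only step that warrants attention is matching the exponents: one must check that the summability threshold $a(q-2)/(q-1)>1$ produced by the covariance bound coincides exactly with the stated assumption $a>(q-1)/(q-2)$, which is what makes the moment--mixing trade-off in the hypothesis sharp for this line of proof.
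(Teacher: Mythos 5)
Your proof is correct and follows essentially the same route as the paper's: both collapse the double sum via stationarity into $T\Var(\xi_{0,j}) + 2\sum_{k=1}^{T-1}(T-k)|\Cov(\xi_{0,j},\xi_{k,j})|$, bound the autocovariances by $\tau_k^{(q-2)/(q-1)}\|\xi_{0,j}\|_q^{q/(q-1)}$ via Lemma~\ref{lemma:covariances}, and conclude from the summability of $\sum_k k^{-a(q-2)/(q-1)}$ under $a>(q-1)/(q-2)$. Your version just spells out the lag-counting and the exponent-matching step a bit more explicitly; there is no substantive difference.
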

\begin{proof}
	Under stationarity
	\begin{equation*}
	\begin{aligned}
	\sum_{t=1}^T\sum_{k=1}^T|\Cov(\xi_{t,j},\xi_{k,j})| & = T\Var(\xi_0) + 2\sum_{k=1}^{T-1}\left(T-k\right)\Cov(\xi_{0},\xi_k) \\
	& \leq T\Var(\xi_0) + 2T\|\xi_{t}\|_q^{q/(q-1)}\sum_{k=1}^{T-1}\tau_k^{\frac{q-2}{q-1}}\\
	& = O(T),
	\end{aligned}
	\end{equation*}	
	where the second line follows by Proposition~\ref{lemma:covariances} and the last since the series $\sum_{k=1}^\infty k^{-a\frac{q-2}{q-1}}$ converges under the maintained assumptions.
\end{proof}

Lastly, we show that in the linear regression setting, the long-run variance for a group of projection coefficients $G\subset[p]$ of fixed size exists under mild conditions. Let $\xi_t=u_t\Theta_Gx_t$, where $\Theta_G$ are rows of the precision matrix $\Theta=\Sigma^{-1}$ corresponding to indices in $G$.
\begin{proposition}\label{prop:long_run}
	Suppose that (i) $(u_tx_t)_{t\in\Z}$ is stationary for every $p\geq 1$; (ii) $\|u_t\|_q<\infty$ and $\max_{j\in[p]}\|x_{t,j}\|_r=O(1)$ for some $q>2r/(r-2)$ and $r>4$; (iii) for every $j\in[p]$, the $\tau$-mixing coefficients of $(u_tx_{t,j})_{t\in\Z}$ are $\tau_k\leq ck^{-a}$ for all $k\geq 0$, where $c>0$ and $a>(\varsigma-1)/(\varsigma-2)$, and $\varsigma = qr/(q+r)$ are some universal constants; (iv) $\|\Theta_G\|_\infty = O(1)$ and $\sup_x\E[|u_t|^2|x_t=x]=O(1)$. Then for every $z\in\R^{|G|}$, the limit
	\begin{equation*}
		\lim_{T\to\infty}\Var\left(\frac{1}{\sqrt{T}}\sum_{t=1}^Tz^\top\xi_t\right)
	\end{equation*}
	exists.
\end{proposition}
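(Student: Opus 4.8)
The plan is to reduce the assertion to the absolute summability of the autocovariances of the scalar stationary process $Y_t \triangleq z^\top\xi_t = z^\top\Theta_G x_t\,u_t$, which is centered since $\E[u_tx_{t,j}]=0$. Writing $\gamma_Y(k)=\Cov(Y_0,Y_k)$ and counting the number of pairs $(t,s)$ with $t-s=k$, stationarity gives
\begin{equation*}
\Var\left(\frac{1}{\sqrt{T}}\sum_{t=1}^T Y_t\right) = \sum_{|k|<T}\left(1-\frac{|k|}{T}\right)\gamma_Y(k).
\end{equation*}
Once I show $\sum_{k\in\Z}|\gamma_Y(k)|<\infty$, each term $(1-|k|/T)\mathbf{1}\{|k|<T\}\gamma_Y(k)$ is bounded by $|\gamma_Y(k)|$ and converges to $\gamma_Y(k)$ as $T\to\infty$, so dominated convergence (with respect to counting measure on $\Z$) yields $\lim_{T\to\infty}\Var(T^{-1/2}\sum_t Y_t)=\sum_{k\in\Z}\gamma_Y(k)$, which is exactly the existence claim.

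First I would record a moment bound. Setting $w=\Theta_G^\top z\in\R^p$ so that $Y_t=\sum_{j=1}^p w_j\,u_tx_{t,j}$, Minkowski's inequality followed by H\"older's inequality with $1/\varsigma=1/q+1/r$ gives
\begin{equation*}
\|Y_0\|_\varsigma \le |w|_1\max_{j\in[p]}\|u_0x_{0,j}\|_\varsigma \le |w|_1\max_{j\in[p]}\|u_0\|_q\|x_{0,j}\|_r = O(1),
\end{equation*}
where $|w|_1=|\Theta_G^\top z|_1\le |z|_1\|\Theta_G\|_\infty=O(1)$ by condition (iv) and the integrability in (ii). Since $q>2r/(r-2)$ forces $\varsigma=qr/(q+r)>2$, we have in particular $\Var(Y_0)=\|Y_0\|_2^2\le\|Y_0\|_\varsigma^2<\infty$, so the $k=0$ term is finite (alternatively $\Var(Y_0)\le\sup_x\E[u_t^2\mid x_t=x]\,z^\top\Theta_G\Sigma\Theta_G^\top z=O(1)$ using (iv)).

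Next I would bound the covariance decay exactly as in the proof of Theorem~\ref{thm:clt}. With $\mathcal{M}=\sigma(\xi_s:\,s\le 0)$, the variable $Y_0$ is $\mathcal{M}$-measurable, so the covariance inequality of \cite{dedecker2003new}, Proposition~1, together with the change-of-variables bookkeeping in the proof of Lemma~\ref{lemma:covariances}, gives
\begin{equation*}
|\gamma_Y(k)| \le \left\|\E(Y_k\mid\mathcal{M})-\E Y_k\right\|_1^{\frac{\varsigma-2}{\varsigma-1}}\|Y_0\|_\varsigma^{\frac{\varsigma}{\varsigma-1}}.
\end{equation*}
Expanding $Y_k=\sum_j w_j u_kx_{k,j}$ and using $\|\E(u_kx_{k,j}\mid\mathcal{M})-\E(u_kx_{k,j})\|_1\le\tau_k\le ck^{-a}$ (condition (iii) and Lemma~\ref{lemma:covariances}) yields $\|\E(Y_k\mid\mathcal{M})-\E Y_k\|_1\le c|w|_1k^{-a}$, whence $|\gamma_Y(k)|\lesssim k^{-a(\varsigma-2)/(\varsigma-1)}$. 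The series $\sum_{k\ge 1}k^{-a(\varsigma-2)/(\varsigma-1)}$ converges precisely when $a(\varsigma-2)/(\varsigma-1)>1$, i.e.\ $a>(\varsigma-1)/(\varsigma-2)$, which is condition (iii); combined with the finite $k=0$ term this gives $\sum_{k\in\Z}|\gamma_Y(k)|<\infty$ and completes the argument.

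I expect the only delicate point to be the passage from the coordinatewise $\tau$-mixing of $(u_tx_{t,j})_{t\in\Z}$ to a usable decay bound for the \emph{aggregated} scalar process $Y_t$: the step $\|\E(Y_k\mid\mathcal{M})-\E Y_k\|_1\le c|w|_1k^{-a}$ rests on $\ell_1/\ell_\infty$ duality across the $p$ coordinates, and it is the uniform control $|w|_1=|\Theta_G^\top z|_1\le|z|_1\|\Theta_G\|_\infty=O(1)$ from condition (iv) — rather than any single-coordinate estimate — that prevents the bound, and hence the limiting variance, from degrading as $p\uparrow\infty$.
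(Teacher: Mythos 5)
Your proof is correct and follows essentially the same route as the paper's: the identical Minkowski--H\"{o}lder moment bound $\|z^\top\xi_0\|_\varsigma=O(1)$ via $|\Theta_G^\top z|_1\le|z|_1\|\Theta_G\|_\infty$, the same covariance inequality from Lemma~\ref{lemma:covariances} giving $|\gamma_Y(k)|\lesssim\tau_k^{(\varsigma-2)/(\varsigma-1)}$, summability from $a>(\varsigma-1)/(\varsigma-2)$, and dominated convergence over the lags. The one step the paper adds, which you only gesture at in your closing remark, is that $\Var(z^\top\xi_0)=\sum_{j,k\in[p]}(z^\top\Theta_G)_j(z^\top\Theta_G)_k\E[u_0^2x_{0,j}x_{0,k}]$ itself \emph{converges} as $p\to\infty$ by the comparison test under (iv) --- pointwise convergence of the terms, not just your uniform-in-$p$ boundedness, being what dominated convergence formally requires when $p$ grows with $T$.
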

\begin{proof}
	Under assumption (i), by H\"{o}lder's inequality, $\max_{j\in[p]}\|u_tx_{t,j}\|_\varsigma = O(1)$ with $\varsigma=qr/(q+r)$, whence
	by the Minkowski inequality and assumption (iv)
	\begin{equation}\label{eq:moment_q}
	\begin{aligned}
	\|z^\top\xi_t\|_{\varsigma} & \leq  \sum_{k\in G}\sum_{j\in[p]}|\Theta_{k,j}|\|u_tx_{t,j}\|_\varsigma \leq |G|\|\Theta_G\|_\infty = O(1).
	\end{aligned}
	\end{equation}
	Since $\varsigma>2$, this shows that $\Var(z^\top\xi_0)$ exists. Moreover,
	\begin{equation*}
	\begin{aligned}
		\Var(z^\top\xi_0) & = z^\top\Theta_G\Var(u_0x_0)\Theta_G^\top z \\
		& = \sum_{j,k\in[p]}(z^\top\Theta_G)_j(z^\top\Theta_G)_k\E[u_0^2x_{0,j}x_{0,k}],
	\end{aligned}
	\end{equation*}
	where the sum converges as $p\to\infty$ by the comparison test under assumption (iv) implying that $\lim_{T\to\infty}\Var(z^\top\xi_0)$ exists. Next, under assumption (i), for every $z\in\R^{|G|}$
	\begin{equation*}
	\Var\left(\frac{1}{\sqrt{T}}\sum_{t=1}^Tz^\top\xi_t\right) = \Var(z^\top\xi_0) + 2\sum_{k=1}^{T-1}\left(1-\frac{k}{T}\right)\Cov(z^\top\xi_0, z^\top\xi_k).
	\end{equation*}
	By Lemma~\ref{lemma:covariances}, we bound covariances by the mixingale coefficient for every $k\geq 1$
	\begin{equation*}
	\begin{aligned}
		|\Cov(z^\top\xi_0, z^\top\xi_k)| & \lesssim \|z^\top\xi_0\|_\varsigma^{\varsigma/(\varsigma-1)}\|\E(z^\top\xi_k|\mathcal{M}_0) - \E(z^\top\xi_k)\|_1^{\frac{\varsigma-2}{\varsigma-1}} \\
		& \lesssim |z^\top\Theta_G|_1^\frac{\varsigma-2}{\varsigma-1}\max_{j\in[p]}\|\E(u_kx_{k,j}|\mathcal{M}_0) - \E(u_kx_{k,j})\|_1^{\frac{\varsigma-2}{\varsigma-1}}\\
		& \lesssim \tau_k^\frac{\varsigma-2}{\varsigma-1}
	\end{aligned}
	\end{equation*}
	where the first inequality follows by Lemma~\ref{lemma:covariances}, the second by equation~(\ref{eq:moment_q}), and the third by Lemma~\ref{lemma:covariances}. Under assumption (iii), $\sum_{k=1}^\infty\tau_k^{(\varsigma-2)/(\varsigma-1)}$ converges, which implies that
	\begin{equation*}
		\sum_{k=1}^\infty|\Cov(z^\top\xi_0, z^\top\xi_k)|<\infty
	\end{equation*}
	by the comparison test. Therefore, by Lebesgue's dominated convergence, this shows that the long run variance
	\begin{equation*}
	\lim_{T\to\infty}\Var\left(\frac{1}{\sqrt{T}}\sum_{t=1}^Tz^\top\xi_t\right)
	\end{equation*}	
	exists.
\end{proof}

\newpage

\bigskip
\begin{center}
	{\large\bf SUPPLEMENTARY MATERIAL}
\end{center}

\begin{description}
	
	\item[Additional results and proofs:] This file contains supplementary results with proofs. \label{supplementary_material}
	
\end{description}
We recall first the convergence rates for the sg-LASSO with weakly dependent data that will be needed throughout the paper from \cite{babii2020midasml}, Corollary 3.1.
\begin{theorem}\label{thm:rates}
	Suppose that Assumptions~\ref{as:data}, \ref{as:covariance}, \ref{as:tuning}, and \ref{as:rates} are satisfied. Then
	\begin{equation*}
	\|\mathbf{X}(\hat\beta - \beta)\|_T^2 = O_P\left(\frac{s_\alpha p^{2/\kappa}}{T^{2-2/\kappa}}\vee\frac{s_\alpha\log p}{T}\right).
	\end{equation*}
	and
	\begin{equation*}
	\Omega(\hat{\beta} - \beta) = O_P\left(\frac{s_\alpha p^{1/\kappa}}{T^{1-1/\kappa}}\vee s_\alpha\sqrt{\frac{\log p}{T}}\right).
	\end{equation*}
\end{theorem}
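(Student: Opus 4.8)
The plan is to follow the standard oracle-inequality route for penalized least squares, adapted to the sparse-group penalty and $\tau$-mixing data, with the two probabilistic inputs supplied by the Fuk-Nagaev inequality of Theorem~\ref{thm:tails_polynomial}. First I would record the basic inequality: since $\hat\beta$ minimizes \eqref{eq:sgl}, comparing its objective value with that at $\beta$ and substituting $\mathbf{y}=\mathbf{m}+\mathbf{u}$ gives
\begin{equation*}
	\|\mathbf{X}(\hat\beta - \beta)\|_T^2 \leq 2\langle \mathbf{u},\mathbf{X}(\hat\beta-\beta)\rangle_T + 2\langle \mathbf{m}-\mathbf{X}\beta,\mathbf{X}(\hat\beta-\beta)\rangle_T + 2\lambda\left(\Omega(\beta) - \Omega(\hat\beta)\right).
\end{equation*}
Write $\Delta = \hat\beta - \beta$. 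The first right-hand term is controlled on the good event on which the empirical score is dominated by the penalty level, i.e.\ $|\mathbf{X}^\top\mathbf{u}/T|_\infty$ and its groupwise $\ell_2$ analogue are at most a fixed fraction of $\lambda$. This is exactly the content of the concentration bound \eqref{eq:concentration} applied to $\xi_t=u_tx_t$: under Assumption~\ref{as:data} these scores have $\varsigma$ finite moments and $\tau$-mixing coefficients decaying at rate $a$, so with the dependence-tails exponent $\kappa$ the choice of $\lambda$ in Assumption~\ref{as:tuning} makes this event hold with probability at least $1-\delta$.

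Second, I would exploit the decomposability of $\Omega$ with respect to the sparse-group support $(S_0,\mathcal{G}_0)$. On the good event, the penalty difference $\Omega(\beta)-\Omega(\hat\beta)$ together with the dominated score confines $\Delta$ to a restricted cone of the form $\Omega(\Delta_{S_0^c}) \lesssim \Omega(\Delta_{S_0})$, up to a term controlled by the misspecification error of Assumption~\ref{as:rates}(i). The effective sparsity $s_\alpha$ enters here through the relation $\Omega(\Delta_{S_0}) \lesssim \sqrt{s_\alpha}\,\|\mathbf{X}\Delta\|_T$, which becomes available once a restricted eigenvalue (compatibility) condition holds on this cone.

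Third --- and this is the main obstacle --- I would establish that such a restricted eigenvalue condition holds for the empirical Gram matrix $\hat\Sigma=\mathbf{X}^\top\mathbf{X}/T$ with high probability. Assumption~\ref{as:covariance} supplies the population version (smallest eigenvalue of $\Sigma$ bounded below by $\gamma$ uniformly in $p$); I would transfer it to $\hat\Sigma$ by bounding the entrywise deviation $\|\hat\Sigma-\Sigma\|_\infty = \max_{j,l}|\langle\mathbf{X}_j,\mathbf{X}_l\rangle_T - \Sigma_{j,l}|$ through a second application of Theorem~\ref{thm:tails_polynomial}, now to the centered products $x_{t,j}x_{t,l}-\E[x_{t,j}x_{t,l}]$. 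These have $r/2>2$ finite moments and $\tau$-mixing coefficients $\tilde\tau_k$ decaying at rate $b$, so the resulting mixture of polynomial and Gaussian tails for $\|\hat\Sigma-\Sigma\|_\infty$ must be shown to be $o(\gamma)$ uniformly over the restricted cone. This is precisely where the rate restrictions $s_\alpha^\mu p^2T^{1-\mu}\to 0$ and $p^2\exp(-cT/s_\alpha^2)\to 0$ in Assumption~\ref{as:rates}(ii) are needed, the exponent $\mu$ being tied to the tail/mixing parameters $(r,b)$ of the products: the polynomial part of the tail forces the first condition, the Gaussian part the second. The delicate point is that heavy tails preclude the usual sub-Gaussian matrix concentration, so the cone must be wide enough to absorb the slower deviation while keeping the empirical compatibility constant bounded away from zero.

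Finally, I would combine the pieces. On the intersection of the two good events, substituting the cone condition and the restricted eigenvalue bound into the basic inequality yields a quadratic inequality in $\|\mathbf{X}\Delta\|_T$; solving it and absorbing the misspecification term $\|\mathbf{m}-\mathbf{X}\beta\|_T^2 = O_P(s_\alpha\lambda^2)$ gives $\|\mathbf{X}\Delta\|_T^2 = O_P(s_\alpha\lambda^2)$, and the cone condition then upgrades this to $\Omega(\Delta) = O_P(s_\alpha\lambda)$. Inserting the value of $\lambda$ from Assumption~\ref{as:tuning}, so that $\lambda^2 \sim p^{2/\kappa}/T^{2-2/\kappa} \vee \log p/T$, delivers the two stated rates.
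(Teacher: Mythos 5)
The paper itself contains no proof of this theorem: it is imported from \cite{babii2020midasml}, Corollary 3.1, as announced at the top of the supplementary material, so there is no in-paper argument to compare against line by line. Your sketch is a correct reconstruction of the standard oracle-inequality route used in that reference --- Fuk-Nagaev control of the score $|\mathbf{X}^\top\mathbf{u}/T|_\infty$ to validate the penalty level of Assumption~\ref{as:tuning}, decomposability of $\Omega$ with the effective sparsity $s_\alpha$, and a restricted-eigenvalue transfer from $\Sigma$ to $\hat\Sigma$ via a second Fuk-Nagaev bound on the products $x_{t,j}x_{t,l}$, which, with $q=r/2$ and decay rate $b$, yields precisely the exponent $\mu = ((b+1)r-2)/(r+2(b-1))$ and explains both rate conditions in Assumption~\ref{as:rates}(ii) --- so your identification of where each assumption enters matches the cited source's argument.
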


Next, we consider the regularized estimator of the variance of the regression error
\begin{equation*}
\hat\sigma^2 = \|\mathbf{y} - \mathbf{X}\hat\beta\|^2_T + \lambda\Omega(\hat\beta),
\end{equation*}
where $\hat\beta$ is the sg-LASSO estimator. While the regularization is not needed to have a consistent variance estimator, the LASSO version of the regularized estimator ($\alpha=1$) is needed to establish the CLT for the debiased sg-LASSO estimator. The following result describes the converges of this variance estimator to its population counterpart $\sigma^2 = \E\|\mathbf{u}\|^2_T$.
\begin{proposition}\label{thm:error_variance}
	Suppose that Assumptions~\ref{as:data}, \ref{as:covariance}, \ref{as:tuning}, and \ref{as:rates} are satisfied and that $(u_t^2)_{t\in\Z}$ has a finite long run variance. Then
	\begin{equation*}
	\hat\sigma^2 = \sigma^2 + O_P\left(\frac{s_\alpha p^{1/\kappa}}{T^{1-1/\kappa}} \vee s_\alpha\sqrt{\frac{\log p}{T}}\right)
	\end{equation*}
	provided that $\frac{s_\alpha p^{1/\kappa}}{T^{1-1/\kappa}} \vee s_\alpha\sqrt{\frac{\log p}{T}}=o(1)$.
\end{proposition}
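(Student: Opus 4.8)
The plan is to expand the regularized variance estimator around the infeasible quantity $\|\mathbf{u}\|_T^2$ and to show that every term other than $\|\mathbf{u}\|_T^2$ is of order at most the claimed rate, while $\|\mathbf{u}\|_T^2$ itself concentrates around $\sigma^2=\E\|\mathbf{u}\|_T^2$ at the parametric rate. Writing $\mathbf{y}-\mathbf{X}\hat\beta=\mathbf{u}+\mathbf{w}$ with $\mathbf{w}=(\mathbf{m}-\mathbf{X}\beta)-\mathbf{X}(\hat\beta-\beta)$, I would first record $\|\mathbf{y}-\mathbf{X}\hat\beta\|_T^2=\|\mathbf{u}\|_T^2+2\langle\mathbf{u},\mathbf{w}\rangle_T+\|\mathbf{w}\|_T^2$, so that $\hat\sigma^2-\sigma^2=(\|\mathbf{u}\|_T^2-\sigma^2)+2\langle\mathbf{u},\mathbf{w}\rangle_T+\|\mathbf{w}\|_T^2+\lambda\Omega(\hat\beta)$, and then bound the four summands in turn.

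For the leading term, the finite long-run variance of $(u_t^2)_{t\in\Z}$ gives $\Var(\frac1T\sum_{t=1}^T u_t^2)=O(T^{-1})$, whence Chebyshev's inequality yields $\|\mathbf{u}\|_T^2-\sigma^2=O_P(T^{-1/2})$; since $s_\alpha\geq1$ and $\log p\geq1$ in the high-dimensional regime this is absorbed into the term $s_\alpha\sqrt{\log p/T}$ of the claimed rate. For the remainder, the triangle inequality together with Assumption~\ref{as:rates}(i) and Theorem~\ref{thm:rates} gives $\|\mathbf{w}\|_T\leq\|\mathbf{m}-\mathbf{X}\beta\|_T+\|\mathbf{X}(\hat\beta-\beta)\|_T=O_P(\sqrt{s_\alpha}\lambda)$, where $\lambda\sim(p/T^{\kappa-1})^{1/\kappa}\vee\sqrt{\log p/T}$ is exactly the per-coordinate factor in the stated rate. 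Hence $\|\mathbf{w}\|_T^2=O_P(s_\alpha\lambda^2)=o_P(s_\alpha\lambda)$ under the proviso $\lambda=o(1)$, and Cauchy-Schwarz bounds the cross term by $\|\mathbf{u}\|_T\|\mathbf{w}\|_T=O_P(1)\cdot O_P(\sqrt{s_\alpha}\lambda)=O_P(\sqrt{s_\alpha}\lambda)$, which lies within the claimed rate because $\sqrt{s_\alpha}\leq s_\alpha$.

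It remains to handle the penalty term, for which I would use $\Omega(\hat\beta)\leq\Omega(\beta)+\Omega(\hat\beta-\beta)$: the second bound of Theorem~\ref{thm:rates} gives $\lambda\Omega(\hat\beta-\beta)=O_P(s_\alpha\lambda^2)=o_P(s_\alpha\lambda)$, while $\lambda\Omega(\beta)=O(\lambda)$ under the (standard) boundedness of $\Omega(\beta)$, and $\lambda\leq s_\alpha\lambda$ is again within the rate. Collecting the four contributions leaves $s_\alpha\lambda=\frac{s_\alpha p^{1/\kappa}}{T^{1-1/\kappa}}\vee s_\alpha\sqrt{\log p/T}$ as the dominant order, which is the assertion.

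The proof is largely bookkeeping once Theorem~\ref{thm:rates} is in hand; the one substantive input is the concentration of $\|\mathbf{u}\|_T^2$ about $\sigma^2$, and this is precisely where the newly imposed finite long-run variance of $(u_t^2)$ enters, converting the raw $\tau$-mixing dependence into an $O_P(T^{-1/2})$ fluctuation. The main obstacle I anticipate is only in the edge regime where $\log p$ or $s_\alpha$ fails to dominate $T^{-1/2}$; should one wish to dispense with the long-run-variance hypothesis, the fallback is to control $\frac1T\sum_t(u_t^2-\E u_t^2)$ directly via the Fuk-Nagaev inequality of Theorem~\ref{thm:tails_polynomial} applied to the centered process $(u_t^2-\E u_t^2)_{t\in\Z}$, at the cost of a slower, tail-dependent rate.
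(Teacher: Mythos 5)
Your proposal is correct and follows essentially the same route as the paper's proof: the identical decomposition of $\hat\sigma^2-\sigma^2$ into the concentration term $\|\mathbf{u}\|_T^2-\sigma^2$ (handled by Chebyshev via the finite long-run variance of $(u_t^2)$), the cross and quadratic misspecification/estimation terms bounded through Theorem~\ref{thm:rates} and Assumption~\ref{as:rates}(i), and the penalty split $\Omega(\hat\beta)\leq\Omega(\hat\beta-\beta)+\Omega(\beta)$. The only (immaterial) difference is in the last step: you invoke $\Omega(\beta)=O(1)$, while the paper implicitly uses the weaker bound $\Omega(\beta)=O(s_\alpha)$, which makes $\lambda\Omega(\beta)$ the dominant contribution and yields exactly the stated rate $s_\alpha\lambda$; either reading delivers the claim.
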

\begin{proof}
	We have
	\begin{equation*}
	\begin{aligned}
	|\hat\sigma^2 - \sigma^2| & = \left|\|\mathbf{u}\|_T^2 + 2\langle \mathbf{u},\mathbf{m} - \mathbf{X}\hat\beta \rangle_T - \|\mathbf{m} - \mathbf{X}\hat\beta\|_T^2 + \lambda\Omega(\hat\beta) - \sigma^2\right| \\
	& \leq |\sigma^2 - \|\mathbf{u}\|^2_T| +  2\|\mathbf{u}\|_T\|\mathbf{m} - \mathbf{X}\hat\beta\|_T + 2\|\mathbf{X}(\hat\beta - \beta)\|_T^2 + 2\|\mathbf{m} - \mathbf{X}\beta\|_T^2 + \lambda\Omega(\hat\beta) \\
	& \triangleq I_T + II_T + III_T + IV_T + V_T.
	\end{aligned}
	\end{equation*}
	By the Chebychev's inequality since the long-run variance exists, for every $\varepsilon>0$
	\begin{equation*}
	\Pr\left(\left|\frac{1}{\sqrt{T}}\sum_{t=1}^T(u_t^2 - \sigma^2)\right|>\varepsilon\right) \leq \frac{1}{\varepsilon^2}\sum_{t\in \Z}\Cov(u_0^2,u_t^2),
	\end{equation*}
	whence $I_T = O_P\left(\frac{1}{\sqrt{T}}\right)$. Therefore, by the triangle inequality and Theorem~\ref{thm:rates}
	\begin{equation*}
	\begin{aligned}
	II_T & = O_P(1)\|\mathbf{m} - \mathbf{X}\hat\beta\|_T \\
	& \leq O_P(1)\left(\|\mathbf{m} - \mathbf{X}\beta\|_T + \|\mathbf{X}(\hat\beta-\beta)\|_T\right) = O_P\left(s_\alpha^{1/2}\lambda + \frac{s_\alpha^{1/2} p^{1/\kappa}}{T^{1-1/\kappa}} \vee \sqrt{\frac{s_\alpha\log p}{T}}\right).
	\end{aligned}
	\end{equation*}
	By Theorem~\ref{thm:rates} we also have
	\begin{equation*}
	III_T + IV_T =O_P\left(\frac{s_\alpha p^{2/\kappa}}{T^{2-2/\kappa}} \vee \frac{s_\alpha\log p}{T} + s_\alpha\lambda^2\right).
	\end{equation*}
	Lastly, another application of Theorem~\ref{thm:rates} gives
	\begin{equation*}
	\begin{aligned}
	V_T & = \lambda\Omega(\hat\beta - \beta) + \lambda\Omega(\beta) \\
	& = O_P\left(\lambda\left(\frac{s_\alpha p^{1/\kappa}}{T^{1-1/\kappa}} \vee s_\alpha\sqrt{\frac{\log p}{T}}\right) + \lambda s_\alpha\right).
	\end{aligned}
	\end{equation*}
	The result follows from combining all estimates together.
\end{proof}

Next, we look at the estimator of the precision matrix. Consider nodewise LASSO regressions in equation (\ref{eq:nodewise_regressions}) for each $j\in[p]$. Put $S=\max_{j\in G}S_j$, where $S_j$ is the support of $\gamma_j$.
\begin{proposition}\label{prop:error_bounds}
	Suppose that Assumptions~\ref{as:data}, \ref{as:covariance}, \ref{as:tuning}, and \ref{as:rates} are satisfied for each nodewise regression $j\in G$ and that $(v_{t,j}^2)_{t\in\Z}$ has a finite long-run variance for each $j\in G$. Then if $S^\kappa pT^{1-\kappa}\to0$ and $S^2\log p/T\to 0$
	\begin{equation*}
	\|\hat\Theta_G - \Theta_G\|_\infty = O_P\left(\frac{S p^{1/\kappa}}{T^{1-1/\kappa}} \vee S\sqrt{\frac{\log p}{T}}\right)
	\end{equation*}
	and 
	\begin{equation*}
	\max_{j\in G}|(I - \hat\Theta\hat\Sigma)_j|_\infty = O_P\left(\frac{p^{1/\kappa}}{T^{1-1/\kappa}} \vee \sqrt{\frac{\log p}{T}}\right).
	\end{equation*}
\end{proposition}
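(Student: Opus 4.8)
The plan is to reduce the statement to the behaviour of the individual nodewise LASSO regressions indexed by $j\in G$ and to feed in the generic sg-LASSO rates of Theorem~\ref{thm:rates} (in the special case $\alpha=1$) together with the variance-consistency bound of Proposition~\ref{thm:error_variance}. Because $G$ has fixed size, every maximum over $j\in G$ below ranges over a fixed number of terms, so it introduces no additional dependence on $p$ and it suffices to control each $j$ separately.

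First I would collect the per-regression ingredients. Each nodewise regression in (\ref{eq:nodewise_regressions}) is an ordinary LASSO with effective sparsity $S_j$, so Theorem~\ref{thm:rates} gives the prediction bound $\|\mathbf{X}_{-j}(\hat\gamma_j-\gamma_j)\|_T^2 = O_P(S_jp^{2/\kappa}/T^{2-2/\kappa}\vee S_j\log p/T)$ and the estimation bound $|\hat\gamma_j-\gamma_j|_1 = O_P(S_jp^{1/\kappa}/T^{1-1/\kappa}\vee S_j\sqrt{\log p/T})$. Since $\hat\sigma_j^2$ is exactly the regularized variance estimator of Proposition~\ref{thm:error_variance} for regression $j$, and $(v_{t,j}^2)_{t\in\Z}$ has finite long-run variance by hypothesis, that proposition yields $\hat\sigma_j^2=\sigma_j^2+O_P(S_jp^{1/\kappa}/T^{1-1/\kappa}\vee S_j\sqrt{\log p/T})$; the conditions $S^\kappa pT^{1-\kappa}\to0$ and $S^2\log p/T\to0$ are precisely what make this remainder $o_P(1)$. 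Under Assumption~\ref{as:covariance} one has $\sigma_j^2=1/\Theta_{j,j}\ge\gamma$ bounded away from zero (and bounded above by $\Sigma_{j,j}$), while the bounded $\ell_1$ norm of the precision-matrix rows (Assumption~\ref{as:clt}(ii), in force whenever this proposition is applied) gives $|\gamma_j|_1=O(1)$. Consequently $\hat\sigma_j^{-2}=O_P(1)$, $|\hat\sigma_j^{-2}-\sigma_j^{-2}|=O_P(|\hat\sigma_j^2-\sigma_j^2|)$, and $|\hat\gamma_j|_1\le|\gamma_j|_1+|\hat\gamma_j-\gamma_j|_1=O_P(1)$.

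For the first bound I would use the decomposition $\hat\Theta_j-\Theta_j=(\hat\sigma_j^{-2}-\sigma_j^{-2})\begin{pmatrix}1 & -\hat\gamma_j^\top\end{pmatrix}-\sigma_j^{-2}\begin{pmatrix}0 & (\hat\gamma_j-\gamma_j)^\top\end{pmatrix}$, so that $|\hat\Theta_j-\Theta_j|_1\le|\hat\sigma_j^{-2}-\sigma_j^{-2}|(1+|\hat\gamma_j|_1)+\sigma_j^{-2}|\hat\gamma_j-\gamma_j|_1$. Both summands are $O_P(S_jp^{1/\kappa}/T^{1-1/\kappa}\vee S_j\sqrt{\log p/T})$ by the previous paragraph, and taking the maximum over the fixed set $j\in G$ with $S=\max_{j\in G}S_j$ delivers the claimed rate for $\|\hat\Theta_G-\Theta_G\|_\infty$. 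The second bound rests on the Fermat (KKT) condition for the nodewise LASSO, $\frac{1}{T}\mathbf{X}_{-j}^\top(\mathbf{X}_j-\mathbf{X}_{-j}\hat\gamma_j)=\lambda_j\hat z_j$ with $\hat z_j\in\partial|\hat\gamma_j|_1$, whence $|\hat z_j|_\infty\le1$ and $\hat z_j^\top\hat\gamma_j=|\hat\gamma_j|_1$. Partitioning $\hat\Sigma$, the off-diagonal block of $\hat\Theta_j\hat\Sigma=\hat\sigma_j^{-2}\begin{pmatrix}1 & -\hat\gamma_j^\top\end{pmatrix}\hat\Sigma$ equals $\hat\sigma_j^{-2}\bigl(\frac{1}{T}\mathbf{X}_{-j}^\top(\mathbf{X}_j-\mathbf{X}_{-j}\hat\gamma_j)\bigr)^\top=\hat\sigma_j^{-2}\lambda_j\hat z_j^\top$, whose $\ell_\infty$ norm is at most $\hat\sigma_j^{-2}\lambda_j$. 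The diagonal entry is $\hat\sigma_j^{-2}(\hat\Sigma_{j,j}-\hat\gamma_j^\top\hat\Sigma_{-j,j})$; splitting $\frac{1}{T}(\mathbf{X}_j-\mathbf{X}_{-j}\hat\gamma_j)^\top\mathbf{X}_j$ into $\|\mathbf{X}_j-\mathbf{X}_{-j}\hat\gamma_j\|_T^2$ plus a cross term that the KKT condition evaluates as $\lambda_j\hat z_j^\top\hat\gamma_j=\lambda_j|\hat\gamma_j|_1$ gives $\hat\Sigma_{j,j}-\hat\gamma_j^\top\hat\Sigma_{-j,j}=\hat\sigma_j^2$, so the diagonal entry is exactly $1$. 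Hence $(I-\hat\Theta\hat\Sigma)_{j,j}=0$ and $|(I-\hat\Theta\hat\Sigma)_j|_\infty\le\hat\sigma_j^{-2}\lambda_j$; since $\hat\sigma_j^{-2}=O_P(1)$ and Assumption~\ref{as:tuning} gives $\lambda_j=O(p^{1/\kappa}/T^{1-1/\kappa}\vee\sqrt{\log p/T})$, the maximum over $j\in G$ is of the stated order.

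I expect the routine part to be the rate bookkeeping, and the two delicate points to be the following. First, the exact identity $(\hat\Theta_j\hat\Sigma)_{j,j}=1$ hinges on the $\lambda_j|\hat\gamma_j|_1$ term built into the definition of $\hat\sigma_j^2$; without it the diagonal would carry a spurious $\lambda_j|\hat\gamma_j|_1$ and the KKT residual would not be cleanly bounded by $\hat\sigma_j^{-2}\lambda_j$. Second, the uniform (in $j\in G$) control of $\hat\sigma_j^{-2}$ away from zero and of $|\hat\gamma_j|_1$ requires both the variance consistency of Proposition~\ref{thm:error_variance} and the boundedness of the precision-matrix row $\ell_1$ norms, since otherwise the leading variance term in the first bound would pick up an extra factor $\sqrt{S_j}$ and break the claimed rate.
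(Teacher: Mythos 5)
Your proposal is correct and follows essentially the same route as the paper's own proof: the same decomposition of $\hat\Theta_j-\Theta_j$ into a variance-estimation term and a $\ell_1$-estimation term (fed by Theorem~\ref{thm:rates} with $\alpha=1$ and Proposition~\ref{thm:error_variance}), and the same KKT argument showing the diagonal entry of $(I-\hat\Theta\hat\Sigma)_j$ vanishes exactly because of the $\lambda_j|\hat\gamma_j|_1$ term in $\hat\sigma_j^2$, leaving only the $\lambda_j|z^*|_\infty/\hat\sigma_j^2$ bound. Your explicit justification that $|\hat\gamma_j|_1=O_P(1)$ and $\sigma_j^{-2}=O(1)$ (via Assumption~\ref{as:clt}(ii) and Assumption~\ref{as:covariance}) is a point the paper leaves implicit, and is a welcome tightening rather than a deviation.
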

\begin{proof}
	By Theorem~\ref{thm:rates} and Proposition~\ref{thm:error_variance} with $\alpha=1$ (corresponding to the LASSO estimator of $\gamma_j$ and $\sigma_j^2$)
	\begin{equation*}
	\begin{aligned}
	\|\hat\Theta_G - \Theta_G\|_\infty & = \max_{j\in G}|\hat\Theta_j - \Theta_j|_1 \\
	& \leq  \max_{j\in G}\left\{|\hat\gamma_j|_1\left|\hat\sigma_j^{-2} - \sigma_j^{-2}\right| + |\hat\gamma_j - \gamma_j|_1|\sigma_j^{-2}| \right\} \\
	& = O_P\left(\frac{Sp^{1/\kappa}}{T^{1-1/\kappa}} \vee S\sqrt{\frac{\log p}{T}}\right),
	\end{aligned}
	\end{equation*}
	where we use the fact that $|G|$ is fixed and that $\hat\sigma_j^2 \xrightarrow{p}\sigma_j^2$ under maintained assumptions.
	
	Second, for each $j\in G$, by Fermat's rule,
	\begin{equation*}
	\mathbf{X}_{-j}^\top(\mathbf{X}_j - \mathbf{X}_{-j}\hat\gamma_j)/T = \lambda_j z^*,\qquad z^*\in\partial|\hat\gamma_j|_1,
	\end{equation*}
	where $\hat\gamma_j^\top z^* = |\hat\gamma_j|_1$ and $|z^*|_\infty\leq 1$. Then
	\begin{equation*}
	\begin{aligned}
	\mathbf{X}_j^\top(\mathbf{X}_j - \mathbf{X}_{-j}\hat\gamma_j)/T & = \|\mathbf{X}_j - \mathbf{X}_{-j}\hat\gamma_j\|_T^2 + \hat\gamma_j^\top\mathbf{X}_{-j}^\top(\mathbf{X}_j - \mathbf{X}_{-j}\hat\gamma_j)/T \\
	& = \|\mathbf{X}_j - \mathbf{X}_{-j}\hat\gamma_j\|_T^2 + \lambda_j\hat\gamma_j^\top z^* = \hat\sigma_j^2,
	\end{aligned}
	\end{equation*}	
	and whence
	\begin{equation*}
	\begin{aligned}
	|(I - \hat\Theta\hat\Sigma)_j|_\infty & = |I_j - (\mathbf{X}_j - \mathbf{X}_{-j}\hat\gamma_j)^\top\mathbf{X}/(T\hat\sigma_j^2)|_\infty \\
	& = \max\left\{|1 - \mathbf{X}_j^\top(\mathbf{X}_j - \mathbf{X}_{-j}\hat\gamma_j)/(T\hat\sigma_j^2)|, |\mathbf{X}_{-j}^\top(\mathbf{X}_j - \mathbf{X}_{-j}\hat\gamma_j)/(T\hat\sigma_j^2)|_\infty \right\} \\
	& = \lambda_j|z^*|_\infty/\hat\sigma_j^2 = O_P\left(\frac{p^{1/\kappa}}{T^{1-1/\kappa}} \vee \sqrt{\frac{\log p}{T}}\right),
	\end{aligned}
	\end{equation*}
	where the last line follows since $\hat\sigma_j^{-2} = O_P(1)$ and $|z^*|_\infty\leq 1$. The conclusion follows from the fact that $|G|$ is fixed.
\end{proof}

Next, we first derive the non-asymptotic Frobenius norm bound with explicit constants for a generic HAC estimator of the sample mean that holds uniformly over a class of distributions. We focus on the $p$-dimensional centered stochastic process $(V_t)_{t\in\Z}$ and put
\begin{equation*}
\Xi = \sum_{k\in\Z}\Gamma_{k}\qquad \text{and}\qquad \tilde \Xi = \sum_{|k|<T}K\left(\frac{k}{M_T}\right)\tilde\Gamma_{k},
\end{equation*}
where $\Gamma_{k}=\E[V_tV_{t+k}^\top]$ and $\tilde\Gamma_{k}=\frac{1}{T}\sum_{t=1}^{T-k}V_tV_{t+k}^\top.$ Put also $\Gamma = (\Gamma_{k})_{k\in\Z}$ and let $\langle.,.\rangle$ be the Frobenius inner product with corresponding Frobenius norm $\|.\|$. The following assumption describes the relevant class of distributions and kernel functions.
\begin{assumption}\label{as:HAC}
	Suppose that (i) $K:\R\to[-1,1]$ is a Riemann integrable function such that $K(0)=1$; (ii) there exists some $\varepsilon,\varsigma>0$ such that $|K(0) - K(x)| \leq L|x|^\varsigma$ for all $|x|<\varepsilon$; (iii) $(V_t)_{t\in\Z}$ is fourth-order stationary; (iv) $\Gamma\in \mathcal{G}(\varsigma,D_1,D_2)$, where
	\begin{equation*}
	\mathcal{G}(\varsigma,D_1,D_2) = \left\{\sum_{k\in\Z}|k|^\varsigma\|\Gamma_{k}\|\leq D_1,\quad \sup_{k\in\Z}\sum_{l\in\Z}\sum_{t\in\Z}\sum_{j,h\in [p]}|\Cov(V_{0,j}V_{k,h},V_{t,j}V_{t+l,h})| \leq D_2 \right\}
	\end{equation*}
	for some $D_1,D_2>0$.
\end{assumption}
Condition (ii) describes the smoothness (or order) of the kernel in the neighborhood of zero. $\varsigma=1$ for the Bartlett kernel and $\varsigma=2$ for the Parzen, Tukey-Hanning, and Quadratic spectral kernels, see \cite{andrews1991heteroskedasticity}. Since the bias of the HAC estimator is limited by the order of the kernel, it is typically not recommended to use the Bartlett kernel in practice. Higher-order kernels with $\varsigma>2$ do not ensure the positive definiteness of the HAC estimator and require additional spectral regularization, see \cite{politis2011higher}. Condition (iv) describes the class of autocovariances that vanish rapidly enough. Note that if (iv) holds for some $\bar\varsigma$, then it also holds for every $\varsigma<\bar\varsigma$ and that if (ii) holds for some $\tilde\varsigma>\varsigma$, then it also holds for $\tilde\varsigma=\varsigma$. The covariance condition in (iv) can be justified under more primitive moment and summability conditions imposed on $L_1$-mixingale/$\tau$-mixing coefficients, see Proposition~\ref{lemma:covariances} and \cite{andrews1991heteroskedasticity}, Lemma 1. The following result gives a nonasymptotic risk bound uniformly over the class $\mathcal{G}$ and corresponds to the asymptotic convergence rates for the spectral density evaluated at zero derived in \cite{parzen1957consistent}.

\begin{proposition}\label{prop:HAC_true}
	Suppose that Assumption~\ref{as:HAC} is satisfied. Then
	\begin{equation*}
	\sup_{\Gamma\in\mathcal{G}(\varsigma,D_1,D_2)}\E\|\tilde\Xi - \Xi\|^2 \leq C_1\frac{M_T}{T} + C_2M_T^{-2\varsigma} + C_3T^{-2(\varsigma\wedge1)},
	\end{equation*}
	where $C_1 = D_2\left(\int|K(u)|\dx u + o(1)\right)$, $C_2 = 2\left(D_1L + \frac{2D_1}{\varepsilon^\varsigma}\right)^2$, and $C_3 = 2D_1^2$.
\end{proposition}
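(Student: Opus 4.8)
The plan is to exploit the exact bias--variance decomposition afforded by the Frobenius inner product. Writing $\tilde\Xi - \Xi = (\tilde\Xi - \E\tilde\Xi) + (\E\tilde\Xi - \Xi)$ and noting that $\E\tilde\Xi - \Xi$ is deterministic while $\E(\tilde\Xi - \E\tilde\Xi) = 0$, the cross term drops out in expectation, so that
\[
\E\|\tilde\Xi - \Xi\|^2 = \E\|\tilde\Xi - \E\tilde\Xi\|^2 + \|\E\tilde\Xi - \Xi\|^2 .
\]
It then suffices to show that the variance term is $O(M_T/T)$ and the squared bias is $O\bigl(M_T^{-2\varsigma} + T^{-2(\varsigma\wedge 1)}\bigr)$, with all implied constants depending only on $D_1, D_2, L, \varepsilon, \varsigma$ and $K$, so that the bound is uniform over $\mathcal{G}(\varsigma,D_1,D_2)$.

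For the bias I would first use second-order stationarity to write $\E\tilde\Gamma_k = (1 - |k|/T)\Gamma_k$, giving
\[
\E\tilde\Xi - \Xi = \sum_{|k|<T}\Bigl[K\bigl(\tfrac{k}{M_T}\bigr) - 1\Bigr]\Gamma_k - \sum_{|k|<T}K\bigl(\tfrac{k}{M_T}\bigr)\frac{|k|}{T}\Gamma_k - \sum_{|k|\geq T}\Gamma_k .
\]
The first (kernel-smoothing) sum I would control with the smoothness condition $|K(0)-K(x)|\le L|x|^\varsigma$ for $|x|<\varepsilon$, combined with $|K|\le 1$ to cover $|x|\ge\varepsilon$ via $|K(x)-1|\le (2/\varepsilon^\varsigma)|x|^\varsigma$; each summand is then at most $(L + 2/\varepsilon^\varsigma)|k|^\varsigma M_T^{-\varsigma}\|\Gamma_k\|$, and the $\sum_k |k|^\varsigma\|\Gamma_k\|\le D_1$ part of $\mathcal{G}$ bounds the whole sum by $(L + 2/\varepsilon^\varsigma)D_1 M_T^{-\varsigma}$. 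The remaining finite-sample and truncation sums I would each bound by $D_1 T^{-(\varsigma\wedge 1)}$, using $|k|\le |k|^\varsigma$ on $\{|k|<T\}$ when $\varsigma\ge 1$ and the interpolation $|k| = |k|^\varsigma |k|^{1-\varsigma}\le |k|^\varsigma T^{1-\varsigma}$ when $\varsigma<1$ for the finite-sample term, and $\|\Gamma_k\|\le |k|^\varsigma\|\Gamma_k\|/T^\varsigma$ on $\{|k|\ge T\}$ for the tail, once more invoking $D_1$. Squaring and collecting produces the $C_2 M_T^{-2\varsigma}$ and $C_3 T^{-2(\varsigma\wedge 1)}$ terms.

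The variance term is the crux and the step I expect to be the main obstacle. Expanding the Frobenius norm gives
\[
\E\|\tilde\Xi - \E\tilde\Xi\|^2 = \sum_{|k|<T}\sum_{|l|<T} K\bigl(\tfrac{k}{M_T}\bigr) K\bigl(\tfrac{l}{M_T}\bigr) \frac{1}{T^2}\sum_{t,s}\sum_{j,h\in[p]}\Cov\bigl(V_{t,j}V_{t+k,h},\, V_{s,j}V_{s+l,h}\bigr).
\]
Here fourth-order stationarity lets me reduce the double time sum to a single lag $\tau = s-t$; for fixed $(k,l)$ the number of index pairs with $s-t=\tau$ is $O(T)$, which turns the $T^{-2}$ prefactor into $T^{-1}$ and leaves $\frac{1}{T}\sum_\tau \sum_{j,h}|\Cov(V_{0,j}V_{k,h}, V_{\tau,j}V_{\tau+l,h})|$ after passing to absolute values. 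The decisive point is that, after dropping the kernel weight on $l$ via $|K|\le 1$, the summability bound $\sup_k \sum_l \sum_t \sum_{j,h}|\Cov(V_{0,j}V_{k,h}, V_{t,j}V_{t+l,h})|\le D_2$ from $\mathcal{G}$ collapses the sum over $l$, $\tau$ and the coordinate pair to at most $D_2$ for each fixed $k$. This yields $\E\|\tilde\Xi - \E\tilde\Xi\|^2 \le \frac{D_2}{T}\sum_{|k|<T}|K(k/M_T)|$, and I would finish by treating $\sum_{|k|<T}|K(k/M_T)|$ as a Riemann sum equal to $M_T\bigl(\int |K(u)|\,\dx u + o(1)\bigr)$, where Riemann integrability of $K$ and $K(0)=1$ enter, delivering the $C_1 M_T/T$ bound with the approximation error absorbed into the $o(1)$.

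The delicate aspects are thus the exact accounting of the fourth-order covariance sum so that precisely one factor of $D_2$ is spent per lag $k$ --- the summability hypothesis in Assumption~\ref{as:HAC} is stated in exactly the form that makes this possible once the kernel weights are removed --- and the uniformity over the class $\mathcal{G}(\varsigma,D_1,D_2)$, which holds because every bound above is expressed solely through $D_1$, $D_2$ and the kernel constants. The only genuine case split is the $\varsigma<1$ versus $\varsigma\ge 1$ dichotomy in the finite-sample bias, which is what generates the $\varsigma\wedge 1$ exponent in the final rate.
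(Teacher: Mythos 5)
Your proposal follows essentially the same route as the paper's proof: the exact bias--variance split with vanishing cross term (which the paper uses implicitly when it combines its two estimates, since its stated $C_1$ carries no extra factor of $2$), the same pointwise kernel bound $|K(k/M_T)-1|\le (L+2/\varepsilon^{\varsigma})(|k|/M_T)^{\varsigma}$ obtained by splitting at $|k|/M_T=\varepsilon$, the same stationarity reduction of the double time sum at cost of one factor of $T$, the same use of $|K|\le 1$ on one kernel factor together with the $D_2$-summability to collapse the sums over $l$, $\tau$ and $(j,h)$, and the same Riemann-sum step $\sum_{|k|<T}|K(k/M_T)| = M_T\left(\int|K(u)|\,\mathrm{d}u + o(1)\right)$. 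The one discrepancy is in the constants: you bound the finite-sample term $\frac{1}{T}\sum_{|k|<T}|k|\,\|\Gamma_k\|$ and the truncation term $\sum_{|k|\ge T}\|\Gamma_k\|$ \emph{separately}, spending the budget $D_1$ twice and arriving at $2D_1T^{-(\varsigma\wedge 1)}$, which after squaring yields $8D_1^2$ on the last term rather than the stated $C_3=2D_1^2$. The fix, which is what the paper does, is to note that the two sums run over disjoint lag ranges, so their common majorant gives $\frac{1}{T^{\varsigma\wedge 1}}\left(\sum_{|k|<T}|k|^{\varsigma}\|\Gamma_k\| + \sum_{|k|\ge T}|k|^{\varsigma}\|\Gamma_k\|\right)\le D_1 T^{-(\varsigma\wedge 1)}$, a single application of the $D_1$ bound; with that change your argument reproduces the proposition's exact constants.
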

\begin{proof}	
	By the triangle inequality, under Assumption~\ref{as:HAC} (i)
	\begin{equation*}
	\begin{aligned}
	\|\E[\tilde\Xi] - \Xi\| & = \left\|\sum_{|k|<T}K\left(\frac{k}{M_T}\right)\frac{T-k}{T}\Gamma_{k} - \sum_{k\in\Z}\Gamma_{k}\right\| \\
	& \leq \sum_{|k|< T}\left|K\left(\frac{k}{M_T}\right) - K(0)\right|\|\Gamma_{k}\| + \frac{1}{T}\sum_{|k|<T}|k|\|\Gamma_{k}\| + \sum_{|k|\geq T}\|\Gamma_{k}\| \\
	& \triangleq I_T + II_T + III_T.
	\end{aligned}
	\end{equation*}
	For the first term, we obtain 
	\begin{equation*}
	\begin{aligned}
	I_T & =  \sum_{|k|< \varepsilon M_T}\left|K(0) - K\left(\frac{k}{M_T}\right)\right|\|\Gamma_{k}\| + \sum_{\varepsilon M_T\leq |k|< T}\left|K\left(\frac{k}{M_T}\right) - K(0)\right|\|\Gamma_{k}\| \\
	& \leq LM_T^{-\varsigma}\sum_{|k|<\varepsilon M_T}|k|^\varsigma\|\Gamma_{k}\| + 2\sum_{\varepsilon M_T\leq |k|<T}\|\Gamma_{k}\| \\
	& \leq \frac{D_1L}{M_T^\varsigma} + \frac{2}{\varepsilon^\varsigma M_T^\varsigma}\sum_{\varepsilon M_T\leq |k|<T}|k|^\varsigma\|\Gamma_{k}\| \\
	& \leq \frac{D_1L}{M_T^\varsigma} + \frac{2D_1}{\varepsilon^\varsigma M_T^\varsigma},
	\end{aligned}
	\end{equation*}
	where the second sum is defined to be zero if $T\leq \varepsilon M_T$, the second line follows under Assumption~\ref{as:HAC} (i)-(ii) and the last two under Assumption~\ref{as:HAC} (iii). Next, if $\varsigma\geq 1$,
	\begin{equation*}
	\begin{aligned}
	\sum_{|k|<T}|k|\|\Gamma_{k}\| \leq \sum_{|k|<T}|k|^\varsigma\|\Gamma_{k}\|,
	\end{aligned}
	\end{equation*}
	while if $\varsigma\in(0,1)$
	\begin{equation*}
	\sum_{|k|<T}|k|\|\Gamma_{k}\| \leq T^{1-\varsigma}\sum_{|k|<T}|k|^\varsigma\|\Gamma_{k}\|.
	\end{equation*}
	Therefore, since $\sum_{|k|\geq T}\|\Gamma_{k}\| \leq T^{-\varsigma}\sum_{|k|\geq T}|k|^\varsigma\|\Gamma_{k}\|$, under Assumption~\ref{as:HAC} (iv)
	\begin{equation*}
	\begin{aligned}
	II_T + III_T &\leq \begin{cases}
	\frac{D_1}{T} & \varsigma\geq 1 \\
	\frac{D_1}{T^\varsigma} & \varsigma\in(0,1)
	\end{cases} \\
	& = \frac{D_1}{T^{\varsigma\wedge 1}}.
	\end{aligned}
	\end{equation*}
	This shows that
	\begin{equation}\label{eq:hac_1}
	\|\E[\tilde\Xi] - \Xi\| \leq  \frac{D_1L}{M_T^\varsigma} + \frac{2D_1}{\varepsilon^\varsigma M_T^\varsigma} + \frac{D_1}{T^{\varsigma\wedge1}}.
	\end{equation}
	
	\noindent	Next, under Assumption~\ref{as:HAC} (i)
	\begin{equation*}
	\begin{aligned}
	\E\|\tilde\Xi - \E[\tilde\Xi]\|^2 & = \sum_{|k|<T}\sum_{|l|<T}K\left(\frac{k}{M_T}\right)K\left(\frac{l}{M_T}\right)\E\left\langle\tilde\Gamma_k - \E\tilde\Gamma_k, \tilde\Gamma_l - \E\tilde\Gamma_l\right\rangle \\
	& \leq \sum_{|k|<T}\left|K\left(\frac{k}{M_T}\right)\right| \sup_{|k|<T}\sum_{|l|<T}\left|\E\left\langle\tilde\Gamma_k - \E\tilde\Gamma_k, \tilde\Gamma_l - \E\tilde\Gamma_l\right\rangle\right|,
	\end{aligned}
	\end{equation*}
	where under Assumptions~\ref{as:HAC} (iii)
	\begin{equation*}
	\begin{aligned}
	T\left|\E\left\langle\tilde\Gamma_k - \E\tilde\Gamma_k, \tilde\Gamma_l - \E\tilde\Gamma_l\right\rangle\right| & \leq \frac{1}{T}\sum_{t=1}^{T-k}\sum_{r=1}^{T-l}\sum_{j,h\in [p]}\left|\Cov(V_{t,j}V_{t+k,h},V_{r,j}V_{r+l,h})\right|\\
	& \leq \sum_{t\in\Z}\sum_{j,h\in [p]}|\Cov(V_{0,j}V_{k,h},V_{t,j}V_{t+l,h})|.
	\end{aligned}
	\end{equation*}
	Therefore, under Assumptions~\ref{as:HAC} (i), (iv)
	\begin{equation}\label{eq:hac_2}
	\begin{aligned}
	\E\|\tilde\Xi - \E[\tilde\Xi]\|^2 \leq M_T\left(\int|K(u)|\dx u  + o(1)\right)\frac{D_2}{T} .
	\end{aligned}
	\end{equation}
	The result follows from combining estimates in equations (\ref{eq:hac_1}) and (\ref{eq:hac_2}).
\end{proof}

\clearpage

{\scriptsize
	\begin{longtable}{rlrrr}
		\hline
		& News topic & Meta topic & LASSO & sg-LASSO \\ 
		\hline
		1 & Accounting & Asset Managers \& I-Banks &  &  \\ 
		2 & Acquired investment banks & Asset Managers \& I-Banks & $\checkmark$ & $\checkmark$ \\ 
		3 & Activists & Activism/Language &  &  \\ 
		4 & Aerospace/defense & Trans/Retail/Local Politics & $\checkmark$ & $\checkmark$ \\ 
		5 & Agreement reached & Negotiations &  &  \\ 
		6 & Agriculture & Oil \& Mining &  &  \\ 
		7 & Airlines & Trans/Retail/Local Politics &  &  \\ 
		8 & Announce plan & Activism/Language &  &  \\ 
		9 & Arts & Social/Cultural &  &  \\ 
		10 & Automotive & Trans/Retail/Local Politics &  &  \\ 
		11 & Bank loans & Banks &  &  \\ 
		12 & Bankruptcy & Buyouts \& Bankruptcy &  &  \\ 
		13 & Bear/bull market & Financial Markets & $\checkmark$ & $\checkmark$ \\ 
		14 & Biology/chemistry/physics & Science/Language &  &  \\ 
		15 & Bond yields & Financial Markets &  &  \\ 
		16 & Broadcasting & Entertainment & $\checkmark$ & $\checkmark$ \\ 
		17 & Buffett & Activism/Language &  &  \\ 
		18 & Bush/Obama/Trump & Leaders &  &  \\ 
		19 & C-suite & Management &  &  \\ 
		20 & Cable & Industry &  &  \\ 
		21 & California & Trans/Retail/Local Politics &  &  \\ 
		22 & Canada/South Africa & International Affairs &  &  \\ 
		23 & Casinos & Industry & $\checkmark$ &  \\ 
		24 & Challenges & Challenges &  &  \\ 
		25 & Changes & Challenges &  &  \\ 
		26 & Chemicals/paper & Industry &  &  \\ 
		27 & China & International Affairs &  &  \\ 
		28 & Clintons & Leaders &  &  \\ 
		29 & Committees & Negotiations &  &  \\ 
		30 & Commodities & Financial Markets &  &  \\ 
		31 & Company spokesperson & Negotiations &  &  \\ 
		32 & Competition & Industry &  &  \\ 
		33 & Computers & Technology &  &  \\ 
		34 & Connecticut & Management &  &  \\ 
		35 & Control stakes & Buyouts \& Bankruptcy & $\checkmark$ &  \\ 
		36 & Convertible/preferred & Buyouts \& Bankruptcy &  &  \\ 
		37 & Corporate governance & Buyouts \& Bankruptcy & $\checkmark$ & $\checkmark$ \\ 
		38 & Corrections/amplifications & Activism/Language &  &  \\ 
		39 & Couriers & Industry &  &  \\ 
		40 & Courts & Courts & $\checkmark$ &  \\ 
		41 & Credit cards & Industry &  &  \\ 
		42 & Credit ratings & Banks & $\checkmark$ & $\checkmark$ \\ 
		43 & Cultural life & Social/Cultural &  &  \\ 
		44 & Currencies/metals & Financial Markets &  &  \\ 
		45 & Disease & Trans/Retail/Local Politics & $\checkmark$ & $\checkmark$ \\ 
		46 & Drexel & Buyouts \& Bankruptcy & $\checkmark$ &  \\ 
		47 & Earnings & Corporate Earnings &  &  \\ 
		48 & Earnings forecasts & Corporate Earnings &  &  \\ 
		49 & Earnings losses & Corporate Earnings & $\checkmark$ &  \\ 
		50 & Economic growth & Economic Growth &  &  \\ 
		51 & Economic ideology & Social/Cultural &  &  \\ 
		52 & Elections & Leaders & $\checkmark$ & $\checkmark$ \\ 
		53 & Electronics & Technology & $\checkmark$ & $\checkmark$ \\ 
		54 & Environment & Government & $\checkmark$ &  \\ 
		55 & European politics & Leaders & $\checkmark$ &  \\ 
		56 & European sovereign debt & Economic Growth & $\checkmark$ & $\checkmark$ \\ 
		57 & Exchanges/composites & Financial Markets &  &  \\ 
		58 & Executive pay & Labor/income &  &  \\ 
		59 & Fast food & Industry & $\checkmark$ &  \\ 
		60 & Federal Reserve & Economic Growth & $\checkmark$ & $\checkmark$ \\ 
		61 & Fees & Labor/income &  &  \\ 
		62 & Financial crisis & Banks & $\checkmark$ & $\checkmark$ \\ 
		63 & Financial reports & Corporate Earnings &  &  \\ 
		64 & Foods/consumer goods & Industry &  &  \\ 
		65 & France/Italy & International Affairs &  &  \\ 
		66 & Futures/indices & Activism/Language &  &  \\ 
		67 & Gender issues & Social/Cultural &  &  \\ 
		68 & Germany & International Affairs & $\checkmark$ & $\checkmark$ \\ 
		69 & Government budgets & Labor/income &  &  \\ 
		70 & Health insurance & Labor/income &  &  \\ 
		71 & Humor/language & Social/Cultural &  &  \\ 
		72 & Immigration & Social/Cultural &  &  \\ 
		73 & Indictments & Courts &  &  \\ 
		74 & Insurance & Industry &  &  \\ 
		75 & International exchanges & Financial Markets &  &  \\ 
		76 & Internet & Technology &  &  \\ 
		77 & Investment banking & Asset Managers \& I-Banks &  &  \\ 
		78 & IPOs & Financial Markets &  &  \\ 
		79 & Iraq & Terrorism/Mideast & $\checkmark$ &  \\ 
		80 & Japan & International Affairs & $\checkmark$ &  \\ 
		81 & Job cuts & Labor/income &  &  \\ 
		82 & Justice Department & Courts &  &  \\ 
		83 & Key role & Challenges &  &  \\ 
		84 & Latin America & International Affairs & $\checkmark$ & $\checkmark$ \\ 
		85 & Lawsuits & Courts &  &  \\ 
		86 & Long/short term & Challenges &  &  \\ 
		87 & Luxury/beverages & Industry &  &  \\ 
		88 & M\&A & Buyouts \& Bankruptcy & $\checkmark$ &  \\ 
		89 & Machinery & Oil \& Mining &  &  \\ 
		90 & Macroeconomic data & Economic Growth &  &  \\ 
		91 & Major concerns & Activism/Language &  &  \\ 
		92 & Management changes & Management & $\checkmark$ &  \\ 
		93 & Marketing & Entertainment & $\checkmark$ &  \\ 
		94 & Mexico & Activism/Language & $\checkmark$ &  \\ 
		95 & Microchips & Technology &  &  \\ 
		96 & Mid-level executives & Management &  &  \\ 
		97 & Mid-size cities & Trans/Retail/Local Politics &  &  \\ 
		98 & Middle east & Terrorism/Mideast & $\checkmark$ &  \\ 
		99 & Mining & Oil \& Mining &  &  \\ 
		100 & Mobile devices & Technology &  &  \\ 
		101 & Mortgages & Banks & $\checkmark$ &  \\ 
		102 & Movie industry & Entertainment & $\checkmark$ &  \\ 
		103 & Music industry & Entertainment &  &  \\ 
		104 & Mutual funds & Asset Managers \& I-Banks &  &  \\ 
		105 & NASD & Asset Managers \& I-Banks & $\checkmark$ & $\checkmark$ \\ 
		106 & National security & Government & $\checkmark$ &  \\ 
		107 & Natural disasters & Trans/Retail/Local Politics & $\checkmark$ & $\checkmark$ \\ 
		108 & Negotiations & Negotiations &  &  \\ 
		109 & News conference & Negotiations &  &  \\ 
		110 & Nonperforming loans & Banks & $\checkmark$ &  \\ 
		111 & Nuclear/North Korea & Terrorism/Mideast & $\checkmark$ &  \\ 
		112 & NY politics & Trans/Retail/Local Politics & $\checkmark$ &  \\ 
		113 & Oil drilling & Oil \& Mining &  &  \\ 
		114 & Oil market & Oil \& Mining & $\checkmark$ &  \\ 
		115 & Optimism & Economic Growth &  &  \\ 
		116 & Options/VIX & Financial Markets & $\checkmark$ & $\checkmark$ \\ 
		117 & Pensions & Labor/income &  &  \\ 
		118 & People familiar & Negotiations &  &  \\ 
		119 & Pharma & Trans/Retail/Local Politics &  &  \\ 
		120 & Phone companies & Technology &  &  \\ 
		121 & Police/crime & Trans/Retail/Local Politics &  &  \\ 
		122 & Political contributions & Government &  &  \\ 
		123 & Positive sentiment & Social/Cultural &  &  \\ 
		124 & Private equity/hedge funds & Asset Managers \& I-Banks &  &  \\ 
		125 & Private/public sector & Government &  &  \\ 
		126 & Problems & Challenges &  &  \\ 
		127 & Product prices & Economic Growth &  &  \\ 
		128 & Profits & Corporate Earnings & $\checkmark$ & $\checkmark$ \\ 
		129 & Programs/initiatives & Science/Language &  &  \\ 
		130 & Publishing & Entertainment & $\checkmark$ &  \\ 
		131 & Rail/trucking/shipping & Trans/Retail/Local Politics &  &  \\ 
		132 & Reagan & Leaders &  &  \\ 
		133 & Real estate & Buyouts \& Bankruptcy &  &  \\ 
		134 & Recession & Economic Growth & $\checkmark$ & $\checkmark$ \\ 
		135 & Record high & Economic Growth &  &  \\ 
		136 & Regulation & Government &  &  \\ 
		137 & Rental properties & Trans/Retail/Local Politics & $\checkmark$ &  \\ 
		138 & Research & Science/Language &  &  \\ 
		139 & Restraint & Negotiations &  &  \\ 
		140 & Retail & Trans/Retail/Local Politics &  &  \\ 
		141 & Revenue growth & Industry &  &  \\ 
		142 & Revised estimate & Corporate Earnings &  &  \\ 
		143 & Russia & International Affairs & $\checkmark$ &  \\ 
		144 & Safety administrations & Government &  &  \\ 
		145 & Sales call & Social/Cultural &  &  \\ 
		146 & Savings \& loans & Banks & $\checkmark$ & $\checkmark$ \\ 
		147 & Scenario analysis & Science/Language &  &  \\ 
		148 & Schools & Social/Cultural &  &  \\ 
		149 & SEC & Buyouts \& Bankruptcy &  &  \\ 
		150 & Share payouts & Financial Markets &  &  \\ 
		151 & Short sales & Financial Markets & $\checkmark$ &  \\ 
		152 & Size & Science/Language &  &  \\ 
		153 & Small business & Industry &  &  \\ 
		154 & Small caps & Financial Markets & $\checkmark$ & $\checkmark$ \\ 
		155 & Small changes & Corporate Earnings &  &  \\ 
		156 & Small possibility & Challenges &  &  \\ 
		157 & Soft drinks & Industry &  &  \\ 
		158 & Software & Technology & $\checkmark$ &  \\ 
		159 & Southeast Asia & International Affairs &  &  \\ 
		160 & Space program & Science/Language &  &  \\ 
		161 & Spring/summer & Challenges &  &  \\ 
		162 & State politics & Government &  &  \\ 
		163 & Steel & Oil \& Mining &  &  \\ 
		164 & Subsidiaries & Industry &  &  \\ 
		165 & Systems & Science/Language &  &  \\ 
		166 & Takeovers & Buyouts \& Bankruptcy & $\checkmark$ &  \\ 
		167 & Taxes & Labor/income & $\checkmark$ &  \\ 
		168 & Terrorism & Terrorism/Mideast &  &  \\ 
		169 & Tobacco & Industry &  &  \\ 
		170 & Trade agreements & International Affairs & $\checkmark$ & $\checkmark$ \\ 
		171 & Trading activity & Financial Markets &  &  \\ 
		172 & Treasury bonds & Financial Markets &  &  \\ 
		173 & UK & International Affairs &  &  \\ 
		174 & Unions & Labor/income & $\checkmark$ & $\checkmark$ \\ 
		175 & US defense & Trans/Retail/Local Politics & $\checkmark$ &  \\ 
		176 & US Senate & Leaders &  &  \\ 
		177 & Utilities & Government & $\checkmark$ &  \\ 
		178 & Venture capital & Industry &  &  \\ 
		179 & Watchdogs & Government &  &  \\ 
		180 & Wide range & Science/Language &  &  \\ 
		\hline
		\caption{\small News series -- The column {\it News topic} are the news series topics, column {\it Meta topic} are meta topics/groups of news series. Columns {\it LASSO} and {\it sg-LASSO} reports whether the series was selected ($\checkmark$) or not by the respective initial estimator. \label{tab:selected}	}
	\end{longtable}
}

\end{document}